\documentclass[sigconf,nonacm,balance=false]{acmart}

\usepackage{tikz}
\usetikzlibrary{shapes, arrows.meta, positioning,fit, calc,patterns, patterns.meta}
\usepackage[english]{babel}
\usepackage[utf8]{inputenc}

\usepackage{tabularray}
\usepackage{supertabular}
\usepackage{booktabs}
\usepackage{graphicx}

\usepackage[shortlabels]{enumitem}
\usepackage{mathtools}
\usepackage{caption}
\usepackage{float}
\usepackage{subcaption}
\usepackage{comment}
\usepackage{xcolor}
\usepackage{xspace}
\usepackage{url}
\usepackage{algorithm}
\usepackage[noend]{algpseudocode}
\usepackage{xparse}
\algrenewcommand\alglinenumber[1]{\footnotesize #1}

\definecolor{algcomment}{RGB}{100,120,160}

\makeatletter
\newenvironment{breakablealgorithm}
  {%
   \par\vspace{\intextsep}%
   \begin{center}
     \refstepcounter{algorithm}%
     \hrule height.8pt depth0pt \kern2pt%
     \renewcommand{\caption}[2][\relax]{%
       {\raggedright\textbf{\ALG@name~\thealgorithm} ##2\par}%
       \ifx\relax##1\relax
         \addcontentsline{loa}{algorithm}{\protect\numberline{\thealgorithm}##2}%
       \else
         \addcontentsline{loa}{algorithm}{\protect\numberline{\thealgorithm}##1}%
       \fi
       \kern2pt\hrule\kern2pt
     }
  }{%
     \kern2pt\hrule\relax%
   \end{center}
   \par\vspace{\intextsep}%
  }
\makeatother

\newtheorem{theorem}{Theorem}
\newtheorem{lemma}{Lemma}
\newtheorem{corollary}{Corollary}
\newtheorem{definition}{Definition}

\newtheorem*{theorem*}{Theorem}

\newcommand{\E}{\mathbb{E}}

\newcommand{\protocolname}{\textsc{AetherWeave}\xspace}

\DeclareMathOperator*{\argmax}{arg\,max}

\newcommand{\sk}{\ensuremath{sk}\xspace}
\newcommand{\stakesk}{\ensuremath{StakeSk}\xspace}
\newcommand{\stakeID}{\ensuremath{StakeID}\xspace}
\newcommand{\netpk}{\ensuremath{NetPk}\xspace}
\newcommand{\netsk}{\ensuremath{NetSk}\xspace}
\newcommand{\sig}{\ensuremath{{\sigma}}\xspace}

\newcommand{\Rstake}{\ensuremath{\mathcal{R}_\mathsf{stk}}\xspace}
\newcommand{\Rshare}{\ensuremath{\mathcal{R}_\mathsf{shr}}\xspace}

\newcommand{\netrecord}{\ensuremath{\textsc{NetRec}}\xspace}
\newcommand{\peerrecord}{\ensuremath{\textsc{PeerRec}}\xspace}

\newcommand{\request}{\ensuremath{\textsc{Request}}\xspace}
\newcommand{\response}{\ensuremath{\textsc{Response}}\xspace}
\newcommand{\slashproof}{\ensuremath{\textsc{SlashProof}}\xspace}
\newcommand{\commitrecord}{\ensuremath{\textsc{CommitRec}}\xspace}

\newcommand{\ptable}{\ensuremath{T_{gsp}}\xspace}
\newcommand{\otable}{\ensuremath{T_{priv}}\xspace}

\newcommand{\stake}{\ensuremath{{stake}}\xspace}
\newcommand{\stakeroot}{\ensuremath{StakeCom}\xspace}
\newcommand{\timestamp}{\ensuremath{ts}\xspace}
\newcommand{\nonce}{\ensuremath{\nu}\xspace}
\newcommand{\nonceoverlay}{\ensuremath{\eta}\xspace}
\newcommand{\reqcommit}{\ensuremath{ReqCom}\xspace}
\newcommand{\slashshare}{\ensuremath{share}\xspace}
\newcommand{\shareproof}{\ensuremath{\pi_{share}}\xspace}
\newcommand{\stakeproof}{\ensuremath{\pi_{stake}}\xspace}
\newcommand{\ind}{\ensuremath{ind}\xspace}
\newcommand{\indproof}{\ensuremath{\pi_{\ind}}\xspace}
\newcommand{\roundnum}{\ensuremath{round}\xspace}
\newcommand{\denylist}{\ensuremath{DenyLST}\xspace}
\newcommand{\addr}{\ensuremath{ADDR}\xspace}

\newcommand{\stakefreeze}{\ensuremath{\Delta_{freeze}}\xspace}
\newcommand{\stakewithdraw}{\ensuremath{\Delta_{withdraw}}\xspace}
\newcommand{\recordexp}{\ensuremath{\Delta_{exp}}\xspace}

\algnewcommand{\Sample}[2]{\State #1 \sim #2}
\newcommand{\require}[0]{{\bf require} }

\newcommand{\keygen}{\ensuremath{KeyGen}\xspace}
\newcommand{\Sign}{\ensuremath{Sign}\xspace}
\newcommand{\CheckSig}{\ensuremath{CheckSig}\xspace}
\newcommand{\hashid}{\ensuremath{H_{id}}\xspace}
\newcommand{\hashstake}{\ensuremath{H_{stake}}\xspace}
\newcommand{\hashshare}{\ensuremath{H_{share}}\xspace}
\newcommand{\veccommit}{\ensuremath{VecCommit}\xspace}
\newcommand{\vecopen}{\ensuremath{VecOpen}\xspace}
\newcommand{\vecverify}{\ensuremath{VecVerify}\xspace}
\NewDocumentCommand{\PRNG}{o o}{%
  \IfNoValueTF{#1}{%
    \ensuremath{\mathsf{PRNG}}\xspace%
  }{%
    \ensuremath{\mathsf{PRNG}_{#1}(#2)}%
  }%
}

\newcommand{\negl}{\ensuremath{\mathsf{negl}}\xspace}
\newcommand{\stgame}{\ensuremath{\mathsf{Game}^{\mathsf{stake\text{-}anon}}_{\mathcal{A}}(\lambda)}\xspace}
\newcommand{\cgame}{\ensuremath{\mathsf{Game}^{\mathsf{conn\text{-}priv}}_{\mathcal{A}}(\lambda)}\xspace}
\newcommand{\stadv}{\ensuremath{\mathsf{Adv}^{\mathsf{stake\text{-}anon}}_{\mathcal{A}}(\lambda)}\xspace}
\newcommand{\cadv}{\ensuremath{\mathsf{Adv}^{\mathsf{conn\text{-}priv}}_{\mathcal{A}}(\lambda)}\xspace}
\newcommand{\Adv}{\ensuremath{\mathsf{Adv}}\xspace}
\newcommand{\Advset}{\ensuremath{\mathcal{A}}\xspace}
\newcommand{\Honset}{\ensuremath{\mathcal{H}}\xspace}
\newcommand{\pconn}{\ensuremath{p_{c}}\xspace}
\newcommand{\quality}{\ensuremath{q}\xspace}
\newcommand{\visibility}{\ensuremath{v}\xspace}

\newcommand{\monthyear}{
    \ifcase \month \or January\or February\or March\or %
    April\or May \or June\or July\or August\or September\or October\or
    November\or December
    \fi\, \number \year}

\usepackage[breakable,most]{tcolorbox}
\newtcolorbox{enumbox}[2][]{
    breakable,
    colback=white,
    colframe=black,
    boxrule=0.8pt,
    colbacktitle=white,
    coltitle=black,
    enhanced,
    left=2mm,
    right=2mm,
    top=2mm,
    bottom=2mm,
    fonttitle=\bfseries,
    title={\strut #2},
    #1
}

\newlength{\awplotwidth}
\newlength{\awplotiiwidth}
\newlength{\awplotheight}
\usepackage{pgfplots}
\usepgfplotslibrary{fillbetween}
\usepgfplotslibrary{colorbrewer}
\pgfplotsset{compat=1.18}
\pgfplotscreateplotcyclelist{bright}{
    {plotA, thick},
    {plotB, thick},
    {plotC, thick},
    {plotD, thick},
    {plotE, thick},
    {plotF, thick},
}
\usepackage[capitalise,nameinlink]{cleveref}

\crefname{theorem}{Theorem}{Theorems}
\crefname{lemma}{Lemma}{Lemmas}
\crefname{corollary}{Corollary}{Corollaries}
\crefname{definition}{Definition}{Definitions}
\crefname{proposition}{Proposition}{Propositions}
\crefname{algorithm}{Alg.}{Algs.}
\Crefname{algorithm}{Algorithm}{Algorithms}
\newcommand{\algline}[2]{Alg.~\ref{#1}:\ref{#2}}
\newcommand{\alglines}[3]{Alg.~\ref{#1}:\ref{#2}--\ref{#3}}
\crefname{appendix}{Appendix}{Appendices}
\Crefname{appendix}{Appendix}{Appendices}
\pgfplotsset{
    awplot/.style={
        grid=major,
        major grid style={solid,draw=gray!50},
        tick label style={font=\scriptsize, /pgf/number format/fixed},
        label style={font=\footnotesize},
        legend style={
            draw=none,
            fill=none,
            font=\scriptsize,
            cells={anchor=west}
        },
        axis line style={draw=black, line width=0.5pt},
        tick align=outside,
        scaled y ticks=false,
        cycle list name=bright,
    }
}
\definecolor{plotA}{HTML}{4477AA}  %
\definecolor{plotB}{HTML}{EE7733}  %
\definecolor{plotC}{HTML}{228833}  %
\definecolor{plotD}{HTML}{CC3311}  %
\definecolor{plotE}{HTML}{AA3377}  %
\definecolor{plotF}{HTML}{CCBB44}  %

\definecolor{SciBlue}{HTML}{332288}
\definecolor{SciCyan}{HTML}{88CCEE}
\definecolor{SciTeal}{HTML}{44AA99}
\definecolor{SciGreen}{HTML}{117733}
\definecolor{SciOlive}{HTML}{999933}
\definecolor{SciSand}{HTML}{DDCC77}
\definecolor{SciRose}{HTML}{CC6677}
\definecolor{SciWine}{HTML}{882255}
\definecolor{SciPurple}{HTML}{AA4499}
\definecolor{SciGray}{HTML}{DDDDDD}

\newcommand{\nf}{\mathsf{NoFlag}} 
\newcommand{\dc}{\mathsf{Discon}} 
\newcommand{\hv}{\mathsf{Heavy}} \newcommand{\HV}{\textsc{Heavy}}
\newcommand{\Bad}{\textsc{Bad}} \newcommand{\flag}{\mathsf{flag}}

\usepackage{adjustbox}

\begin{document}

\title{AetherWeave: Sybil-Resistant Robust Peer Discovery with Stake}

\setcopyright{acmlicensed}

\author{Kaya Alpturer}
\affiliation{%
  \institution{Princeton University}
  \city{}
  \country{}
}
\email{kalpturer@princeton.edu}

\author{Constantine Doumanidis}
\affiliation{%
  \institution{Princeton University}
  \city{}
  \country{}
}
\email{doumanidis@princeton.edu}

\author{Aviv Zohar}
\affiliation{%
  \institution{The Hebrew University}
  \city{}
  \country{}
}
\email{avivz@cs.huji.ac.il}

\begin{abstract} Peer-discovery protocols within P2P networks are often vulnerable: because
creating network identities is essentially free, adversaries can eclipse honest
nodes or partition the overlay. This threat is especially acute for blockchains,
whose security depends on resilient peer connectivity. We present
{\protocolname}, a stake-backed peer-discovery protocol that ties network
participation to deposited stake, raising the cost of large-scale attacks. 
We prove that, with high probability, either the
honest overlay remains connected or a $(1{-}\delta)$-fraction of nodes in every
smaller component raise an attack-detection flag---even against a very powerful 
adversary. To our knowledge, {\protocolname} is the first peer-discovery
protocol to simultaneously provide Sybil resistance and privacy: nodes prove
they hold valid stake without revealing which deposit they own, and gossiping
does not expose peer-table contents. A cryptographic commitment scheme
rate-limits discovery requests per round; exceeding the limit yields a publicly
verifiable misbehavior proof that triggers on-chain slashing. Beyond deposit and
slashing, the protocol requires no on-chain interaction, with per-node
communication scaling as $O(s\sqrt{n})$. We validate our design through a
mean-field analysis with closed-form convergence bounds, extensive adversarial
simulations, and an end-to-end prototype built by forking Prysm, a leading
Ethereum consensus client.
 \end{abstract}

\maketitle
\bibliographystyle{ACM-Reference-Format}

\section{Introduction}
Peer-to-Peer (P2P) networks form the backbone of many decentralized systems, yet their security suffers from a lack of a solid identity system.
Because creating network identities is essentially free, an attacker who
controls many IP addresses can flood the peer-discovery process with adversarial
entries, eclipsing victim nodes or partitioning the network into disconnected
components~\cite{heilman2015eclipsebitcoin,marcus2018eclipseethereum,douceur2002sybil}.
Existing defenses (IP subnet bans, rate-limiting, reputation
tracking~\cite{yu2006sybilguard,yu2008sybillimit}) either inconvenience a
well-resourced attacker or require trust assumptions that conflict with the
open-participation model of modern P2P systems.

A natural idea is to tie network participation to economic stake. In
proof-of-stake blockchains, validators already lock substantial deposits, so
repurposing this existing collateral for peer discovery imposes no additional
cost on honest participants; more broadly, any P2P system where participants can
post collateral benefits from this approach.
Coretti et al.~\cite{coretti2022scuttlebutt} take a significant step and establish a theoretical foundation
for this direction, showing that a stake-weighted random graph can sustain
Byzantine-resilient gossip. However, their analysis assumes a known
mapping from stake deposits to addresses and thus abstracts away the \emph{peer-discovery problem} itself: how
nodes discover the address of one another, handle address churn, and are able to bootstrap the overlay
in the first place. Privacy---the ability to participate without linking one's network
presence to a specific on-chain deposit---is also left unaddressed.

In this paper we present \protocolname, a \textbf{stake-backed peer-discovery
protocol} that bridges this gap. \protocolname leverages stake---locked
funds---to constrain the action space of would-be attackers. Requiring stake
deposits introduces a large cost to gaining over-representation in the system,
and thereby forms the basis of honest and robust network formation. Stake acts
both as a surrogate for identity and as a punishment mechanism: malicious
entities can have their stake slashed as punishment for misbehavior, ejecting
them from the system at least until they invest additional funds.

A key challenge in tying peer discovery to economic stake is preserving
privacy: naively, each node's network address would be linkable to its on-chain
deposit, creating a surveillance risk that conflicts with the goals of open
participation. This tension is deepened by the need for slashing---punishing
misbehavior requires eventually identifying the offender's deposit, yet honest
nodes must remain anonymous. \protocolname resolves this by keeping stake
identity and network identity unlinkable for well-behaved participants: a node
proves it \emph{has} valid stake without revealing \emph{which} deposit it
owns, and gossiping about peers does not reveal a node's overlay connections.
Anonymity is broken only for attackers: misbehavior produces a cryptographic
proof that ties the offense to a specific deposit, enabling on-chain slashing
without requiring ongoing blockchain interaction. In particular, nodes that
request too many peer addresses---even when spreading requests across multiple
peers---are promptly detected.

While we rely on a blockchain to provide staking
capability, \protocolname is not restricted to the blockchain
domain: any P2P system that can support deposit-based participation can adopt
our protocols. Nodes without stake can still join on a best-effort basis, while
staked nodes form the secure core of the network. Application domains include
\emph{blockchain peer discovery}, where Bitcoin and Ethereum currently rely on
ad-hoc defenses against eclipse and Sybil
attacks~\cite{heilman2015eclipsebitcoin,marcus2018eclipseethereum} (our
prototype demonstrates augmenting Ethereum's peer discovery with stake-backed
guarantees); \emph{decentralized content delivery}, where
IPFS~\cite{benet2014ipfs} is vulnerable to Sybil-based content
eclipsing~\cite{cholez2024ipfs} and already integrates with deposit-based
storage markets (Filecoin~\cite{protocollabs2017filecoin}); and
\emph{anonymous communication}, where the Tor
network~\cite{dingledine2004tor} has suffered real-world Sybil
attacks~\cite{winter2016sybiltor} and \protocolname's stake-anonymity property
would raise the cost of such attacks without compromising anonymity.

An important limitation must be acknowledged: no protocol can absolutely
prevent partitioning. A node that bootstraps from a dishonest peer has no
external reference point---every record it receives is mediated by an
adversary that can suppress honest entries at will. More broadly, a
sufficiently powerful network adversary can selectively drop messages between
honest nodes, and no local algorithm can distinguish a genuine absence of
peers from adversarial suppression. \protocolname addresses this with a
two-tiered guarantee: against weaker adversaries, partitioning is
\emph{prevented}---the honest overlay remains connected with high
probability; against stronger adversaries where prevention may fail, every
successful eclipse or partition causes affected nodes to raise an
attack-detection flag, enabling corrective action (e.g., switching
bootstrap contacts or alerting the operator).

\subsection{Main contributions}
\protocolname is a round-based protocol in which nodes discover peers through stake-weighted gossip. We make the following contributions:

\begin{description}[nosep, leftmargin=0pt, style=unboxed]
	\item[Partition detection.] We prove that, with high probability,
	      either the honest overlay is fully connected or every smaller
	      component has at least a $(1{-}\delta)$-fraction of its nodes raising
	      an attack-detection flag (\cref{thm:overlay}). This guarantee holds
	      even against an omniscient adversary that knows every honest node's
	      private seeds and can manipulate message delivery in the gossip phase. A node can
	      bootstrap from any single honest peer and reliably detect that it is not eclipsed
	      (\cref{sec:overlay-resistance,sec:detecting-partitions}).
	\item[Spam prevention via slashing.] We introduce a cryptographic
	      commitment scheme that rate-limits peer-discovery requests per round.
	      Nodes that exceed the limit produce a publicly verifiable proof of
	      misbehavior, enabling on-chain slashing of their stake. Detection
	      occurs within $O(1)$ rounds with overwhelming probability
	      (\cref{sec:too-many-reqs,sec:security}), and becomes increasingly more likely if nodes exceed their budget by larger amounts.
	\item[Privacy.] To our knowledge, \protocolname is the first
	      peer-dis\-covery protocol to simultaneously achieve Sybil
	      resistance and privacy. Network identity is decoupled from
	      stake ownership: the protocol reveals that a node \emph{has}
	      stake, but not \emph{which} deposit it owns. Gossiping does
	      not expose a node's peer table to other participants
	      (\cref{sec:security}).
	\item[Efficiency.] The protocol requires no on-chain interaction
	      except for deposit and slashing. Per-node communication scales as
	      $O(s\sqrt{n})$ per round, where $n$ is the network size and $s$ is
	      a small constant (\cref{sec:mean-field,sec:eval,sec:prototype}).
\end{description}

We validate these properties through a mean-field analysis that yields
closed-form convergence bounds for table quality and node visibility
(\cref{sec:mean-field}), extensive simulations (\cref{sec:eval}), and experiments based on an
end-to-end prototype built by forking Prysm, a leading Ethereum consensus
client (\cref{sec:prototype}). The prototype includes an on-chain staking
contract, zero-knowledge proof circuits, and a modified libp2p stack. Our
experiments confirm that all core guarantees hold even under scaled-down
parameters.

We begin with a brief overview in \cref{sec:overview} before formalizing the model and protocol in \cref{sec:model,sec:protocol}.

\section{Brief Overview of \protocolname} \label{sec:overview}

The \protocolname protocol operates in a network of $n$ staked nodes. Each node
identifies itself by a public key \netpk bound to its on-chain stake, and
maintains two peer tables, each of size $s\sqrt{n}$ (for a constant $s >
1$). The first, \ptable (``gossip''), is used for peer-record gossip; the
second, \otable (``private''), supplies the entries from which overlay
connections are later drawn. The two
tables are populated using different private seeds, so gossip responses (served
from \ptable) reveal nothing about a node's overlay neighbors (drawn from
\otable).

\paragraph{Refreshing Peer Tables}
Nodes periodically refresh their peer tables to account for churn.
Each round, a node selects a pseudorandom \emph{slice} of the identifier space
by sampling fresh seeds, then sends requests to every peer in \ptable. Each
responder filters its own table and returns only the records matching the
requester's slice, keeping responses small ($s^2$ records in expectation).
Intuitively, $s\sqrt{n}$ peers each holding $s\sqrt{n}$ uniformly sampled
records collectively cover a $1 - e^{-s^2}$ fraction of the network; as $s$
grows the coverage nears perfection, but per-node communication increases, so
$s$ trades off reliability against bandwidth.

Each returned peer record includes the \netpk of the node, a \emph{recent}
cryptographic proof of stake (via a vector commitment maintained by a smart
contract), and a signed network address \addr.

We assume that the adversary is economically bounded and controls an $\alpha$
fraction of the total stake. Analytical models and simulations show that for
$s^2(1-\alpha)>1$, peer records propagate effectively and honest nodes maintain
diverse peer tables, resulting in a robust, well-connected network.

\paragraph{Detecting Eclipse Attacks}
Because record selection is deterministically defined by the requester's seeds,
adversaries cannot \emph{inflate} their representation---irrelevant records are
filtered out. The only remaining avenue is \emph{suppression} of honest
records, which produces a statistical signal: if the number of unique records a
node collects falls significantly below $s\sqrt{n}$, the node raises an
attack-detection flag (\cref{sec:eclipse-detect}).

\paragraph{Overlay Construction}
Once \otable has been populated through gossip, each node independently includes
each peer as an overlay neighbor with probability $\pconn = c \cdot \log n /
(s\sqrt{n})$ for a suitable constant $c > 1$, yielding $\Theta(\log n)$
connections per node---sufficient for the honest overlay to be connected with
high probability (\cref{thm:overlay}).

\paragraph{Privacy}
The protocol preserves privacy through two mechanisms.
First, \emph{unlinkable identifiers}: each \netpk is deterministically
derived from the same secret controlling the on-chain stake but remains
unlinkable to it; stake proofs use Zero Knowledge Proofs (ZKPs) and hence do not break this
unlinkability. Second, \emph{private peer requests}: the seeds $\nonce,
\nonceoverlay$ remain secret, and nodes use private information retrieval
(via trusted execution environments) to query peers’ records without
revealing the retrieved values.

\paragraph{Punishing Excessive Requests}
Responding to heartbeat requests involves computational work, making excessive
querying a potential DoS vector. \protocolname enforces a global per-round limit
of $s\sqrt{n}$ requests per node. Each request carries a commitment to the full
batch of intended recipients for that round, together with a cryptographic share
of the requester's slashing secret. A node that issues requests under two
distinct batch commitments in the same round reveals enough information to
reconstruct its slashing secret, enabling anyone to burn its stake on-chain.
Thus, nodes remain unlinkable to their stake while honest, but lose this
protection if they exceed their quota. This mechanism enforces a global request
quota without a trusted aggregator---detection requires observing just two
conflicting commitments from the same node in a single round.

\section{Model and Notation} \label{sec:model} We consider a P2P system with $n$
nodes, indexed by $[n] = \{1, \ldots, n\}$. The system comprises two components:
the network $\mathcal{N}$ and the blockchain $\mathcal{B}$.

\paragraph{Network model}
Nodes communicate over $\mathcal{N}$, which is capable of relaying messages
between any pair of peers. No guarantees are made regarding message delivery
(messages may be delayed or dropped). Furthermore, communication at the network
layer is not assumed to be private and adversaries may learn the contents of
messages. Hence, our protocol establishes authenticated and private
communication over channels. Each node $i$ is assigned a network address
$\addr_i$, which may change over time due to churn. An address $\addr_i$ may
represent a public IP address or an endpoint in an anonymizing network such as
Tor. The adversary may obtain an unbounded number of such addresses at will.
Since {\addr}s are only semi-persistent, the system employs continuous gossip to
maintain a table of connectable peers.

\paragraph{Blockchain model}
The blockchain $\mathcal{B}$ tracks stake ownership and enforces slashing
conditions via a smart contract. Each node possesses a stake identifier
$\stakeID$ that is associated with one unit of stake on $\mathcal{B}$. The
adversary is \emph{economically bounded}, controlling a fraction $\alpha$ of the
total staked funds. Protocol participants can periodically query a smart
contract on $\mathcal{B}$ to obtain a commitment $\stakeroot$ to the current
stake allocation. To avoid the need to constantly refresh $\stakeroot$, or to
remember many versions, the stake allocation in $\mathcal{B}$ is allowed to
change only once per epoch (for example, once a week). To allow slashing
penalties time to occur, withdrawals are delayed somewhat after the end of the
epoch in case misbehavior occurred just before it ended.

\paragraph{The Peer Discovery Problem}
The goal of the protocol is for each node to maintain a peer table $\otable^t$
listing the network addresses of other nodes at time $t$. Overlay connections
are established by randomly sampling entries from $\otable^t$.

\begin{definition}[eclipsed set] \label{def:eclipsed}
A set of honest nodes is \textbf{eclipsed} if every peer they select is either
adversarial or within the same set, effectively isolating them from the
remaining honest nodes. (A single node whose selected peers are all adversarial
is the special case.)
\end{definition}

The peer-discovery problem centers around the challenge of disseminating
information about known peers to others, especially to new nodes that are just
joining the system. Since available network addresses are subject to churn,
information stored in peer tables must be constantly refreshed. The objective is
to maintain peer tables in which the proportion of stale and adversarial entries
remains small, thereby reducing the risk of eclipse attacks.

\paragraph{Cryptographic primitives.}
The protocol relies on standard cryptographic building blocks: an
EUF-CMA signature scheme
$(\keygen,\allowbreak\Sign,\allowbreak\CheckSig)$,
non-interactive zero-knowledge proofs, collision-resistant hash
functions $\hashstake$, $\hashid$, $\hashshare$ (modeled as random
oracles), Merkle-tree vector commitments
$(\veccommit,\allowbreak\vecopen,\allowbreak\vecverify)$, and a
pseudorandom number generator $\PRNG$. Full definitions are given in
\cref{appendix:primitives}; notation is summarized in
\cref{appendix:symbols}.

\subsection{Attacker Model}
We consider a computationally bounded (PPT) adversary that
controls an $\alpha$ fraction of the total stake ($\alpha<1$) and may
coordinate all accounts it controls. We write $\Honset$ and $\Advset$ for the
honest and adversarial node sets, respectively. During peer discovery, $\mathcal{A}$ can
eavesdrop on, delay, reorder, drop, and replay all messages, including
selectively denying connectivity between honest nodes. The network therefore
provides no confidentiality, integrity, or delivery guarantees beyond what is
achieved cryptographically. Once an honest node has obtained another honest
node's authenticated address through discovery, we assume the two can establish
a direct connection; the adversary cannot permanently block connections between
honest nodes that already know each other's addresses (see
\cref{sec:discussion} for a discussion of this assumption).

\section{Protocol Description} \label{sec:protocol}
We now describe each building block of the protocol in detail, beginning
with the keys and identifiers used by each node.

\subsection{Keys and Identifiers}
Each node holds a master secret key \sk from which two identifiers are
derived. First, a network identifier \netpk with its associated private key
$\netsk$, used to sign network messages. Second, a staking identifier
$\stakeID$ whose pre-image $\stakesk$ is revealed when the node's stake is
slashed.

Formally, let $\sk \xleftarrow{\$} \{0,1\}^{\lambda}$ denote a uniformly sampled
master secret, and derive
$(\netsk, \netpk) \gets \keygen(sk)$,
$\stakesk\gets \hashstake(sk)$, and
$\stakeID\gets \hashid(\stakesk)$
(See \cref{fig:derive}).

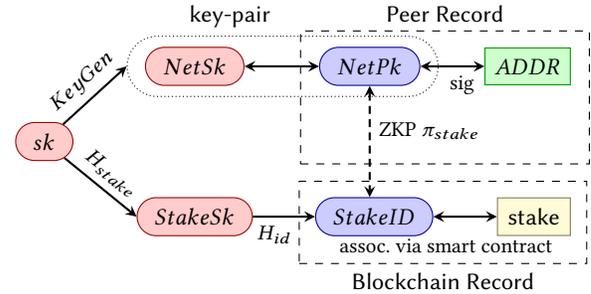
\begin{figure}[ht]
    \centering
    \begin{tikzpicture}[
      box/.style={draw, rectangle, inner xsep=4pt, inner ysep=4pt, align=center},
      rbox/.style={draw, rounded rectangle, inner xsep=6pt, inner ysep=4pt},
      every node/.style={font=\sffamily}
    ]

    \node[rbox, anchor=center,fill=red!20, draw = red!60!black] (sk) at (-2,0) {\sk};
    \node[rbox, anchor=center,fill=red!20, draw = red!60!black] (netsk) at (0,1) {\netsk};
    \node[rbox, anchor=center,fill=red!20, draw = red!60!black] (stakesk) at (0,-1) {\stakesk};
    \node[rbox, anchor=west, fill=blue!20, draw = blue!60!black] (stakeid) at (1.6,-1) {\stakeID};
    \node[rbox, anchor=east, fill=blue!20, draw = blue!60!black] (netpk) at (3,1) {\netpk};
    \node[box, anchor=east, fill=green!20, draw = green!60!black] (ip) at (5,1) {\addr};
    \node[box, anchor=east, fill=yellow!20, draw = yellow!40!black] (stake) at (5,-1) {stake};

    \node[draw, densely dotted, rbox, fit=(netsk)(netpk), inner sep=4pt, label={[xshift=-20pt]above:key-pair}](keys) {};

    \node (extra) at ($(stake)!0.5!(stakeid)+(0,-0.3)$) {}; %
    \node[draw, dashed, box, fit=(stakeid)(stake)(extra), inner sep=6pt, label=below:Blockchain Record](blockchain) {};

    \draw[-{stealth}, thick] (sk) -- (stakesk.west) node[midway,sloped, above,font=\small] {\hashstake};
    \draw[-{stealth}, thick] (stakesk) -- (stakeid) node[pos=0.35,sloped, below,font=\small ] {\hashid};

    \draw[-{stealth}, thick] (sk) -- (keys.west)  node[midway,sloped, above,font=\small ] {\keygen};
    \draw[{stealth}-{stealth}, thick] (netpk) -- (ip) node[pos=0.65,sloped, below,font=\small ] {sig};
    \draw[{stealth}-{stealth}, thick] (netsk) -- (netpk) node[midway,sloped, below,font=\small ] {};
    \draw[{stealth}-{stealth}, thick] (stakeid) -- (stake) node[pos=0.2,sloped, below=5pt,font=\small ] {assoc. via smart contract};
    \draw[{stealth}-{stealth}, thick, densely dashed]
  (netpk.south) -- (netpk.south |- stakeid.north)
  node[pos=0.4, right, font=\small](zkp){ZKP $\pi_\stake$};

    \node[draw, dashed, box, fit=(netpk)(ip)(zkp), inner xsep=7pt, inner ysep=6pt, label=above:Peer Record](network) {};

\end{tikzpicture}

\caption{Key derivation: \netpk and \stakeID are unlinkable identifiers
derived from a master secret \sk. A zero-knowledge proof $\stakeproof$
attests that \netpk is associated with \emph{some} stake allocation
without revealing which one.}
\label{fig:derive}
\end{figure}

\subsection{Stake and the Smart Contract}
\label{sec:stake-contract}
An on-chain smart contract allows nodes to associate each stake
allocation with the identifier \stakeID. Stake can be burned (slashed) by
presenting the pre-image \stakesk;
\cref{sec:too-many-reqs} shows how \stakesk is revealed when nodes misbehave.
For simplicity, we assume each stake deposit is a fixed amount.
(It is straightforward to scale protocol parameters proportionally
to stake and thus accommodate heterogeneous allocations.)

The contract publishes a vector commitment over all active (deposited,
unslashed) \stakeID{s} and provides membership proofs for individual
identifiers (pseudocode in \cref{alg:smartcontract},
\cref{appendix:pseudocode}).
Crucially, the contract performs no zero-knowledge proof verification,
as all proof checks happen off-chain among protocol participants.
The remaining on-chain costs come from Merkle-tree updates when stakes
are deposited, withdrawn, or slashed, which remain cheap
(see \cref{sec:prototype} for gas measurements).

Blockchain time is partitioned into \emph{epochs}; the active commitment
changes only at epoch boundaries, avoiding frequent on-chain reads.
Each new commitment incorporates newly added deposits and removes withdrawn
stake. Changes are frozen \stakefreeze time units before the epoch boundary to
give nodes time to query the next commitment, and withdrawals are held for
\stakewithdraw time units after the epoch ends to allow reporting of violations committed near the epoch boundary and prevent
attackers from withdrawing before their stake can be slashed.
\Cref{fig:stake_timing} depicts this timeline.

\begin{figure}[!ht]
    \centering
        \begin{tikzpicture}[x=1cm,y=1cm]
  \def\axisLen{8}        %
  \def\barHalf{0.2}      %
  \def\tickHalf{0.3}      %

  \def\openDelta{0.5}     %

  \tikzset{
    tl/axis/.style      = {line width=0.8pt,-{latex}},
    tl/div/.style       = {draw=black,line width=0.8pt},
    tl/weakdiv/.style       = {draw=gray,line width=0.5pt},
    tl/label/.style     = {font=\footnotesize,
                           align=center, execute at begin node=\strut},
    tl/ticklabel/.style = {font=\footnotesize},
    tl/regionfill/.style= {fill=blue!10, draw=blue!35},
    tl/eventarrow/.style= {black, thin, -{stealth}}
  }

  \newcommand{\Tick}[2]{%
    \draw[tl/div] (#1,-\tickHalf) -- (#1,\tickHalf);
    \node[tl/ticklabel,anchor=north] at (#1,-0.5) {#2};
  }

  \newcommand{\WeakTick}[1]{%
    \draw[tl/weakdiv] (#1,-\barHalf) -- (#1,\barHalf);
  }

  \newcommand{\EventX}[4]{%
    \draw[tl/eventarrow] (#1+#2,#3) node[tl/label,anchor=south]{#4} -- (#1,\tickHalf);
  }

\newcommand{\ShadeRange}[5][tl/regionfill]{%
  \path[#1] (#2,-\barHalf) rectangle (#3,\barHalf);
  \ifstrequal{#4}{above}{%
    \node[font=\footnotesize,anchor=south] at ({(#2+#3)/2},\barHalf) {#5};
  }{%
    \ifstrequal{#4}{below}{%
      \node[font=\footnotesize,anchor=north] at ({(#2+#3)/2},-\barHalf) {#5};
    }{%
      \node[font=\footnotesize,anchor=#4] at ({(#2+#3)/2},0) {#5};
    }%
  }%
}

  \newcommand{\ShadeOpen}[5][tl/regionfill]{%
    \path[#1] (#2,-\barHalf) rectangle (#3-\openDelta,\barHalf);
    \path[#1] (#3-\openDelta,-\barHalf) -- (#3-\openDelta,\barHalf) -- (#3,0) -- cycle;
    \ifstrequal{#4}{above}{%
    \node[font=\footnotesize,anchor=south] at ({(#2+#3)/2},\barHalf) {#5};
  }{%
    \ifstrequal{#4}{below}{%
      \node[font=\footnotesize,anchor=north] at ({(#2+#3)/2},-\barHalf) {#5};
    }{%
      \node[font=\footnotesize,anchor=#4] at ({(#2+#3)/2},0) {#5};
    }%
  }
}

\newcommand{\ShadeOpenMinusInfRange}[5][tl/regionfill]{%
  \path[#1, even odd rule]
    (#2,-\barHalf) rectangle (#3,\barHalf)
    (#2,-\barHalf) -- (#2,\barHalf) -- (#2+\openDelta,0) -- cycle;
      \ifstrequal{#4}{above}{%
    \node[font=\footnotesize,anchor=south] at ({(#2+#3)/2},\barHalf) {#5};
  }{%
    \ifstrequal{#4}{below}{%
      \node[font=\footnotesize,anchor=north] at ({(#2+#3)/2},-\barHalf) {#5};
    }{%
      \node[font=\footnotesize,anchor=#4] at ({(#2+#3)/2},0) {#5};
    }%
  }
}

  \ShadeOpenMinusInfRange[fill=blue!30] {0}{2}{below}{\begin{tabular}{c} Deposit \&\\ withdrawal requests\\  are accepted \end{tabular}}
  \ShadeRange[fill=orange!40]{2}{4}{above}{\begin{tabular}{c}All stake \\ is frozen \end{tabular}}
  \ShadeRange[fill=magenta!50]{4}{6}{above}{\begin{tabular}{c} Withdrawal \\ delay \end{tabular}}

  \ShadeOpen[fill=blue!30]{6}{8}{below}{\begin{tabular}{c} withdrawals \\ released\end{tabular}}

  \Tick{4}{\begin{tabular}{c} Epoch {i} starts: \\ New stake activated, \\ Withdrawn stake deactivated\end{tabular}}
  \WeakTick{2}
  \WeakTick{6}

  \EventX{2}{0}{1.2}{$t_i-\stakefreeze$}
  \EventX{6}{0}{1.2}{$t_i+\stakewithdraw$}

  \draw[tl/axis] (0.5,0) -- (\axisLen,0);
  \node [font=\small, anchor=north] at (\axisLen, 0){t};
\end{tikzpicture}
    \caption{Stake deposit and withdrawal timeline. Deposits and withdrawals
    are frozen before the epoch boundary so nodes can obtain the next
    commitment. Withdrawals are further delayed to allow slashing for
    late-epoch misbehavior.}
    \label{fig:stake_timing}
\end{figure}

\subsection{Network Records, Peer Records, and \netpk}
Each node is identified on the network by its public key $\netpk$ (with
corresponding private key $\netsk$). A node generates and signs a \netrecord
that attests to both its stake (via a ZKP) and its network address, binding
them to \netpk:
$$ \netrecord = \langle \netpk, \stakeroot, \stakeproof, \langle \addr,
\timestamp\rangle_\sig \rangle$$
Here \stakeroot is the current epoch's vector commitment
(\cref{sec:stake-contract}), \stakeproof is a ZKP proving membership in this
commitment, and $\timestamp$ marks when the record was signed.

Upon receiving a \netrecord, a node verifies the signature, the ZKP
$\stakeproof$, and that $\stakeroot$ is a recent epoch commitment
(\alglines{alg:receive_new_record}{line:rec-verify-stake}{line:rec-stakeroot}).
Newer records supersede older ones, and records older than \recordexp are
rejected on arrival or evicted from storage
(\algline{alg:receive_new_record}{line:rec-merge};
\algline{alg:handle_response}{line:handle-evict-expired}).
To bound table sizes, each table is capped at $(1+\epsilon)\cdot s\sqrt{n}$
entries; when this cap is exceeded, the record with the highest $\PRNG$ score
(i.e., the one least likely to belong in the node's slice) is evicted first.

A \peerrecord augments the \netrecord with per-round commitment records
used to detect excessive requests (see \cref{sec:too-many-reqs}):
$$\peerrecord = \langle \netrecord, [(\commitrecord,\roundnum)_i]_{i=1}^m
\rangle $$

Each node maintains two tables: \ptable and \otable that are collections of
\peerrecord{s}.

\subsection{The Heartbeat: Gossiping about Peer Records}
Peer-record gossip proceeds in rounds (with many rounds per epoch).
During each round, a node sends a heartbeat request to every peer in its
\ptable and receives records that populate both \ptable and \otable.

To prevent attackers from inflating their representation in responses,
record selection is determined by the \emph{requester}'s seeds
$(\nonce,\nonceoverlay)$: a record with identifier $\netpk_j$ is included
iff
\begin{equation*}
  \PRNG[\nonce][\netpk_j] < \tfrac{s}{\sqrt{n}}
  \;\;\text{or}\;\;
  \PRNG[\nonceoverlay][\netpk_j] < \tfrac{s}{\sqrt{n}},
\end{equation*}
where $\PRNG$ maps each (seed, identifier) pair uniformly to $[0,1)$.
This yields a pseudorandom slice of approximately $s\sqrt{n}$ records per
seed. To reduce communication overhead, the requester includes these seeds
in its request so the \emph{responder} evaluates the $\PRNG$ locally and
returns only the selected records.

\label{rem:tee-privacy}
In practice, the response computation
(\alglines{alg:send_response}{line:filter-start}{line:filter-end})
runs inside a TEE on the responder side, so the requester's seeds remain hidden
and the response is padded to a fixed length.
Without TEEs, seed privacy can still be achieved by having the responder
transmit its full table so the requester filters locally, at the cost of
higher communication (\cref{appendix:pseudocode}).

\smallskip\noindent\textbf{Sending a heartbeat request.}
The heartbeat proceeds in four steps
(full pseudocode in \cref{alg:heartbeat,alg:send_response},
\cref{appendix:pseudocode}):
\begin{enumerate}[leftmargin=*]
  \item \emph{Build identity.} The node creates a fresh \netrecord
  containing its signed network address and a ZKP $\stakeproof$ (relation
  \Rstake) that certifies $\netpk$ is backed by some stake in the current
  epoch commitment $\stakeroot$, without revealing which stake
  (\alglines{alg:heartbeat}{line:hb-stakeproof}{line:hb-netrecord}).

  \item \emph{Sample seeds.} The node draws fresh random nonces
  $(\nonce,\nonceoverlay)$---one for the gossip slice, one for the overlay
  slice---and signs them together with the current round number
  (\algline{alg:heartbeat}{line:hb-seeds}).

  \item \emph{Commit to recipients.} The node computes a vector commitment
  $\reqcommit$ over the $s\sqrt{n}$ peers it intends to contact this round
  (\algline{alg:heartbeat}{line:hb-commit}). It also derives a slashing share
  $\slashshare$ that binds $\reqcommit$ to its secret key
  (\cref{sec:too-many-reqs}), along with a ZKP of well-formedness. These are
  bundled into a \commitrecord attached to every request.

  \item \emph{Send requests.} For each peer at position $\ind$ in $\ptable$,
  the node sends a $\request$ containing: the signed nonces, the
  \commitrecord together with an opening proof for position $\ind$, and the
  node's \netrecord
  (\alglines{alg:heartbeat}{line:hb-send-loop}{line:hb-send}).
\end{enumerate}

\smallskip\noindent\textbf{Responding to a request.}
Upon receiving a request (\cref{alg:send_response}), the responder:
(i)~validates and stores the sender's \netrecord;
(ii)~verifies freshness (current round), nonce signature,
the vector commitment opening---confirming that the responder appears in the
requester's declared recipient set (\cref{sec:too-many-reqs})---and the ZK
share proof for~\Rshare;
(iii)~enforces per-sender rate limiting; and
(iv)~evaluates the requester's seeds $(\nonce,\nonceoverlay)$ against
each record in its $\ptable$, returning those whose $\PRNG$ score falls
below $s/\sqrt{n}$.
ZK relations appear in \cref{appendix:primitives}; pseudocode in
\cref{appendix:pseudocode}.

Since each node holds the private key corresponding to its \netpk, all
messages between peers can be end-to-end encrypted.

\paragraph{Bootstrapping.}
The heartbeat also serves as a bootstrap mechanism. A new node adds one
or more bootstrap contacts to its peer table and begins participating in
heartbeat rounds, thereby populating its tables and propagating its own
record.
Concretely, whenever a node sends a request, the responder may add the
requester's record to its own table
(\algline{alg:send_response}{line:resp-recv}), so a new node's record
can later diffuse through the network via subsequent requests to the responder.
The mean-field analysis confirms that even a node starting with zero visibility
spreads its record exponentially fast (\cref{sec:mf-visibility}); we validate
this with bootstrap simulations in \cref{sec:eval}.

\paragraph{Epoch transitions.}
Once the \stakefreeze freeze period has elapsed and the new commitment is
finalized on-chain, each node executes the epoch transition
(\cref{alg:epoch_transition} in \cref{appendix:pseudocode}).
A protocol parameter $d \geq 1$ controls how many past epoch commitments
remain accepted: the set $\mathit{AccComs}$ always contains the $d$ most
recent commitments. Setting $d=1$ requires all records to reference the
current epoch; larger values provide a grace period for peers to regenerate
their proofs. During the transition each node also regenerates its own
$\stakeproof$ against the new commitment so that subsequent \netrecord{s}
reference $\stakeroot_{\mathit{new}}$.

\subsection{Detecting Eclipse attacks}
\label{sec:eclipse-detect}

\protocolname is designed so that a node bootstrapping from even a single
connection can \emph{detect} when it is likely eclipsed, whether because that
connection is malicious or the network is otherwise partitioned.

\paragraph{Why eclipse bias is detectable.}
For any responder, the set of records that should be returned is
\emph{determined} by the requester-chosen seeds $(\nonce,\nonceoverlay)$: a
record with identifier $\netpk_j$ is included iff
$\PRNG[\nonce][\netpk_j] < s/\sqrt{n}$ or
$\PRNG[\nonceoverlay][\netpk_j] < s/\sqrt{n}$. A malicious responder therefore
cannot \emph{inflate} adversarial representation---records outside the PRNG
threshold are discarded. The only remaining avenue is \emph{suppression}:
withholding honest records.
This yields a statistical signal: each node expects to collect
$\Theta(s\sqrt{n})$ distinct valid records per round, concentrated around
the mean by standard Chernoff bounds. Under an eclipse the count stagnates
well below this expectation, as the node is unable to retrieve large portions
of its slice.

\paragraph{A practical test.}
Let $U_t$ denote the number of \emph{new} distinct $\netpk$ values added to
$\ptable^t$ in round $t$. If $|U_t|< \theta$, the node
warns the operator that it is not well connected.
The threshold $\theta$ must balance false positives against
detection sensitivity; \cref{sec:overlay-construction,sec:detecting-partitions}
provide precise guidance.

\paragraph{Global guarantees.}
If an adversary partitions the network, nodes on the smaller side trigger the
flag. Supplying enough records to suppress the flag creates cross-partition
connections that undermine the partition.
\Cref{sec:overlay-resistance} formalizes this.

\subsection{Overlay Construction from \otable}
\label{sec:overlay-construction}
Once \otable has been populated via the heartbeat mechanism, each node
independently includes each peer in its \otable as an overlay neighbor with
probability $\pconn$, yielding an expected degree of
$\pconn \cdot s\sqrt{n}$. Setting $\pconn = c \cdot \log n / (s\sqrt{n})$
for a suitable constant $c > 1$ gives $\Theta(\log n)$ connections per node,
which suffices for connectivity with high probability by standard random-graph
results, provided the adversarial fraction in \otable remains below a constant
threshold (ensured by the mean-field analysis of \cref{sec:mean-field}).
\Cref{thm:overlay} formalizes this guarantee.

Overlay connections are drawn exclusively from \otable, selected using
the private seed \nonceoverlay, while peer-discovery responses are served from
\ptable using \nonce
(\algline{alg:receive_new_record}{line:rec-insert-otable};
\alglines{alg:send_response}{line:filter-start}{line:filter-end}).
Because responding to requests never reveals
entries from \otable, an adversary observing or participating in the gossip
protocol learns nothing about which peers a node has chosen as overlay
neighbors.

\subsection{Detecting Too Many Requests and Slashing}
\label{sec:too-many-reqs}

A central DoS vector in \protocolname is flooding the network with heartbeat
requests, since responding requires running the PRNG over the full table and
sending a response typically larger than the request. Even if each responder
rate-limits per sender
(\algline{alg:send_response}{line:resp-rate-limit}), an attacker can spread
requests across many responders. We therefore enforce a \emph{global} per-round
request quota and provide a compact \emph{cryptographic proof} that a requester
exceeded it, enabling on-chain slashing. If slashing is not desired, the same
mechanism can simply drop excessive requests without serving them.

\paragraph{Per-round quota via unique batch commitments.}
Each request carries a
$\commitrecord=\langle \reqcommit, \slashshare, \shareproof\rangle$
where $\reqcommit$ is a vector commitment to the intended recipients for
that round (\algline{alg:heartbeat}{line:hb-commit}), together with an
opening $\langle \ind,\indproof\rangle$ proving that the responder's $\netpk$
appears at position $\ind$ in the committed list.
Responders verify this opening,
check that the request references the current round, and verify the ZK share
proof $\shareproof$
(\alglines{alg:send_response}{line:resp-verify-opening}{line:resp-verify-share}),
thereby accepting only requests that are \emph{consistent} with the
requester's declared batch and cryptographically bound to a single commitment
per round.

We now enforce the policy:
\begin{quote}
\emph{A staked identity $\netpk$ may issue requests under at most one batch commitment $\reqcommit$ per round.}
\end{quote}
This is exactly what \cref{alg:send_response} enforces: peers store
$[(\commitrecord_i,\allowbreak\roundnum_i)]$ inside each \peerrecord,
and upon observing two distinct commitments for the same $\netpk$
and $\roundnum$ they construct a \slashproof
(\alglines{alg:receive_new_record}{line:rec-dup-detect}{line:rec-submit-slash}).

\paragraph{Why two commitments imply slashability.}
Each share is an affine function of the commitment (Definition~\Rshare):
\[
  \slashshare = \hashshare(\sk,\roundnum) \cdot \reqcommit + \stakesk.
\]
Two valid shares under distinct commitments $\reqcommit_1\neq\reqcommit_2$ in
the same round yield two linear equations in two unknowns
($\hashshare(\sk,\roundnum)$ and $\stakesk$), from which any observer can
recover $\stakesk$ and slash the offender's stake on-chain
(\algline{alg:smartcontract}{line:sc-slash}).
The full derivation appears in \cref{appendix:pseudocode}.

\paragraph{Constructing and propagating \slashproof.}
The public slashing evidence is
\begin{multline*}
\slashproof = \langle \netpk,\; \stakeroot,\; \roundnum,\\
\commitrecord_1,\; \commitrecord_2 \rangle,
\end{multline*}
where the two \commitrecord{s} carry distinct $\reqcommit$ values for
the same $\netpk$ and $\roundnum$.
Any peer holding both records verifies
$\text{ZKVerify}_{\Rshare}$ for each and confirms
$\reqcommit_1\neq \reqcommit_2$
(\algline{alg:verify_slash}{line:vs-distinct}).
Verified \slashproof{s} are gossiped in $\response$ and added to a
local \denylist; deny-listed identities are deprioritized or ignored until the
epoch commitment leaves $\mathit{AccComs}$
(\algline{alg:send_response}{line:resp-denylist};
\algline{alg:epoch_transition}{line:et-purge-deny}).

Thus, a requester cannot contact more than $s\sqrt{n}$ distinct peers per round
without creating an additional commitment, making it slashable---achieving a
\emph{globally enforceable} request quota without resorting to a trusted aggregator or the collection of many Shamir shares across responders.

\section{Analysis of Network Health} \label{sec:mean-field}
We analyze the health of the network by studying (a) \emph{table quality}: how
well the tables of nodes cover the honest portion of the network, and (b)
\emph{node visibility}: how well peer records propagate to other honest nodes.
We present both analyses in \cref{sec:mf-quality} and
\cref{sec:mf-visibility}, respectively, and summarize the results in
\cref{fig:mean-field-plots}.
\begin{figure*}
    \centering
    \begin{subfigure}[b]{\awplotiiwidth}
        \centering
        \begin{tikzpicture}
            \begin{axis}[
                awplot,
                xlabel={Parameter $s$},
                ylabel={Table quality ($q$)},
                xmin=1, xmax=8,
                ymin=-0.05, ymax=1.05,
                width=\awplotiiwidth,
                height=\awplotheight,
                legend style={
                    at={(1.0,0.5)},
                    anchor=east,
                    draw=none,
                    fill=white,
                    fill opacity=0.8,
                    text opacity=1,
                    cells={anchor=west},
                    font=\scriptsize,
                },
            ]

                \addplot[color=plotA, thick, mark=none]
                    table[x=s, y=q_high] {figures/mean-field-data/alpha_1_3.txt};
                \label{plt:qh-a13}
                \addlegendentry{$\alpha=\frac{1}{3}$}
                \addplot[color=plotA, thick, dashed, opacity=0.7, forget plot]
                    table[x=s, y=q_thresh] {figures/mean-field-data/alpha_1_3.txt};
                \label{plt:qt-a13}

                \addplot[color=plotB, thick, mark=none]
                    table[x=s, y=q_high] {figures/mean-field-data/alpha_1_2.txt};
                \label{plt:qh-a12}
                \addlegendentry{$\alpha=\frac{1}{2}$}
                \addplot[color=plotB, thick, dashed, opacity=0.7, forget plot]
                    table[x=s, y=q_thresh] {figures/mean-field-data/alpha_1_2.txt};
                \label{plt:qt-a12}

                \addplot[color=plotC, thick, mark=none]
                    table[x=s, y=q_high] {figures/mean-field-data/alpha_2_3.txt};
                \label{plt:qh-a23}
                \addlegendentry{$\alpha=\frac{2}{3}$}
                \addplot[color=plotC, thick, dashed, opacity=0.7, forget plot]
                    table[x=s, y=q_thresh] {figures/mean-field-data/alpha_2_3.txt};
                \label{plt:qt-a23}
            \end{axis}
        \end{tikzpicture}
        \caption{Table quality.}
        \label{fig:mf-quality-plot}
    \end{subfigure}
    \hspace{2em}
    \begin{subfigure}[b]{\awplotiiwidth}
        \centering
        \begin{tikzpicture}
            \begin{axis}[
                awplot,
                xlabel={Parameter $s$},
                ylabel={Node visibility ($v$)},
                xmin=1, xmax=5.7,
                ymin=-0.005, ymax=0.05,
                width=\awplotiiwidth,
                height=\awplotheight,
                legend style={
                    at={(1.0,0.45)},
                    anchor=east,
                    draw=none,
                    fill=white,
                    fill opacity=0.8,
                    text opacity=1,
                    cells={anchor=west},
                    font=\scriptsize,
                },
            ]
                \addplot[color=plotA, thick, mark=none]
                    table[x=s, y=v_high] {figures/mean-field-data/alpha_1_3.txt};
                \label{plt:vh-a13}
                \addlegendentry{$\alpha=\frac{1}{3}$}

                \addplot[color=plotB, thick, mark=none]
                    table[x=s, y=v_high] {figures/mean-field-data/alpha_1_2.txt};
                \label{plt:vh-a12}
                \addlegendentry{$\alpha=\frac{1}{2}$}

                \addplot[color=plotC, thick, mark=none]
                    table[x=s, y=v_high] {figures/mean-field-data/alpha_2_3.txt};
                \label{plt:vh-a23}
                \addlegendentry{$\alpha=\frac{2}{3}$}

                \addplot[black, thick, dotted]
                    table[x=s, y=s_over_sqrt_n] {figures/mean-field-data/s_over_sqrt_n.txt};
                \label{plt:s-over-sqrt-n}
                \addlegendentry{$\frac{s}{\sqrt{n}}$}
            \end{axis}
        \end{tikzpicture}
        \caption{Node visibility.}
        \label{fig:mf-visibility-plot}
    \end{subfigure}
    \caption{Mean-field behavior with $n=10{,}000$ and table size $s\sqrt{n}$ for fixed
    adversary stake $\alpha\in\{1/3,1/2,2/3\}$.
    (a)~\emph{Table quality} $q$ vs.\ $s$: solid curves show the stable equilibrium
    $q_{\mathrm{high}}$; dashed curves show the unstable threshold
    $q_{\mathrm{thresh}}$ that separates initial conditions converging to the
    healthy equilibrium from those collapsing toward $q\approx 0$.
    (b)~\emph{Visibility} $v$ vs.\ $s$: solid curves show the stable visibility
    $v_{\mathrm{high}}$; the dotted line plots $\frac{s}{\sqrt{n}}$, the expected
    visibility in an honest network. We suggest $s=4$ based on these results.}
    \label{fig:mean-field-plots}
\end{figure*}
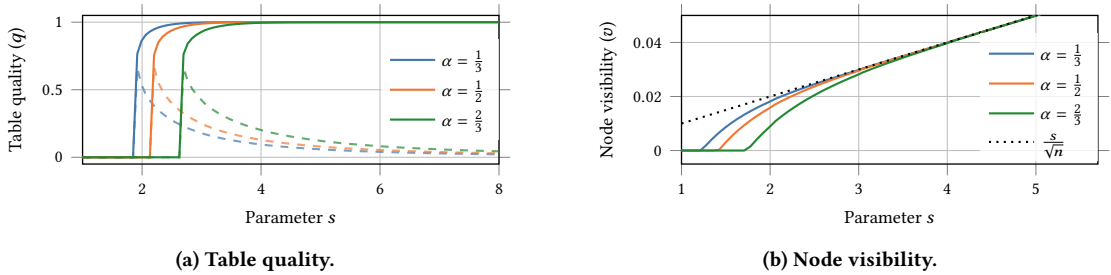

We use a mean-field approach: we track a representative honest node and
derive a deterministic recurrence describing how each quantity evolves
round by round, treating nodes as independent in the large-network limit.
This approximation is justified when the network is large
enough that individual correlations average out. We validate these predictions against simulations in \cref{sec:eval}.
The recurrence's \emph{stable} fixed points are the system's equilibria and \emph{unstable} fixed points mark basin-of-attraction boundaries.

The main challenges in maintaining table quality and node visibility are that
adversarial or offline nodes may not respond to requests with fresh peer
records, leading to poor sampling of the network. We analyze network health
while assuming that an adversary controlling an $\alpha$ fraction of the stake
is \emph{silent}. A silent adversary represents the worst-case scenario for
network health; a more active adversary that relays peer records would only
improve network health.

\subsection{Table Quality} \label{sec:mf-quality}
We introduce the notion of \emph{table quality} to measure how well a node's
peer table covers the slice that the node is supposed to retrieve
from the honest portion of the network. Recall that a node's nonce selects each
node with probability $\frac{s}{\sqrt{n}}$ for its table
(\algline{alg:receive_new_record}{line:rec-insert-ptable}), hence targeting $s\sqrt{n} (1 -
\alpha)$ honest in expectation.
\begin{definition}[table quality]
		The \textbf{table quality} $\quality_r$ in round $r$ represents the
    honest fraction of the slice that is present in a node's
    peer table at the end of round $r$.
\end{definition}

\paragraph{Mean field analysis of table quality.}
We show that the table quality remains high in expectation even in the presence
of a silent adversary. Let the adversary control an $\alpha$ fraction of the
total stake and be \emph{silent}. We derive the difference equation by analyzing
the probability that a specific honest node $j$ in $i$'s slice is \emph{not}
in $i$'s peer table after one heartbeat
(\cref{alg:heartbeat}).
We consider two ways that $i$ may
fail to receive $j$'s record:
\begin{itemize}
    \item \emph{Requests fail to retrieve $j$'s record:} Node $i$ sends
    $s\sqrt{n}(1-\alpha)\quality_r$ requests to honest peers. Each peer holds
    $j$'s record with probability $\frac{s\quality_r}{\sqrt{n}}$ (it samples $j$
    with probability $\frac{s}{\sqrt{n}}$, of which a $\quality_r$ fraction are
    obtained). The probability none of the replies contain $j$ is $(1 -
    \frac{s\quality_r}{\sqrt{n}})^{s\sqrt{n}\quality_r(1-\alpha)}$.
    \item \emph{$j$ does not request from $i$:} With probability
    $(1 - \frac{s\quality_r}{\sqrt{n}})$, node $j$ fails to sample $i$ for its
    requests, accounting for the ``$+1$'' exponent term.
\end{itemize}
Combining these two failure modes, the probability that $j$'s record is
\emph{not} in $i$'s table after one heartbeat is the product of the two
independent failure probabilities, yielding the following difference equation:
\[
		\quality_{r+1}
		= 1 -
		\bigg(
		\bigg( 1 - \frac{s}{\sqrt{n}}\quality_r \bigg)^{s\sqrt{n}\quality_r(1-\alpha) + 1} \bigg)
\]

When $s > 1$ the difference equation has two positive fixed points: a
\emph{stable} high equilibrium $\quality_{\mathrm{high}} \approx 1$ (solid
curves in \cref{fig:mf-quality-plot}) and an \emph{unstable} threshold
$\quality_{\mathrm{thresh}}$ (dashed curves). If the initial table quality
exceeds $\quality_{\mathrm{thresh}}$, the system converges to
$\quality_{\mathrm{high}}$; otherwise it degrades toward $\quality \approx 0$.
\cref{fig:mf-quality-plot} shows results for various $s$ and $\alpha$.

\subsection{Node Visibility} \label{sec:mf-visibility}
We now analyze the \emph{node visibility} property, which measures how well a
given honest node's peer record propagates to other honest nodes in the network,
in the presence of a silent adversary.

\begin{definition}[node visibility]
    The \textbf{node visibility} $\visibility_r$ of a node at round $r$
    represents the fraction of honest nodes that have the node's peer record
    in their peer table at the end of round $r$.
\end{definition}

\paragraph{Mean field analysis of node visibility.}
Node visibility settles to $s/\sqrt{n}$, the expected fraction of honest nodes
holding a given record. For node $j$ to have $i$'s record: (1)~$i$ must pass
$j$'s PRNG threshold (probability $s/\sqrt{n}$), and (2)~at least one of
$j$'s $s\sqrt{n}(1-\alpha)$ honest respondents must hold $i$'s record (each
does so with probability $\visibility_r$). This yields:
\[
    \visibility_{r+1} = \frac{s}{\sqrt{n}}
    \left(1 - (1 - \visibility_r)^{s\sqrt{n}(1-\alpha)}\right)
\]
Note that we ignore \emph{injection}---the mechanism by which a responder adds
the requester's record to its own table
(\algline{alg:send_response}{line:resp-recv})---as it only helps
visibility.\footnote{A refined mean-field analysis shows that the injection term
shifts the equation intercept such that at $\visibility_0 = 0$, we still get
growth. This is essential for bootstrapping: a new node with zero visibility can
still spread its record because every request it sends gives the responder a
chance to store it.}

\paragraph{$R_0$ threshold.}
The basic reproduction number $R_0 = s^2(1-\alpha)$ governs whether a new
node's record spreads.\footnote{Borrowing from epidemiology: $R_0 > 1$ means
each node spreads the record to more than one peer on average.}
Suppose $\visibility_r$ is small. Using the approximation
$1-(1-\visibility)^x \approx x\visibility$ for small $\visibility$, the
difference equation linearizes to:
\[
    \visibility_{r+1}
    \approx \frac{s}{\sqrt{n}} \cdot s\sqrt{n}(1-\alpha) \visibility_r
    = s^2 (1-\alpha)\visibility_r.
\]
Growth occurs when $R_0 = s^2(1-\alpha) > 1$, i.e., $s > 1/\sqrt{1-\alpha}$.
Above this threshold, visibility converges to a stable fixed point close to
$s/\sqrt{n}$.
These guarantees are preconditions for \cref{thm:overlay}.
In particular, a network at the high equilibrium is $\gamma$-healthy
(\cref{def:network_states}) with $\gamma \approx \quality_{\mathrm{high}}$,
bridging the mean-field predictions to the security analysis of
\cref{sec:security}.
\cref{fig:mf-visibility-plot} shows the results for different $s$ and $\alpha$.

\section{Security and Privacy} \label{sec:security}
We now state the main security and privacy theorems of \protocolname.
\cref{tab:security-summary} summarizes the results, and deferred
proofs can be found in \cref{appendix:proofs}.
\begin{table}[t]
	\centering
	\small
	\begin{tabular}{@{} l c @{}}
		\toprule
		\textit{Security \& Privacy} & \textbf{Result(s)} \\
		\midrule
		Partition Detection & \Cref{lem:attack-soundness,lem:attack-completeness} \\
		Overlay Formation & \Cref{thm:overlay} \\
		Slashing Spam &
		\Cref{lem:slashing-soundness,lem:stake-secret-recovery,lem:slashing-completeness}
		\\
		Stake Anonymity & \Cref{lem:stake-anon} \\
		Connection Privacy & \Cref{lem:conn-privacy} \\
		\bottomrule
	\end{tabular}
	\caption{Summary of security and privacy properties.}
	\label{tab:security-summary}
\end{table}

\subsection{Detecting Partitions} \label{sec:detecting-partitions}
As a warmup, we show that the detection mechanism cleanly separates healthy
from compromised network states. In this subsection we do not model an active
adversary; we simply assume the network is either healthy or partitioned and
show that the flag reliably distinguishes the two cases.

Recall from the mean-field analysis (\cref{sec:mean-field}) that the protocol's
peer discovery mechanism drives the network toward an equilibrium where honest
nodes collect peer records from most of their slice. We formalize this as follows.

\begin{definition}[$\gamma$-healthy network] \label{def:network_states}
For $0 < \gamma < 1$, we say the network is \textbf{$\gamma$-healthy} if,
within a single heartbeat round in which all nodes behave honestly, every
honest node~$i$ receives peer records from at least a $\gamma$-fraction of all nodes in its slice.
\end{definition}

\paragraph{Parameter overview.}
Our security analysis uses three key fractions derived from the adversary
stake~$\alpha$:
\begin{itemize}
  \item The \emph{critical partition fraction} $\varphi = (1+\alpha)/2$. If the
  adversary controls an $\alpha$-fraction of nodes, then in any partition of the
  honest nodes the smaller side sees at most a $\varphi$-fraction of the full
  network.
  \item The \emph{healthy reachability fraction} $\gamma \in (\varphi, 1)$, a
  lower bound on the expected fraction of the network reachable by an honest
  node during normal operation (justified by the mean-field analysis of
  \cref{sec:mean-field}).
  \item The \emph{detection threshold} $\theta \in (\varphi, \gamma)$: a node
  raises the attack-detection flag when the number of unique peer records it
  receives falls below $\theta s\sqrt{n}$.
\end{itemize}
As long as $\varphi < \theta < \gamma$, the protocol can distinguish healthy
from compromised states: in a $\gamma$-healthy network each node sees at
least a $\gamma$-fraction of peers (by definition) and will not trigger the
flag (with high probability), while in a partition the smaller side sees at
most a $\varphi$-fraction and will trigger the flag (with high probability).
We set $\theta = (5+\alpha)/6$, which evenly splits the interval
$[\varphi, 1]$ (see \cref{appendix:proofs} for the derivation).
This choice is valid whenever $\gamma > (5+\alpha)/6$, which is the parameter
regime we target in practice; see \cref{sec:parameter-concrete} for concrete
parameterizations.

\subsubsection{Healthy and compromised network states}
In a $\gamma$-healthy network, the flag should not fire:
\newcommand{\stmtAttackSoundness}{%
  Assume the network is $\gamma$-healthy and the adversary is not attacking
  (i.e.\ the entire network behaves honestly) in round $r$.
  If $\theta < \gamma$, then every honest node raises the attack-detection
  flag with probability at most
  $\exp\left(-\frac{s\sqrt{n}(\gamma - \theta)^2}{2\gamma}\right)$.}
\begin{lemma}[attack soundness] \label{lem:attack-soundness}
  \stmtAttackSoundness
\end{lemma}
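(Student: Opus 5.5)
We model the number of unique peer records received by a fixed honest node $i$ in round $r$ as a sum of independent indicators and then apply a multiplicative lower-tail Chernoff bound. Write $X=\sum_{j} X_j$, where $X_j$ indicates that node $i$ receives $j$'s record in round $r$. The sum runs over the nodes $j$ in $i$'s slice.

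\textbf{Step 1: expectation.} Each identifier enters the slice independently with probability $s/\sqrt{n}$, since $\PRNG$ maps (seed, identifier) pairs uniformly and independently to $[0,1)$. So the slice has expected size $s\sqrt{n}$. By \cref{def:network_states}, in a round where everyone behaves honestly, $i$ receives records from at least a $\gamma$-fraction of its slice. I will normalize as the paper does, so that
\[
\mu := \E[X] \ge \gamma s\sqrt{n}.
\]
Concretely, I treat each of the $n$ identifiers as contributing an independent Bernoulli variable with success probability at least $\gamma s/\sqrt{n}$. This combines the slice-membership coin with the retrieval event granted by health.

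\textbf{Step 2: Chernoff.} The flag fires iff $X < \theta s\sqrt{n}$. Set $\delta = 1-\theta s\sqrt{n}/\mu$, so that $\delta \ge (\gamma-\theta)/\gamma > 0$ because $\theta<\gamma$. The lower-tail bound $\Pr[X<(1-\delta)\mu]\le \exp(-\delta^2\mu/2)$ then gives
\[
\Pr[X<\theta s\sqrt n]\le \exp\!\Big(-\frac{(\gamma-\theta)^2}{\gamma^2}\cdot\frac{\gamma s\sqrt n}{2}\Big)=\exp\!\Big(-\frac{s\sqrt n(\gamma-\theta)^2}{2\gamma}\Big).
\]
If $\mu$ exceeds $\gamma s\sqrt n$, the bound only improves. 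To see this, note that $\delta^2\mu = (\mu-\theta s\sqrt n)^2/\mu$ is increasing in $\mu$ for $\mu\ge\theta s\sqrt n$. Because the bound is per node, it matches the statement's claim that \emph{every} honest node flags with at most this probability. A union bound is not needed unless we want it simultaneously for all nodes.

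\textbf{Main obstacle: independence.} The indicators must be independent, or at least negatively associated, for Chernoff to apply. Slice membership across identifiers is independent under the PRNG/random-oracle model. The delicate part is that retrieval events for different $j$ may be correlated through shared responders. I would handle this by reading \cref{def:network_states} as a deterministic guarantee: conditioned on the slice, at least a $\gamma$-fraction of it is retrieved. Then the only randomness is slice membership. Under this reading, $X$ dominates a sum of independent Bernoulli$(\gamma s/\sqrt n)$ variables, for example by thinning each slice member independently. If that coupling is awkward, the fallback is to state the health assumption directly in this per-identifier independent form, which is what the mean-field model already presumes.
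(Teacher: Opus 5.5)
Your overall strategy and your Chernoff arithmetic match the paper's. You lower-bound $X$ by a binomial of mean $\gamma s\sqrt n$: the paper uses $\mathrm{Binomial}(\gamma n, s/\sqrt n)$, you use $\mathrm{Binomial}(n,\gamma s/\sqrt n)$. Both then apply the lower-tail bound with relative deviation $1-\theta/\gamma$, which gives exactly the stated exponent.

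The step that does not work is the one you yourself call the main obstacle. Under your deterministic reading of \cref{def:network_states}, all you know is $X\ge\gamma|S|$ pointwise, where $S$ is the slice. Thinning $S$ independently at rate $\gamma$ gives a count distributed as $\mathrm{Binomial}(|S|,\gamma)$ given $S$. That count exceeds $\gamma|S|$ with constant probability, so the coupling does not place the Bernoulli sum below $X$. No other coupling rescues the domination claim either. Your reading permits $X=\lceil\gamma|S|\rceil$, whose fluctuations are of order $\gamma\sqrt{s\sqrt n}$ rather than $\sqrt{\gamma s\sqrt n}$. Its mean is within $1$ of the binomial's, but its upper tail is lighter, so $X$ does not stochastically dominate $\mathrm{Binomial}(n,\gamma s/\sqrt n)$. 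Your fallback, postulating independent per-identifier retrieval with probability at least $\gamma$, is an extra assumption, not a consequence of the definition.

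The repair is short and needs no coupling. Under your reading, bound through the slice size. Since $|S|\sim\mathrm{Binomial}(n,s/\sqrt n)$ has mean $s\sqrt n$,
\[
\Pr[X<\theta s\sqrt n]\le\Pr\Big[|S|<\tfrac{\theta}{\gamma}\,s\sqrt n\Big]\le\exp\Big(-\frac{s\sqrt n\,(\gamma-\theta)^2}{2\gamma^2}\Big)\le\exp\Big(-\frac{s\sqrt n\,(\gamma-\theta)^2}{2\gamma}\Big),
\]
where the last step uses $\gamma<1$. This is even slightly stronger than the claim.

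Alternatively, adopt the reading the paper's proof actually uses. There, health provides a reachable set $\mathcal{R}_i$ with $|\mathcal{R}_i|\ge\gamma n$ that does not depend on $i$'s seeds. The number of received records is then at least $|S\cap\mathcal{R}_i|\sim\mathrm{Binomial}(|\mathcal{R}_i|,s/\sqrt n)$, which dominates $\mathrm{Binomial}(\gamma n,s/\sqrt n)$ using only the independence of the pseudorandom slice memberships. Correlations between retrieval events through shared responders never enter.
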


A \emph{partitioned} network is the complementary case: the honest nodes are
split into two disjoint eclipsed sets (\cref{def:eclipsed}) $A$ and $B$, each
isolated from the other (though both may still see adversarial nodes). Such a
network is $\gamma$-healthy only for a small value of $\gamma$, since each side
can discover only a limited fraction of the network.

We first establish that the smaller side of any partition is bounded in size,
then show the flag fires.

\newcommand{\stmtPartitionSize}{%
  Suppose the network is \emph{partitioned} into honest sets $(A,B)$ and let
  $V_A = A \cup \Advset$ and $V_B = B \cup \Advset$ denote the full views of
  each side (honest nodes plus adversarial nodes visible to that side). If,
  without loss of generality, $|V_A| \le |V_B|$, then $|V_A| \leq \varphi n$
  where $\varphi = (1+\alpha)/2$ is the critical fraction.}
\begin{lemma}[global partitioning $\Rightarrow$ small view]
  \label{lem:partition-size}
  \stmtPartitionSize
\end{lemma}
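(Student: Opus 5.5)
This is a counting argument. Since $A$ and $B$ are disjoint honest sets and the adversarial set has size $|\Advset| = \alpha n$ (identifying stake units with nodes, as in the model where each node holds one unit of stake), we have
\[
|A| + |B| \le |\Honset| = (1-\alpha)n .
\]
Because $A\cap\Advset = B\cap\Advset = \emptyset$, the views satisfy $|V_A| = |A| + \alpha n$ and $|V_B| = |B| + \alpha n$. I will take the adversarial set to be fully visible to both sides; this is the worst case. If only part of $\Advset$ is visible to a side, that side's view only shrinks, and the bound below still holds with $\alpha n$ replaced by a smaller quantity.

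First, add the two view sizes:
\[
|V_A| + |V_B| = |A| + |B| + 2\alpha n \le (1-\alpha)n + 2\alpha n = (1+\alpha)n .
\]
Second, use the hypothesis $|V_A|\le |V_B|$ to get
\[
2|V_A| \le |V_A|+|V_B| \le (1+\alpha)n ,
\]
so $|V_A| \le \tfrac{1+\alpha}{2}n = \varphi n$.

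If the views are defined with only partially visible adversaries $\Advset_A,\Advset_B\subseteq\Advset$, the same sum gives $|V_A|+|V_B| \le (1-\alpha)n + |\Advset_A|+|\Advset_B| \le (1+\alpha)n$, and the conclusion is unchanged.

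The only real subtlety, and the expected main obstacle, is making the interpretation precise. The adversarial fraction $\alpha$ is defined in terms of stake, so I need to justify that it translates into at most $\alpha n$ nodes appearing in any view. This follows from each \netrecord carrying a valid $\stakeproof$ against a single unit of stake. The adversary cannot present more identities than its stake allows except by reusing one stake under several $\netpk$s, and I would argue that is excluded by the deterministic derivation $\netpk \gets \keygen(\sk)$ together with the soundness of \Rstake. Granting this, the remainder is the arithmetic above.
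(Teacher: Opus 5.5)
Your proof is correct and is the same counting argument the paper uses: the paper phrases it by contradiction (assuming $|V_A| > \varphi n$, hence $|V_B| > \varphi n$, so $|V_A|+|V_B| > (1+\alpha)n$, contradicting $|V_A|+|V_B| = |A|+|B|+2\alpha n = (1+\alpha)n$), whereas you derive $2|V_A| \le |V_A|+|V_B| \le (1+\alpha)n$ directly. Your relaxations ($|A|+|B| \le (1-\alpha)n$, partial adversarial visibility) and your remark on bounding the adversary by $\alpha n$ identities are harmless extras; the paper simply takes $|\mathcal{A}| = \alpha n$ as part of the model.
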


Since the smaller side sees at most a $\varphi$-fraction of the network, and
$\theta > \varphi$, the flag fires with high probability:
\newcommand{\stmtAttackCompleteness}{%
  Assume the network is partitioned in round $r$.
  If $\theta > \varphi$, then every honest node in the smaller
  partition raises the attack-detection flag with probability at least
  $1 - \exp\left(-\frac{s\sqrt{n}(\theta - \varphi)^2}{\theta + \varphi}\right)$.}
\begin{lemma}[attack completeness] \label{lem:attack-completeness}
  \stmtAttackCompleteness
\end{lemma}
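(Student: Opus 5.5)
Fix an honest node $i$ in the smaller side $A$ of the partition. By \cref{lem:partition-size}, every record $i$ can obtain comes from $V_A = A\cup\Advset$, with $|V_A|\le \varphi n$. The argument bounds the number of distinct records $i$ collects in round $r$ by a binomial variable whose mean is at most $\varphi s\sqrt{n}$. It then applies a multiplicative Chernoff upper-tail bound at the threshold $\theta s\sqrt{n}$.

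\textbf{Step 1 (domination).} Every honest node that can answer $i$ lies in $A$, and adversarial responders can only return records of identities that actually exist and carry valid stake proofs. Suppressed records are simply absent. Since selection is decided by $i$'s seeds, a record with identifier $\netpk_j$ is accepted only if $\PRNG[\nonce][\netpk_j]<s/\sqrt{n}$. Modelling $\PRNG$ as uniform and independent across identifiers, the set of unique records $i$ can collect is contained in
\[
\{j\in V_A : \PRNG[\nonce][\netpk_j]<s/\sqrt{n}\}.
\]
Its size $X$ is a sum of $|V_A|$ independent Bernoulli$(s/\sqrt{n})$ variables. Hence $\mu=\E[X]\le \varphi n\cdot s/\sqrt{n}=\varphi s\sqrt{n}$. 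One subtlety is that the adversary chooses which identities to expose, possibly adaptively. I would argue that this does not help: the identity set $V_A$ is fixed by stake before the seeds are drawn, and sampling fresh $\netpk$s is impossible without stake. So the monotone upper bound by the full slice of $V_A$ holds.

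\textbf{Step 2 (Chernoff).} I will apply $\Pr[X\ge(1+\delta)\mu]\le \exp(-\delta^2\mu/(2+\delta))$. By monotonicity, it suffices to take the worst case $\mu=\varphi s\sqrt{n}$, padding with dummy Bernoullis if needed. With $(1+\delta)\mu=\theta s\sqrt{n}$ we have $\delta=(\theta-\varphi)/\varphi$. Then
\[
\frac{\delta^2\mu}{2+\delta}=\frac{(\theta-\varphi)^2 s\sqrt{n}/\varphi}{(\theta+\varphi)/\varphi}=\frac{s\sqrt{n}(\theta-\varphi)^2}{\theta+\varphi},
\]
which is exactly the exponent in the statement. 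Since the flag fires whenever the count falls below $\theta s\sqrt{n}$, the flag fails to fire with probability at most this quantity, and this gives the claim. The monotonicity in $\mu$ needs a line of care: the tail bound at a fixed threshold increases with $\mu$, which the padding/coupling argument handles.

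\textbf{Main obstacle.} The delicate point is Step 1: justifying that the distinct-record count is dominated by the slice of $V_A$ despite adversarial control over delivery. This requires that records from $B$ are unobtainable (the definition of eclipsed sets) and that records cannot be forged (EUF-CMA signatures plus soundness of $\stakeproof$). I would state these cryptographic conditions as holding except with negligible probability, and absorb that into the bound or note it separately.
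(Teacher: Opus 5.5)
Your proposal is correct and follows essentially the same route as the paper's proof. Both use \cref{lem:partition-size} to dominate the received-record count by a $\mathrm{Binomial}(\varphi n, s/\sqrt{n})$ variable, then apply the multiplicative upper-tail Chernoff bound with deviation parameter $(\theta-\varphi)/\varphi$, which gives exactly the stated exponent. Your version is a bit more careful than the paper's in a few places: you bound by $|V_A|$ including adversarial records, where the paper writes $|A|$ and ``honest peer records''; you spell out the padding step; and you state the unforgeability assumptions explicitly. The argument itself is the same.
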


Together, \cref{lem:attack-soundness,lem:attack-completeness} show that when
$\varphi < \theta < \gamma$, the detection mechanism separates healthy from
compromised states. We illustrate with $\alpha = 0.25$, $s = 4$, $\gamma = 0.90$.
We choose $\theta$ to balance the two error rates. Setting the Chernoff
exponents $(\gamma-\theta)^2/(2\gamma)$ and $(\theta-\varphi)^2/(\theta+\varphi)$
approximately equal gives $\theta \approx 0.75$.
\begin{table}[H]
    \centering
    \small
    \begin{tabular}{@{} r c c c @{}}
        \toprule
        $n$ & $s\sqrt{n}$ & False positive & False negative \\
        \midrule
        $10^4$ & $4.00\times 10^2$ & $5.13\times 10^{-4}$ & $7.56\times 10^{-4}$ \\
        $10^5$ & $1.26\times 10^3$ & $2.82\times 10^{-9}$ & $2.05\times 10^{-8}$ \\
        $10^6$ & $4.00\times 10^3$ & $3.28\times 10^{-25}$ & $1.18\times 10^{-22}$ \\
        \bottomrule
    \end{tabular}
    \caption{Exact binomial error probabilities for $\alpha=0.25$, $s=4$,
    $\gamma=0.90$, $\varphi=0.625$, and $\theta=0.75$.}
    \label{tab:concrete-attack-bounds}
\end{table}

\subsection{Overlay Resistance and Eclipse Detection} \label{sec:overlay-resistance}
The previous subsection showed that a partitioned network is reliably detected.
We now turn to the harder question: can a powerful adversary actually
\emph{create} such a partition in the overlay? We analyze resistance to
partitioning under an extremely strong adversary, which we formalize next.
Recall that each node constructs its overlay by independently including
each peer in $\otable$ with probability $\pconn$, yielding an expected
degree of $\pconn s \sqrt{n}$.

\begin{definition}[omniscient overlay adversary] \label{def:omniscient-overlay}
An \emph{omniscient overlay adversary} controls an $\alpha$-fraction of nodes
and has the following capabilities during overlay construction:
\begin{enumerate}[(i)]
  \item it knows every honest node's private nonce $\nonceoverlay$ and therefore
    knows the exact slice each node selects;
  \item it can completely determine the contents of every honest node's
    $\otable$, choosing which honest records to deliver and which to withhold;
  \item it optimizes its strategy against the full network topology.
\end{enumerate}
The only constraint is that the adversary must respect each node's nonce
$\nonceoverlay$: it cannot alter the pseudorandom selection that determines
which peers fall in a node's slice.
\end{definition}

Despite this worst-case power, the
attack-detection mechanism ensures the following global guarantee:

\begin{theorem*}[informal, see \cref{thm:overlay}] \label{thm:overlay-informal}
  In the presence of an omniscient overlay adversary (\cref{def:omniscient-overlay}),
  for appropriately chosen threshold $\theta > (1+\alpha)/2$ and selection
  probability $\pconn$, with high probability, one of the following holds:
  \begin{enumerate}[(a)]
    \item the overlay forms a single connected component among all honest nodes, or
    \item the overlay is disconnected, and every smaller component has at least
    a $(1-\delta)$-fraction of its honest nodes raising the attack-detection
    flag.
  \end{enumerate}
\end{theorem*}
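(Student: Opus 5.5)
The plan is to prove the contrapositive for every candidate cut in the honest overlay. Suppose the overlay is disconnected, and fix a smaller connected component $C$ of honest nodes, with complement $\Honset \setminus C$ of size at least $|\Honset|/2$. We show that, with high probability over the honest nodes' private nonces and the $\pconn$-coin flips, no such $C$ can exist in which more than a $\delta$-fraction of nodes are \emph{unflagged}. The key observation is the one made in \cref{sec:eclipse-detect}. The adversary cannot inflate a slice, so an unflagged node $i$ must have received at least $\theta s\sqrt{n}$ distinct valid records from its slice. At most $|V_C|/n$ of its slice lies inside $C \cup \Advset$, and by \cref{lem:partition-size} this is at most $\varphi s\sqrt{n}$ up to Chernoff fluctuations. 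Hence $i$ holds at least roughly $(\theta-\varphi)s\sqrt{n}$ records of honest nodes outside $C$ in its $\otable$.

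\textbf{Step 1 (slice concentration).} For each honest $i$ and each fixed set $S$, the number of slice members of $i$ in $S$ is $\mathrm{Bin}(|S|, s/\sqrt{n})$. The nonces are uniform through the $\PRNG$, and the adversary cannot alter them (\cref{def:omniscient-overlay}). We cannot union bound over all $S$, but we do not need to: since $|C|\le |\Honset|/2$, we have $|C\cup\Advset| \le \varphi n$ deterministically. A single Chernoff bound on $i$'s total slice size, together with the pigeonhole count, therefore gives at least $(\theta-\varphi-o(1))s\sqrt{n}$ delivered records from $\Honset\setminus C$. This step is just the argument of \cref{lem:attack-completeness} read in reverse.

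\textbf{Step 2 (crossing edges).} Each such record independently becomes an overlay edge with probability $\pconn = c\log n/(s\sqrt{n})$. The coins are private randomness that the adversary does not control, even though it chooses the $\otable$ contents. So an unflagged node in $C$ has no edge leaving $C$ with probability at most
\[
(1-\pconn)^{(\theta-\varphi)s\sqrt{n}} \le n^{-c(\theta-\varphi)}.
\]
For $C$ to be a component, every unflagged node must fail in this way. The events are independent across nodes given the tables, so a fixed $C$ with $u \ge \delta|C|$ unflagged nodes survives with probability at most $n^{-c(\theta-\varphi)u}$.

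\textbf{Step 3 (union bound over components) — the main obstacle.} The number of candidate sets $C$ of size $k$ is $\binom{|\Honset|}{k} \le n^k$, while the survival bound only gives $n^{-c(\theta-\varphi)\delta k}$. We therefore choose $c > 1/((\theta-\varphi)\delta)$, for example $c = 2/((\theta-\varphi)\delta)$, and sum over $k$ to get $\sum_k n^{-k}= O(1/n)$.

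The delicate point is adaptivity. The adversary picks which records to deliver after seeing the topology (capability (iii)), but the $\pconn$-coins must be fresh relative to the delivery choices. I would handle this by a deferred-decisions argument. Condition on the nonces, let the adversary fix all $\otable$s (as a function of anything except the coins), and then reveal the coins. If the model allows the adversary to see coins first, we would instead union bound over the adversary's deterministic delivery strategy per cut. This costs only the same $n^k$ factor once we note that the relevant event depends only on whether each node in $C$ received at least $(\theta-\varphi)s\sqrt{n}$ outside records.

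Finally, we add the failure probability from Step 1, which is $n\cdot e^{-\Omega(s\sqrt{n})}$, and conclude that w.h.p.\ either the honest overlay is connected or every smaller component has at least a $(1-\delta)$-fraction flagged.
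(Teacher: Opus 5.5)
\paragraph{Gap in Step 1.} You claim that every unflagged node $i\in C$ holds at least $(\theta-\varphi-o(1))s\sqrt{n}$ records from $\Honset\setminus C$, and this does not follow from $|C\cup\Advset|\le\varphi n$ plus a Chernoff bound on $i$'s total slice size. The quantity you must upper-bound is $|\mathrm{slice}_i\cap(C\cup\Advset)|$. A bound on $|C\cup\Advset|$ says nothing about how many of $i$'s roughly $s\sqrt{n}$ slice members land in that particular set. For a \emph{fixed} $C$ this count is $\mathrm{Bin}(|C|+\alpha n, s/\sqrt{n})$ and concentrates, but $C$ is not fixed. The omniscient adversary knows every slice, and the conclusion must hold for all cuts simultaneously. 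For example, take $C$ to be $i$ together with all of $i$'s honest slice members. Then $i$ can stay unflagged while holding zero records from outside $C$; this is exactly the unavoidable fraction that the paper's $\delta$ is there to absorb. You also cannot repair this by union-bounding the per-node event over all cuts: that event fails with probability $e^{-\Theta(\sqrt{n})}$, while there are up to $2^{\Theta(|\Honset|)}$ cuts. Your failure term $n\cdot e^{-\Omega(s\sqrt{n})}$ only union-bounds over nodes. As a result, the survival bound $n^{-c(\theta-\varphi)u}$ in Steps 2--3 is unjustified.

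\paragraph{The missing idea.} What you need is the paper's ``heavy node'' step. Fix a slack $0<\varepsilon<\theta-\varphi$, and call $i\in A$ $A$-heavy if its slice has at least $(\theta-\varepsilon)s\sqrt{n}$ members in $A\cup\Advset$.
\begin{itemize}
\item For a \emph{fixed} cut, heaviness is independent across nodes because the nonces are independent. Each node is heavy with probability $x=e^{-\Theta(\sqrt{n})}$.
\item Hence the probability that at least a $\delta/2$-fraction of $A$ is heavy is at most $(2ex/\delta)^{\delta|A|/2}=e^{-\Theta(\sqrt{n})|A|}$. Because the whole fraction must fail, this bound does survive the union bound over the $\binom{|\Honset|}{|A|}$ cuts.
\item Condition on no cut having many heavy nodes. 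Then any cut with at least a $\delta$-fraction of unflagged nodes has at least $\frac{\delta}{2}|A|$ nodes that are both unflagged and non-heavy. Each of these holds at least $\varepsilon s\sqrt{n}$ entries from $B$.
\item Only these nodes feed your Step 2--3 computation. It then goes through with $\theta-\varphi$ replaced by $\varepsilon$ and $\delta$ replaced by $\delta/2$, which is where the paper's condition $C>2/(\delta\varepsilon)$ comes from.
\end{itemize}

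\paragraph{Coin-visibility fallback.} Your fallback for an adversary that sees the $\pconn$-coins before choosing deliveries is also wrong. Such an adversary simply withholds the $O(\log n)$ cross-cut records whose coins came up heads. Every node stays unflagged, the cut stays intact, and no union bound rescues this. The coins must be independent of the delivery choices, as in your deferred-decisions reading.
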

\colorlet{AEdge}{blue!60!black}
\colorlet{AFill}{blue!12}
\colorlet{BEdge}{teal!60!black}
\colorlet{BFill}{teal!12}
\colorlet{AdvEdge}{orange!70!black}
\colorlet{AdvFill}{orange!12}
\colorlet{NodeEdge}{black!55}
\colorlet{FlagRed}{red!85!black}
\colorlet{OuterEdge}{black!45}
\colorlet{CaseBlue}{blue!75!black}
\colorlet{DeltaEdge}{red!55!black}
\colorlet{DeltaFill}{red!10}

\tikzset{
  overlay/outerbox/.style={rounded corners=1.8mm, draw=OuterEdge, line width=0.65pt},
  overlay/advbox/.style={rounded corners=1.5mm, draw=AdvEdge, fill=AdvFill, line width=0.70pt},
  overlay/vnode/.style={circle, draw=NodeEdge, fill=white, line width=0.50pt, minimum size=2.20mm, inner sep=0pt},
  overlay/anode/.style={circle, draw=AdvEdge, fill=orange!8, line width=0.50pt, minimum size=2.00mm, inner sep=0pt},
  overlay/cutline/.style={densely dashed, draw=OuterEdge, line width=0.55pt},
  overlay/localA/.style={draw=AEdge, line width=0.45pt},
  overlay/localB/.style={draw=BEdge, line width=0.45pt},
  overlay/cross/.style={->, draw=CaseBlue, line width=0.90pt,
    preaction={draw,white,line width=1.55pt,opacity=0.95},
    >= {Latex[length=1.60mm,width=1.10mm]},
    shorten <=1.05pt, shorten >=0pt},
  overlay/advlink/.style={->, draw=AdvEdge, line width=0.70pt,
    preaction={draw,white,line width=1.35pt,opacity=0.93},
    >= {Latex[length=1.35mm,width=1.00mm]},
    shorten <=1.00pt, shorten >=0pt},
  overlay/pillA/.style={font=\scriptsize\bfseries\sffamily, text=AEdge, fill=white, draw=blue!30, rounded corners=0.9mm, inner xsep=2.8pt, inner ysep=1.0pt},
  overlay/pillB/.style={font=\scriptsize\bfseries\sffamily, text=BEdge, fill=white, draw=teal!30, rounded corners=0.9mm, inner xsep=2.8pt, inner ysep=1.0pt},
  overlay/pillAdv/.style={font=\scriptsize\bfseries\sffamily, text=AdvEdge, fill=white, draw=orange!30, rounded corners=0.9mm, inner xsep=3.2pt, inner ysep=1.0pt}
}

\newcommand{\overlayflag}[1]{%
  \draw[FlagRed, line width=0.28pt]
    ($(#1.north east)+(0.010,0.0)$) -- ++(0,0.09);
  \fill[FlagRed]
    ($(#1.north east)+(0.010,0.068)$) -- ++(0.16,0.045) -- ++(-0.16,0.055) -- cycle;
}

\newcommand{\overlaypanelsetup}[1]{%
  \fill[AFill, rounded corners=1.45mm] (0.08,0.08) rectangle (1.26,1.56);
  \fill[BFill, rounded corners=1.45mm] (1.60,0.08) rectangle (3.68,1.56);
  \draw[overlay/outerbox] (0,0) rectangle (3.76,1.64);
  \draw[overlay/cutline] (1.44,0.10) -- (1.44,1.54);
  \draw[overlay/advbox] (0.30,1.92) rectangle (3.46,2.78);
  \node[overlay/pillA,anchor=west] at (0.18,1.49) {$A$};
  \node[overlay/pillB,anchor=west] at (1.72,1.49) {$B$};
  \node[overlay/pillAdv,anchor=west] at (0.48,2.58) {$\Adv$};
  \coordinate (#1a1) at (0.26,0.94);
  \coordinate (#1a2) at (0.70,1.22);
  \coordinate (#1a3) at (1.10,0.76);
  \coordinate (#1a4) at (0.36,0.36);
  \coordinate (#1a5) at (0.84,0.30);
  \coordinate (#1b1) at (1.78,1.16);
  \coordinate (#1b2) at (2.20,0.96);
  \coordinate (#1b3) at (2.60,1.18);
  \coordinate (#1b4) at (3.10,0.94);
  \coordinate (#1b5) at (1.90,0.58);
  \coordinate (#1b6) at (2.40,0.70);
  \coordinate (#1b7) at (2.86,0.54);
  \coordinate (#1b8) at (2.06,0.22);
  \coordinate (#1b9) at (2.54,0.30);
  \coordinate (#1b10) at (3.18,0.26);
  \coordinate (#1u1) at (0.82,2.35);
  \coordinate (#1u2) at (1.56,2.35);
  \coordinate (#1u3) at (2.30,2.35);
  \coordinate (#1u4) at (3.04,2.35);
}
\newcommand{\overlaypaneltopnodes}[1]{%
  \node[overlay/vnode] (#1a1v) at (#1a1) {};
  \node[overlay/vnode] (#1a2v) at (#1a2) {};
  \node[overlay/vnode] (#1a3v) at (#1a3) {};
  \node[overlay/vnode] (#1a4v) at (#1a4) {};
  \node[overlay/vnode] (#1a5v) at (#1a5) {};
  \node[overlay/vnode] (#1b1v) at (#1b1) {};
  \node[overlay/vnode] (#1b2v) at (#1b2) {};
  \node[overlay/vnode] (#1b3v) at (#1b3) {};
  \node[overlay/vnode] (#1b4v) at (#1b4) {};
  \node[overlay/vnode] (#1b5v) at (#1b5) {};
  \node[overlay/vnode] (#1b6v) at (#1b6) {};
  \node[overlay/vnode] (#1b7v) at (#1b7) {};
  \node[overlay/vnode] (#1b8v) at (#1b8) {};
  \node[overlay/vnode] (#1b9v) at (#1b9) {};
  \node[overlay/vnode] (#1b10v) at (#1b10) {};
  \node[overlay/anode] (#1u1v) at (#1u1) {};
  \node[overlay/anode] (#1u2v) at (#1u2) {};
  \node[overlay/anode] (#1u3v) at (#1u3) {};
  \node[overlay/anode] (#1u4v) at (#1u4) {};
}

\newcommand{\overlaypanelleft}{%
  \begin{tikzpicture}[x=1cm,y=1cm,scale=1.00,transform shape,font=\footnotesize\sffamily,line cap=round,line join=round]
    \overlaypanelsetup{L}
    \fill[DeltaFill, rounded corners=1.2mm] ($(La5)+(-0.22,-0.22)$) rectangle (1.24,0.52);
    \draw[DeltaEdge, rounded corners=1.2mm, line width=0.50pt, densely dashed]
      ($(La5)+(-0.22,-0.22)$) rectangle (1.24,0.52);
    \node[font=\scriptsize, text=DeltaEdge] at (1.06,0.36) {$\delta$};
    \draw[overlay/localA] (La2) -- (La1);
    \draw[overlay/localA] (La2) to[bend right=6] (La3);
    \draw[overlay/localA] (La1) -- (La4);
    \draw[overlay/localA] (La3) -- (La5);
    \draw[overlay/localA] (La4) to[bend right=6] (La5);  %
    \draw[overlay/localB] (Lb1) -- (Lb2);
    \draw[overlay/localB] (Lb2) to[bend left=6] (Lb3);
    \draw[overlay/localB] (Lb3) -- (Lb4);
    \draw[overlay/localB] (Lb2) -- (Lb6);
    \draw[overlay/localB] (Lb6) to[bend right=6] (Lb5);
    \draw[overlay/localB] (Lb6) -- (Lb7);
    \draw[overlay/localB] (Lb5) -- (Lb8);
    \draw[overlay/localB] (Lb8) to[bend right=6] (Lb9);
    \draw[overlay/localB] (Lb7) -- (Lb10);
    \draw[overlay/localB] (Lb3) to[bend right=8] (Lb7);  %
    \draw[overlay/localB] (Lb9) to[bend right=6] (Lb10); %
    \draw[overlay/localB] (Lb1) to[bend right=10] (Lb5); %
    \overlaypaneltopnodes{L}
    \draw[overlay/advlink] (La2v.north) to[out=84,in=-96] (Lu1v.south);
    \draw[overlay/advlink] (La4v.north) to[out=40,in=-148] (Lu2v.south west);
    \draw[overlay/advlink] (Lb4v.north) to[out=88,in=-44] (Lu3v.south east);
    \draw[overlay/advlink] (Lb7v.north) to[out=82,in=-80] (Lu4v.south);
    \overlayflag{La1v}
    \overlayflag{La2v}
    \overlayflag{La4v}
    \overlayflag{La3v}
  \end{tikzpicture}%
}

\newcommand{\overlaypanelright}{%
  \begin{tikzpicture}[x=1cm,y=1cm,scale=1.00,transform shape,font=\footnotesize\sffamily,line cap=round,line join=round]
    \overlaypanelsetup{R}
    \draw[overlay/localA] (Ra2) -- (Ra1);
    \draw[overlay/localA] (Ra2) to[bend right=6] (Ra3);
    \draw[overlay/localA] (Ra1) -- (Ra4);
    \draw[overlay/localA] (Ra3) -- (Ra5);
    \draw[overlay/localA] (Ra4) to[bend right=6] (Ra5);  %
    \draw[overlay/localB] (Rb1) -- (Rb2);
    \draw[overlay/localB] (Rb2) to[bend left=6] (Rb3);
    \draw[overlay/localB] (Rb3) -- (Rb4);
    \draw[overlay/localB] (Rb2) -- (Rb6);
    \draw[overlay/localB] (Rb6) to[bend right=6] (Rb5);
    \draw[overlay/localB] (Rb6) -- (Rb7);
    \draw[overlay/localB] (Rb5) -- (Rb8);
    \draw[overlay/localB] (Rb8) to[bend right=6] (Rb9);
    \draw[overlay/localB] (Rb7) -- (Rb10);
    \draw[overlay/localB] (Rb3) to[bend right=8] (Rb7);
    \draw[overlay/localB] (Rb9) to[bend right=6] (Rb10);
    \draw[overlay/localB] (Rb1) to[bend right=10] (Rb5);
    \overlaypaneltopnodes{R}
    \draw[overlay/advlink] (Ra2v.north) to[out=84,in=-96] (Ru1v.south);
    \draw[overlay/advlink] (Ra5v.north) to[out=72,in=-148] (Ru2v.south west);
    \draw[overlay/advlink] (Rb4v.north) to[out=88,in=-44] (Ru3v.south east);
    \draw[overlay/advlink] (Rb7v.north) to[out=82,in=-80] (Ru4v.south);
    \draw[overlay/cross] (Ra3v.east) to[bend right=8] (Rb9v.west);
    \draw[overlay/cross] (Ra4v.east) to[bend right=10] (Rb4v.west);
    \draw[overlay/cross] (Ra2v.east) to[bend left=12] (Rb2v.west);
  \end{tikzpicture}%
}

\begin{figure}[t]
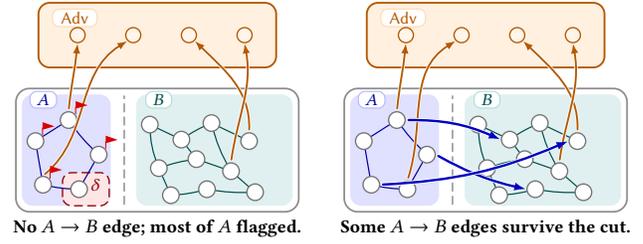

  \centering
  \captionsetup[subfigure]{labelformat=empty, justification=centering, font=footnotesize, skip=2pt}
  \begin{subfigure}[t]{0.485\linewidth}
    \centering
    \overlaypanelleft
    \caption{No \(A\to B\) edge; most of \(A\) flagged.}
  \end{subfigure}\hfill
  \begin{subfigure}[t]{0.485\linewidth}
    \centering
    \overlaypanelright
    \caption{Some \(A\to B\) edges survive the cut.}
  \end{subfigure}
  \caption{Overlay resistance across an honest
  cut \((A,B)\), with \(A\) the smaller side. Either the adversary removes all
  cross-cut overlay edges and most nodes in \(A\) raise the attack-detection
  flag, or some sampled \(A\to B\) edges survive, so the overlay is not
  partitioned across the cut.}
  \label{fig:overlay}
\end{figure}

The proof (see \cref{fig:overlay}) quantifies over all cuts $(A,B)$ of the
honest nodes, where $A$ is the smaller side, and decomposes the bad event into
two parts: (1)~the \emph{heavy} event---too many nodes in $A$ have an unusually
large share of non-$B$ entries; and (2)~the \emph{disconnected} event---no node
in $A$ connects to $B$ in the overlay. Since nodes in $A$ expect
$|B|\cdot s/\sqrt{n}$ entries from $B$ in their slices, the adversary must fill
most tables to avoid triggering flags---but then many nodes hold $B$-entries,
and for large enough $\pconn$ some form cross-cut overlay connections.

The attacker is free to choose the worst-case cut; knowing a node's slice, it
can always place the entire slice on one side, so some fraction of nodes are
trivially kept from flagging. The tolerance $\delta$ captures this unavoidable
fraction. However, even a successful attack that was not detected is
soon noticed: in the next round nodes sample fresh slices, and the partition is
detected with high probability.

\newcommand{\stmtOverlay}{%
  In the presence of an omniscient overlay adversary (\cref{def:omniscient-overlay}),
  let $\delta \in (0,1)$ be a fixed constant, and suppose the
  expected overlay degree satisfies
  \[
    \pconn s\sqrt{n} \geq C n^\kappa \log n,
  \]
  for some $\kappa \in [0, 1/2)$ and a constant $C > 2/(\delta\varepsilon)$,
  where $\varepsilon = (1-\alpha)/6$.
  Then with probability at least
  \[
    1 - e^{-\Theta(\sqrt n)} - n^{-\Theta(n^\kappa)},
  \]
  every cut $(A,B)$ that disconnects honest nodes in the overlay
  gets detected by the attack-detection mechanism:
  at least a $(1-\delta)$-fraction of nodes in $A$
  raises the attack-detection flag.
}
\begin{theorem}\label{thm:overlay}
  \stmtOverlay
\end{theorem}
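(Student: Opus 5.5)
The plan is a union bound over cuts, after splitting each bad cut into the two events sketched before the statement. Fix a cut $(A,B)$ of the honest nodes with $A$ the smaller side, so $|A|\le |\Honset|/2$. For $i\in A$ let $S_i$ be its overlay slice, which is fixed by $\nonceoverlay$ and cannot be altered by the adversary. Write $X_i=|S_i\cap B|$ for the number of $B$-entries in it and $Y_i=|S_i\setminus B|$ for the rest. A node that receives no $B$-record can collect at most $Y_i$ distinct records, so it avoids the flag only if $Y_i\ge \theta s\sqrt n$. Call $i$ \emph{heavy} if $Y_i\ge \theta s\sqrt n$. Since $|V_A|\le \varphi n$ (the counting behind \cref{lem:partition-size}), $\E[Y_i]\le \varphi s\sqrt n$. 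With $\theta=(5+\alpha)/6$ and $\varphi=(1+\alpha)/2$, the gap is $\theta-\varphi=(1-\alpha)/3=2\varepsilon$, so
\[
\Pr[i\text{ heavy}]\le e^{-c\,s\sqrt n}
\]
by the same Chernoff computation as in \cref{lem:attack-completeness}.

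\emph{Heavy event.} Let $\hv$ be the event that more than $\delta|A|/2$ nodes of $A$ are heavy. The slices of distinct nodes use independent nonces, so heavy-ness is independent across nodes. The probability that a given set of $k=\delta|A|/2$ nodes is all heavy is therefore $e^{-c k s\sqrt n}$. Summing over cuts with $|A|=a$ and over subsets gives at most $\binom{n}{a}\binom{a}{k}e^{-c k s\sqrt n}$. Each binomial is at most $e^{O(a\log n)}$, which is dominated by $e^{-\Omega(a\sqrt n)}$. Summing over $a$ yields $e^{-\Theta(\sqrt n)}$.

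\emph{Disconnected event.} On $\neg\hv$, suppose the cut is disconnected yet fewer than $(1-\delta)|A|$ nodes flag. Then at least $\delta|A|$ nodes are unflagged, and at least $\delta|A|/2$ of those are non-heavy. A non-heavy unflagged node must have received at least $\theta s\sqrt n-Y_i\ge 1$ records from $B$. I will push this further: because non-heavy means $Y_i$ sits near its mean $\le\varphi s\sqrt n$ (I may refine the definition of heavy to $Y_i\ge(\theta-\varepsilon)s\sqrt n$ to make this clean), such a node holds at least $\varepsilon s\sqrt n$ $B$-entries in $\otable$. Each entry is independently chosen as an overlay neighbor with probability $\pconn$, using randomness the adversary does not control at selection time. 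The probability that no $B$-entry of such a node is selected is at most $(1-\pconn)^{\varepsilon s\sqrt n}\le e^{-\varepsilon C n^\kappa\log n}$. Over $\delta|A|/2$ such nodes the bound becomes
\[
n^{-\varepsilon C\delta |A| n^\kappa/2}.
\]

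\emph{Union bound.} There are at most $\binom{n}{|A|}\le n^{|A|}$ cuts of a given size. The condition $C>2/(\delta\varepsilon)$ makes the exponent $\varepsilon C\delta/2>1$, which beats the $n^{|A|}$ count. Summing over $|A|\ge1$ gives $n^{-\Theta(n^\kappa)}$.

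The main obstacle is the adaptivity of the omniscient adversary. It chooses the contents of $\otable$ and the cut after seeing the slices, and possibly after seeing the overlay coins, so I cannot fix a cut first and then apply independence naively. My plan is to define both events purely in terms of the slices and the $\pconn$-coins over \emph{all potential} $B$-entries. The bad event for a cut then becomes monotone: the adversary can only delete delivered entries. That way the union bound over all $(A,B)$ and all witness subsets covers every adversarial strategy. I expect that making the heavy threshold and the $\varepsilon$-margin consistent with the stated constant $2/(\delta\varepsilon)$ will need some careful bookkeeping.
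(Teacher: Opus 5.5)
Your main line is the paper's proof. It uses the same split into a heavy event and a disconnection event on its complement:
\begin{enumerate}
\item \emph{Heavy event.} Apply Chernoff to the non-$B$ part $Y_i$ of each slice, use independence across nodes for a fixed cut, and union-bound over cuts, giving $e^{-\Theta(\sqrt n)}$.
\item \emph{Disconnection event.} There are at least $\delta|A|/2$ unflagged non-heavy nodes. Each holds at least $\varepsilon s\sqrt n$ $B$-entries and misses $B$ with probability at most $e^{-\pconn\varepsilon s\sqrt n}$. The condition $C>2/(\delta\varepsilon)$ then beats the $n^{|A|}$ count of cuts.
\end{enumerate}
The refinement you hedge on is mandatory, not cosmetic. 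With the heavy threshold at $\theta s\sqrt n$, a non-heavy unflagged node is guaranteed only a single $B$-entry, and $(1-\pconn)^{\delta|A|/2}$ cannot beat $\binom{|\Honset|}{|A|}$. The paper's $A$-heavy threshold is exactly $(\theta-\varepsilon)s\sqrt n$. There the Chernoff gap is $\theta-\varepsilon-\varphi=\varepsilon$ (not your $2\varepsilon$), and the per-node bound is $\exp\bigl(-\varepsilon^2 s\sqrt n/(1+\alpha+\varepsilon)\bigr)$.

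The adaptivity plan in your last paragraph is wrong and should be discarded. The bad event is not monotone in the delivered set. Deleting $B$-entries helps ``no $B$-entry is selected'' but destroys ``at least $\varepsilon s\sqrt n$ $B$-entries are present''.

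No event defined from slices and coins alone can work. Suppose the adversary could see the $\pconn$-coins before filling $\otable$. It would deliver to each $i\in A$ only the $B$-entries whose coin is $0$. That is almost all of them when, e.g., $\pconn s\sqrt n=Cn^\kappa\log n$, so $\pconn\to0$. Then $i$ stays unflagged with no overlay edge to $B$, and the theorem is simply false. Correspondingly, union-bounding over which $\varepsilon s\sqrt n$-subsets of $S_i\cap B$ are delivered costs $e^{\Theta(s\sqrt n)}$ against a gain of only $e^{-\Theta(n^\kappa\log n)}$.

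What actually makes the argument work, and what the paper uses implicitly, is the order in which randomness is revealed:
\begin{enumerate}
\item The adversary's choices may depend arbitrarily on the slices. The union bound over cuts in the heavy step already absorbs this.
\item The contents of every $\otable$ are then fixed.
\item Only afterwards are the $\pconn$-coins drawn, independently of everything else.
\end{enumerate}
Conditioned on the tables, the flags and hence each cut's set of unflagged non-heavy witnesses are determined. You then bound each cut's disconnection probability directly over the fresh coins and union-bound over cuts only. No enumeration of witness subsets and no monotonicity argument is needed.
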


When $\kappa > 0$ the failure probability is negligible. The boundary case
$\kappa = 0$ gives polynomial decay $n^{-\Theta(1)}$, and the required
degree $\Omega(\log n)$ is tight up to constants by classical results on
random-graph connectivity.

\paragraph{Concrete overlay parameters.} \label{sec:parameter-concrete}

To see how the bound behaves at practical network sizes (below the asymptotic
regime), we run a Monte Carlo simulation of the cut attack
($n=10{,}000$, $s=4$, $\alpha=0.5$, $\delta=0.25$).
\Cref{fig:cut-attack-sim}(a) plots the per-cut success probability
as a function of cut size~$k$ for several overlay degrees~$m$; the dashed line
shows $\binom{|\Honset|}{k}$ (the number of cuts, relevant for a union bound).
For $m=50$ the decay dominates the combinatorial growth.
Since small cuts dominate the union bound (larger cuts contribute negligibly),
\cref{fig:cut-attack-sim}(b) focuses on isolating a single node ($k=1$) and
shows how the success probability decreases as $\theta$ increases, confirming
effective detection across a wide parameter range.
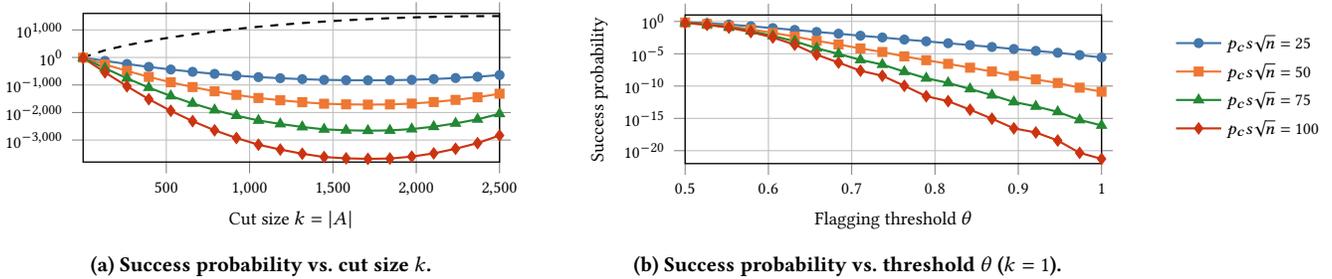
\begin{figure*}
    \centering
    \begin{subfigure}[b]{\awplotiiwidth}
        \centering
        \begin{tikzpicture}
            \begin{axis}[
                awplot,
                xlabel={Cut size $k = |A|$},
                xmin=1, xmax=2500,
                ymin=-3800, ymax=1600,
                width=\awplotiiwidth,
                height=\awplotheight,
                ytick={-3000,-2000,-1000,0,1000},
                yticklabel={$10^{\pgfmathprintnumber[fixed,precision=0]{\tick}}$},
            ]
                \addplot[color=plotA, thick, mark=*, mark size=1.5pt]
                    table[x=k, y=m25] {figures/cut-attack-data/cut_attack_k.dat};
                \addplot[color=plotB, thick, mark=square*, mark size=1.5pt]
                    table[x=k, y=m50] {figures/cut-attack-data/cut_attack_k.dat};
                \addplot[color=plotC, thick, mark=triangle*, mark size=1.8pt]
                    table[x=k, y=m75] {figures/cut-attack-data/cut_attack_k.dat};
                \addplot[color=plotD, thick, mark=diamond*, mark size=1.8pt]
                    table[x=k, y=m100] {figures/cut-attack-data/cut_attack_k.dat};
                \addplot[black, thick, dashed, mark=none]
                    table[x=k, y=logbinom] {figures/cut-attack-data/binom_ref.dat};
            \end{axis}
        \end{tikzpicture}
        \caption{Success probability vs.\ cut size $k$.}
        \label{fig:cut-attack-k}
    \end{subfigure}
    \hfill
    \begin{subfigure}[b]{\awplotiiwidth}
        \centering
        \begin{tikzpicture}
            \begin{axis}[
                awplot,
                xlabel={Flagging threshold $\theta$},
                ylabel={Success probability},
                xmin=0.5, xmax=1.0,
                ymin=-22, ymax=1,
                width=\awplotiiwidth,
                height=\awplotheight,
                ytick={-20,-15,-10,-5,0},
                yticklabel={$10^{\pgfmathprintnumber[fixed,precision=0]{\tick}}$},
                legend to name=cutlegend,
                legend columns=1,
                legend cell align=left,
            ]
                \addplot[color=plotA, thick, mark=*, mark size=1.5pt]
                    table[x=theta, y=m25] {figures/cut-attack-data/cut_attack_theta.dat};
                \addlegendentry{$\pconn s\sqrt{n}=25$}

                \addplot[color=plotB, thick, mark=square*, mark size=1.5pt]
                    table[x=theta, y=m50] {figures/cut-attack-data/cut_attack_theta.dat};
                \addlegendentry{$\pconn s\sqrt{n}=50$}

                \addplot[color=plotC, thick, mark=triangle*, mark size=1.8pt]
                    table[x=theta, y=m75] {figures/cut-attack-data/cut_attack_theta.dat};
                \addlegendentry{$\pconn s\sqrt{n}=75$}

                \addplot[color=plotD, thick, mark=diamond*, mark size=1.8pt]
                    table[x=theta, y=m100] {figures/cut-attack-data/cut_attack_theta.dat};
                \addlegendentry{$\pconn s\sqrt{n}=100$}

                \addplot[black, thick, dashed, mark=none, forget plot] coordinates {(0,0)};
                \addlegendentry{$\binom{|\Honset|}{k}$}
            \end{axis}
        \end{tikzpicture}
        \caption{Success probability vs.\ threshold $\theta$ ($k=1$).}
        \label{fig:cut-attack-theta}
    \end{subfigure}
    \hfill
    \raisebox{5.5em}{\ref{cutlegend}}
    \caption{Monte Carlo simulation of the cut attack ($n{=}10{,}000$,
    $s{=}4$, $\alpha{=}0.5$, $\delta{=}0.25$, $5{,}000$ trials).
    (a)~Attack success probability as a function of cut size~$k$ with
    $\theta{=}0.9$.  The dashed line shows the number of cuts
    $\binom{|\Honset|}{k}$; the attack must overcome this combinatorial
    factor for a union bound.
    (b)~Success probability for isolating a single node ($k{=}1$) as a
    function of the flagging threshold~$\theta$.  Higher~$\theta$ and
    larger overlay degrees~$\pconn s \sqrt{n}$ both sharply reduce the adversary's chances.}
    \label{fig:cut-attack-sim}
\end{figure*}

\subsection{Spam Prevention via Slashing}
\protocolname limits the number of requests a node can send in a round to $s\sqrt{n}$
through the commitment mechanism (\cref{sec:too-many-reqs}; \algline{alg:heartbeat}{line:hb-commit}). Without such a
limit, an adversarial node could spam the network with an arbitrarily large
number of $\request$ messages, leading to denial-of-service attacks or attempts
to overrepresent itself in peer tables. We introduce a slashing condition that
rate-limits each node's requests and penalizes nodes that violate this
condition.

\begin{definition}[slashing condition]
    A node violates the slashing condition in round $r$ if it sends requests to
    more than $s\sqrt{n}$ distinct peers in that round.
\end{definition}

If an adversarial node violates the slashing condition, there will be two
(potentially distinct) nodes that hear about commitments that, when combined,
reveal cryptographic evidence for slashing
(\alglines{alg:receive_new_record}{line:rec-dup-detect}{line:rec-recover}).
As we have seen in
\cref{sec:mean-field}, in a healthy network state, an honest node will
hear both commitments with high probability, leading to slashing of the
violator's stake.
\newcommand{\stmtStakeSecretRecovery}{%
    If a node
    $i$ violates the slashing condition in round $r$, then there exist two valid
    requests sent by $i$ in round $r$ such that the contents allow efficient
    recovery of $i$'s stake secret $\stakesk_i$.}
\begin{lemma}[stake secret recovery] \label{lem:stake-secret-recovery}
  \stmtStakeSecretRecovery
\end{lemma}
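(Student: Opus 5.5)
The plan is to argue in two stages: first that any violation forces two distinct batch commitments in the same round, and then that two valid shares under distinct commitments determine $\stakesk_i$ algebraically.

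\textbf{Step 1: pigeonhole.} Suppose $i$ sends requests to more than $s\sqrt{n}$ distinct peers in round $r$. Each request a responder accepts carries a $\commitrecord=\langle\reqcommit,\slashshare,\shareproof\rangle$ and an opening $\langle\ind,\indproof\rangle$. Acceptance requires that $\vecverify$ confirms the responder's $\netpk$ sits at position $\ind$ of $\reqcommit$ (\alglines{alg:send_response}{line:resp-verify-opening}{line:resp-verify-share}). A batch commitment covers a vector of length $s\sqrt{n}$, so its at most $s\sqrt{n}$ positions can certify at most $s\sqrt{n}$ distinct recipients. This bound relies on the binding property of the Merkle vector commitment: a PPT adversary cannot open one position to two different values except with negligible probability. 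Hence, by pigeonhole, among the requests to more than $s\sqrt{n}$ distinct peers there are two valid requests carrying $\reqcommit_1\neq\reqcommit_2$, both tagged with round $r$ (freshness is checked) and both signed under $\netpk_i$.

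\textbf{Step 2: linear algebra.} By soundness of the ZK proof for \Rshare, each valid share satisfies
\[
\slashshare_b=\hashshare(\sk_i,r)\cdot\reqcommit_b+\stakesk_i,\qquad b\in\{1,2\}.
\]
The key point is that the \emph{same} witness $\sk_i$ is used in both relations, because \Rshare ties the witness to $\netpk_i$. Therefore $h=\hashshare(\sk_i,r)$ and $\stakesk_i$ are common to both equations. Working in the prime field over which shares are defined, the system has determinant $\reqcommit_1-\reqcommit_2\neq0$. Solving gives
\[
h=\frac{\slashshare_1-\slashshare_2}{\reqcommit_1-\reqcommit_2},\qquad \stakesk_i=\slashshare_1-h\cdot\reqcommit_1,
\]
which is computable with a single field inversion. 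One can also confirm correctness by checking $\hashid(\stakesk_i)=\stakeID$ against the commitment.

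\textbf{Main obstacle.} The main difficulty is justifying that both shares use the same $h$ and $\stakesk_i$, rather than the adversary proving \Rshare under different stake secrets for the same $\netpk$. I would resolve this by appealing to knowledge soundness of the NIZK: extract witnesses from both proofs. Both must be consistent with $\netpk_i=\keygen(\sk)$ and with $\stakeID=\hashid(\hashstake(\sk))$ in the accepted $\stakeroot$. Distinct extracted $\sk$ values would then yield a $\keygen$ collision or a random-oracle collision, which happens with negligible probability. The statement should thus be read as holding except with negligible probability. A secondary subtlety is that $\reqcommit$ must be interpreted as a field element; if this map is via hashing, distinct commitments map to distinct elements except with negligible collision probability.
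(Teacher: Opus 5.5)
Your proposal is correct and takes essentially the same route as the paper: a pigeonhole argument over the at most $s\sqrt{n}$ committed recipient positions forces two requests in round $r$ with distinct commitments $C_A \neq C_B$, and the two shares are then two points on the line $P(x)=a_r x+\stakesk_i$, whose intercept is recovered by interpolation. The paper's proof leaves several assumptions implicit, and you make them explicit: binding of the vector commitment, knowledge soundness of the \Rshare proofs (together with ruling out $\keygen$ or hash collisions) so that both shares have the same slope and intercept, and the encoding of commitments as field elements.
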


Now we argue that the slashing mechanism satisfies soundness (honest nodes are
not slashed) and completeness (violators are slashed with high probability) in a
healthy network state.
\newcommand{\stmtSlashingSoundness}{%
    Let $i$ be an
    honest node that sends at most $s\sqrt{n}$ requests in round $r$. Then, the
    probability that a computationally-bounded adversary learns $i$'s stake
    secret $\stakesk_i$ is at most $\negl(\lambda)$.}
\begin{lemma}[slashing soundness] \label{lem:slashing-soundness}
  \stmtSlashingSoundness
\end{lemma}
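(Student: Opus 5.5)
We prove \cref{lem:slashing-soundness} by a reduction: any PPT adversary that outputs $\stakesk_i$ with non-negligible probability is turned into an algorithm that breaks one of the primitives of \cref{appendix:primitives}. We first list everything correlated with $\stakesk_i$ that the adversary sees in round $r$ and in other rounds.

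\begin{itemize}
\item $\stakeID_i = \hashid(\stakesk_i)$, which is on chain.
\item The zero-knowledge proofs $\stakeproof$ and $\shareproof$.
\item The shares $\slashshare = \hashshare(\sk_i,\roundnum)\cdot\reqcommit + \stakesk_i$.
\end{itemize}

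The key structural fact is that an honest node sending at most $s\sqrt{n}$ requests uses a single batch commitment $\reqcommit$ per round (\algline{alg:heartbeat}{line:hb-commit}). Every request it sends in round $r$ therefore carries the identical $\commitrecord$. The adversary thus sees at most one share per round, and each round uses a distinct mask $\hashshare(\sk_i,\roundnum)$.

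The argument proceeds as a sequence of hybrid games.

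\begin{enumerate}
\item Replace every ZKP ($\stakeproof$ and $\shareproof$, in all rounds) by a simulated proof. Zero-knowledge bounds the distinguishing gap by $\negl(\lambda)$.
\item Program the random oracle $\hashshare$. Since $\sk_i$ is uniform in $\{0,1\}^\lambda$ and the adversary's only remaining handle on $\sk_i$ is through oracle queries, it queries $\hashshare$ at $(\sk_i,\cdot)$ only with probability $\mathrm{poly}(\lambda)/2^\lambda$. The same holds for $\hashstake(\sk_i)$. This requires that $\netpk$ leaks nothing useful about $\sk$, and we take \keygen to be modelled as a random-oracle derivation from $\sk$.
\item In the hybrid where $h_r = \hashshare(\sk_i,r)$ is a fresh uniform field element for each round $r$, each share $h_r\cdot\reqcommit_r + \stakesk_i$ is uniform and independent of $\stakesk_i$ whenever $\reqcommit_r \neq 0$; it is a one-time pad. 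If the commitment hash can be $0$, this occurs with negligible probability.
\item After these steps the adversary's view depends on $\stakesk_i$ only through $\stakeID_i = \hashid(\stakesk_i)$. The value $\stakesk_i = \hashstake(\sk_i)$ is uniform, and in the random oracle model $\hashid$ is one-way. The adversary's success probability is therefore at most $q/2^\lambda$ plus the accumulated hybrid gaps, which is $\negl(\lambda)$.
\end{enumerate}

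We also rule out forged second shares. The adversary could try to produce a second valid share for node $i$ under some $\reqcommit' \neq \reqcommit$ in round $r$. Doing so requires a valid $\shareproof$ for a statement whose witness includes $\sk_i$. By soundness (knowledge soundness) of the proof system, this would yield $\sk_i$ itself, which the previous steps have already excluded. In addition, replaying requests only replays the same $\commitrecord$, so replay gives no new equation.

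The main obstacle is the second step, the cross-round and cross-proof leakage. Because $\stakeproof$, $\shareproof$, and $\netpk$ all depend on the same $\sk_i$, the hybrids must be carefully ordered: simulate all proofs first, and only then argue that $\sk_i$ is information-theoretically hidden apart from $\netpk$. This step needs the signature and key-derivation assumption that $\netpk$ reveals no oracle input. We would state that assumption explicitly, deriving $\netsk$ via a random oracle from $\sk$, and use it there.
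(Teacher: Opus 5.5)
Your argument is correct under the modelling you state, but it is organized differently from the paper's.

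\textbf{The paper's route.} It uses a three-way case split. The only $\stakesk_i$-dependent values in the view are $\stakeID_i$, $\netpk_i$ and the pair $(\reqcommit,\shareproof)$. So learning $\stakesk_i$ would require inverting $\hashid$, inverting $\keygen$, or extracting the witness from $\shareproof$. Each is excluded by the random-oracle, one-wayness, or zero-knowledge assumption, plus a union bound. This is short and assumes only one-wayness of $\keygen$.

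\textbf{What your route buys.} Your hybrid sequence is longer, but it makes the ``these are the only channels'' claim rigorous. It also fills a real hole in the paper's enumeration. Zero knowledge of $\shareproof$ says nothing about the public value $\slashshare = \hashshare(\sk_i,r)\cdot\reqcommit + \stakesk_i$ itself. Your one-time-pad step is what shows that one share per fresh mask leaks nothing. It also explains why a single commitment per round is the crux: a second equation under the same mask is exactly the attack of \cref{lem:stake-secret-recovery}. You also account for $\stakeproof$ and for leakage from other rounds, which the paper's proof does not mention.

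\textbf{Caveats.}
First, modelling $\keygen$ as a random oracle is stronger than necessary. The paper's assumptions suffice if step~2 is done by reduction rather than information-theoretically. Simulate the view from $\netpk_i$ alone: signatures via an EUF-CMA signing oracle, proofs via the simulator, and oracle values at $\sk_i$-points as fresh uniform values. Then scan the adversary's oracle queries for a preimage of $\netpk_i$; any hit yields a forgery.

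Second, step~3 genuinely needs $\hashshare(\sk_i,r)$ and $\stakesk_i$ to be (near-)uniform over all of $\mathbb{F}_q$, as a field-valued hash such as Poseidon provides. Under the paper's literal typing $H:\{0,1\}^*\to\{0,1\}^\lambda$ embedded in a much larger field (say $\lambda=128$ in a $254$-bit field), both unknowns of a share are short. A single share then pins $\stakesk_i$ down to a few candidates, which two-dimensional lattice reduction finds and $\stakeID_i$ confirms. So you should state this assumption explicitly.

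Third, your paragraph on forged second shares is superfluous: the hybrid bound already covers every algorithm that outputs $\stakesk_i$. As written it would also need simulation-extractability rather than plain knowledge soundness, because by that point the honest proofs have been simulated.
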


\newcommand{\stmtSlashingCompleteness}{%
    Assume that the network is in a healthy
    state (i.e.\ $\gamma$-healthy). If a node $i$ violates the
    slashing condition in round $r$ by sending requests under two distinct
    commitments, then $i$'s stake secret $\stakesk_i$ is recovered by some
    honest node with probability at least
    $1 - \exp\left(-\Omega(s^2)\right)$.}
\begin{corollary}[slashing completeness]
    \label{lem:slashing-completeness}
  \stmtSlashingCompleteness\footnote{In our simulations, detection occurs with overwhelming probability within a single round (\cref{fig:slashing}).}
\end{corollary}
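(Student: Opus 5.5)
By \cref{lem:stake-secret-recovery}, it suffices to show that, with the stated probability, some honest node ends up holding two valid requests from $i$ in round $r$ carrying distinct commitments $\reqcommit_1\neq\reqcommit_2$. Such a node then runs the linear solve from \cref{sec:too-many-reqs} and recovers $\stakesk_i$. I plan to reduce the statement to a dissemination argument for the two \commitrecord{s}.

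First, I would fix two requests $m_1,m_2$ that $i$ sends under $\reqcommit_1$ and $\reqcommit_2$, addressed to recipients $u_1$ and $u_2$. If some recipient of a $\reqcommit_1$-request and some recipient of a $\reqcommit_2$-request is the same honest node, we are done immediately. So the main case is when the two batches reach disjoint honest recipients. Each recipient $u$ validates the attached \netrecord and stores the \commitrecord inside $i$'s \peerrecord (\algline{alg:send_response}{line:resp-recv}). From then on the record, with its commitment list, travels with $i$'s \peerrecord through heartbeat responses. I would argue that at least one recipient in each batch is honest: a violator must contact more than $s\sqrt{n}$ distinct peers, and only an $\alpha$ fraction of peers are adversarial. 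Even in the worst case, where $i$ sends just one request under the second commitment, the adversary cannot benefit by choosing an adversarial recipient, because then that commitment never reaches an honest node. I would state this assumption explicitly, namely that both commitments reach at least one honest node each; otherwise there is nothing for an honest node to detect.

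Next, I would use $\gamma$-healthiness (\cref{def:network_states}) to analyse propagation. Take any honest node $j$ whose slice contains $\netpk_i$; there are about $s\sqrt{n}(1-\alpha)$ such nodes in expectation. In the round after $r$, $j$ queries its roughly $s\sqrt{n}$ peers in \ptable. The \peerrecord of $i$ held by $u_1$, now augmented with $\commitrecord_1$, reaches $j$ if $u_1$ is among $j$'s responders, or more generally if some node that has already merged $u_1$'s copy responds to $j$. The same holds for $u_2$. Following the visibility recurrence of \cref{sec:mf-visibility}, the number of honest holders of each commitment grows by a factor of about $R_0=s^2(1-\alpha)$ per round. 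Within $O(1)$ rounds, each commitment is therefore held by about $s\sqrt{n}$ honest nodes, which is a constant fraction of the $\netpk_i$-slice holders. I would then bound the probability that a given slice holder $j$ collects both commitments, and require only that at least one of them does.

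The final step is to get the $1-\exp(-\Omega(s^2))$ bound. For a fixed honest $j$, each of its $\Theta(s\sqrt{n})$ honest responders holds $\commitrecord_1$ independently with probability about $s/\sqrt{n}$, as in the mean-field visibility equilibrium. So the probability that $j$ misses $\commitrecord_1$ is $(1-s/\sqrt{n})^{\gamma s\sqrt{n}}\le e^{-\gamma s^2}$, and the same bound holds for $\commitrecord_2$. A union bound gives failure probability at most $2e^{-\gamma s^2}=\exp(-\Omega(s^2))$ for a single honest node, and that one node is all we need.

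I expect the main obstacle to be justifying independence and the holder counts rigorously rather than in the mean-field sense. The responders' tables are correlated, and the adversary may try to time its two batches so that they reach only a few honest nodes. My first attempt would be a Chernoff bound over the PRNG-determined slice memberships, which are independent across identities given random seeds. I would condition on $\gamma$-healthiness supplying at least a $\gamma$ fraction of slice records, and treat the propagation from $u_1$ and $u_2$ as a one-hop argument. That is, I would compute the probability that $u_1$ or $u_2$ itself appears among $j$'s responders for some $j$, summed over the slice, and accept a constant loss in the exponent.
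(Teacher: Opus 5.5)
Your proposal is sound at the same mean-field level of rigor as the paper's argument, and it shares the paper's skeleton. Both reduce to \cref{lem:stake-secret-recovery}, have each commitment stored by an honest recipient, let gossip spread both versions of $i$'s \peerrecord, and finish with an overlap estimate of order $\exp(-\Theta(s^2))$. Making ``both commitments reach an honest node'' an explicit hypothesis is the right call. The paper says this follows from pigeonhole, but nothing stops the second batch from going only to adversarial peers.

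The routes differ in the last step. The paper asserts that after one round each version is held by $\gamma s\sqrt{n}$ honest nodes. It then treats the two holder sets as independent uniform subsets of the $(1-\alpha)n$ honest nodes and bounds their disjointness by $\exp(-\gamma^2 s^2/(1-\alpha))$. You instead fix one collector $j$ whose slice contains $\netpk_i$ and union-bound its chance of missing either version among its $\gamma s\sqrt{n}$ responders, getting $2e^{-\gamma s^2}$. Your version follows the actual detection path, where one node merges both \commitrecord{s} in \textsc{ReceiveNewRecord}. It also respects that responders return $i$'s record only to requesters whose slice contains $\netpk_i$, which the uniform-subset model ignores. The paper's version is symmetric and needs no distinguished collector. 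The two exponents agree up to constants.

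Two cautions. First, growth by a factor of about $R_0=s^2(1-\alpha)$ per round from a single honest holder reaches $\Theta(s\sqrt{n})$ holders only after roughly $\log(s\sqrt{n})/\log R_0$ rounds, not $O(1)$. This does not hurt the corollary, which has no time bound, but you should drop the $O(1)$ claim; the paper's ``after one round'' assertion has the same problem.

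Second, and more important, the one-hop rigorization you sketch at the end cannot give $\exp(-\Omega(s^2))$ in the worst case. Each commitment may reach only a single honest recipient $u_1$, $u_2$, since the $s\sqrt{n}+1$ recipients may be mostly adversarial. Then the expected number of honest slice-holders of $\netpk_i$ that query both $u_1$ and $u_2$ is about $s\sqrt{n}\cdot(s/\sqrt{n})^2=s^3/\sqrt{n}\to 0$. One hop suffices only once one version has already reached the equilibrium fraction $s/\sqrt{n}$. At that point $u_2$ hands its version to about $R_0$ slice-holders of $\netpk_i$ per round (none, with probability roughly $e^{-R_0}$). Each of those also collects the other version with probability roughly $1-e^{-\gamma s^2}$. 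So the multi-round spreading step cannot be bypassed, and it is where a rigorous argument would have to do its real work.
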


\subsection{Stake and Connection Privacy}
Lastly, we analyze the privacy properties of the \protocolname protocol. We
achieve two key privacy properties:
\begin{itemize}
	\item \emph{Stake Anonymity:} A node's stake identity $\stakeID$ cannot be
	linked to its network identity $\netpk$ from the message the node sends in a
	round, provided the per-round request limit is respected.
	\item \emph{Connection privacy:} An adversary cannot determine which subset of
	peers a given node retains for constructing its overlay table $\otable$ in a
	given round.
\end{itemize}
\subsubsection{Stake Anonymity}
We define stake anonymity via an indistinguishability game: observing a round
of requests from $\netpk$ should not reveal which $\stakeID$ backs it as long as the requester did not generate more than one commitment to his set of requests.

\begin{definition}[Stake Anonymity]
Let $\stgame$ be defined as follows for some fixed round $r$:
\begin{enumerate}
	\item The challenger samples keys for $n$ nodes $\{\sk\}$ and computes the
  corresponding $\stakesk$, $\stakeID$, and $(\netsk,\netpk)$.
	\item The adversary $\mathcal{A}$ is given the set of stake identities
	$\{\stakeID\}$ and selects two distinct ones $\stakeID_0$, $\stakeID_1$.
	\item The challenger samples a bit $b \leftarrow \{0,1\}$ and selects
  $\stakeID_b$ with corresponding $\sk_b$, $\stakesk_b$, and $\netpk_b$.
	\item The challenger runs one execution of round $r$ for a node with secret
	key $\sk_b$ among $n$ nodes, generating all messages sent by that node in that
	round.
	\item $\mathcal{A}$ observes the transcript of messages and outputs a bit
	$b'$.
\end{enumerate}
Define the advantage
\(
  \stadv := \left| \Pr[b' = b] - \frac{1}{2} \right|.
\)\\
\protocolname satisfies \textbf{stake anonymity} if for all computationally
bounded adversaries $\mathcal{A}$: $\stadv \leq \negl(\lambda)$.
\end{definition}

\newcommand{\stmtStakeAnon}{%
  If the protocol limits
	requests to $s\sqrt{n}$ per round, the \protocolname protocol satisfies stake
	anonymity.}
\begin{lemma}[stake anonymity] \label{lem:stake-anon}
  \stmtStakeAnon
\end{lemma}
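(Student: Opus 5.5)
I will prove stake anonymity by a standard hybrid argument. The first hybrid is the real transcript for $\sk_b$. The last hybrid is a transcript whose distribution does not depend on $b$. I then bound the distance between consecutive hybrids by the security of the underlying primitives, modeling $\hashstake$, $\hashid$, $\hashshare$ as random oracles.

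The first step is to list everything the challenger outputs in round $r$ for the node with key $\sk_b$. This is one fresh $\netrecord$ $\langle \netpk_b, \stakeroot, \stakeproof, \langle\addr,\timestamp\rangle_\sig\rangle$. It also includes the signed nonces $(\nonce,\nonceoverlay)$. Finally, because the per-round limit holds, there is a \emph{single} $\commitrecord = \langle \reqcommit, \slashshare, \shareproof\rangle$ together with the openings $\langle\ind,\indproof\rangle$ attached to each of the at most $s\sqrt{n}$ requests.

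The hybrids then proceed as follows.
\begin{enumerate}
\item Replace every $\stakeproof$ and $\shareproof$ by simulated proofs, using the zero-knowledge simulator for $\Rstake$ and $\Rshare$. The simulator needs only the public statements, so the distinguishing gap is $\negl(\lambda)$.
\item Argue that $\netpk_b$ is independent of $\stakeID_b$ given $\mathcal{A}$'s view. Both are derived from the uniform $\sk_b$, with $\stakeID_b = \hashid(\hashstake(\sk_b))$ and $(\netsk_b,\netpk_b)=\keygen(\sk_b)$. Since $\hashstake$ is a random oracle and $\sk_b$ has $\lambda$ bits of min-entropy, $\stakesk_b$ is uniform and independent of $\netpk_b$ unless $\mathcal{A}$ queries the oracle on $\sk_b$. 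That happens with probability $\negl(\lambda)$, because recovering $\sk_b$ from $\netpk_b$ and signatures would break $\keygen$, and EUF-CMA implies this is hard. So I replace $\netpk_b$ with the public key of a freshly sampled independent secret.
\item Handle the share. Since $\mathcal{A}$ sees only one commitment in round $r$, the value $\slashshare = \hashshare(\sk_b,r)\cdot\reqcommit + \stakesk_b$ has a uniformly random mask $\hashshare(\sk_b,r)$, an unqueried random-oracle output, multiplying a nonzero $\reqcommit$. Assuming a field with $\reqcommit\neq 0$, the share is uniform and independent of $\stakesk_b$, so I replace it with a uniform field element. This acts as a one-time pad.
\item Observe that the signed nonces, addresses, $\reqcommit$, and its openings depend only on fresh randomness and the recipient list. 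None of them involves $\stakeID_b$.
\end{enumerate}
In the final hybrid, the transcript is generated from a fresh key and uniform values and is identically distributed for $b=0,1$. Summing the gaps gives $\stadv \le \negl(\lambda)$.

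The main obstacle is step 3 together with its interaction with the zero-knowledge step. The one-time-pad argument relies entirely on there being only one affine share per round. I must check that different rounds use independent masks $\hashshare(\sk,r)$, which holds because the random oracle is evaluated on distinct inputs. I must also check that the simulated $\shareproof$ does not need a consistent witness. To handle this, I will order the hybrids so that the proofs are simulated first, since after that the share no longer has to be consistent with any witness. I also need to confirm that the opening proofs and the vector commitment leak nothing about $\stakeID_b$. The commitment is over recipients' $\netpk$s, which are not secret here, so I expect this to be routine.
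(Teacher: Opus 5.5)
Your proposal is correct and follows the same outline as the paper's sketch: simulate the zero-knowledge proofs, use the per-round limit to ensure only a single $(\reqcommit,\slashshare)$ pair is revealed, and conclude that this pair says nothing about $\stakesk_b$. Your version is more complete than the paper's in three places. The paper simulates only $\shareproof$, whereas you also simulate $\stakeproof$; the paper leaves the unlinkability of $\netpk_b$ from $\stakeID_b$ implicit, whereas your random-oracle/EUF-CMA step proves it; and the paper supports the share's secrecy by citing slashing soundness, which only says $\stakesk_b$ cannot be computed rather than that the view is independent of it, whereas your one-time-pad step gives that independence.
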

\subsubsection{Connection Privacy}
We define connection privacy via an indistinguishability game between an
adversary and a challenger. In this section we analyze with the worst-case
assumption that the adversary can see all gossip tables while the challenger
node constructs its overlay table.

\begin{definition}[Connection Privacy]
Fix a round $r$ and a node $i$ in a protocol execution with $n$
nodes. Let $\cgame$ be defined as follows:
\begin{enumerate}
	\item The adversary $\mathcal{A}$ outputs two candidate overlay nonces
  $\nonceoverlay_0, \nonceoverlay_1$ that induce two distinct peer subsets $P_0
  \neq P_1$ from the set of peer records accessible by node $i$ in round $r$.
	\item The challenger samples bit $b \leftarrow \{0,1\}$ and runs one honest
	execution of round $r$ for node $i$ selecting exactly $P_b$ for the overlay
	table $\otable$.
	\item $\mathcal{A}$ observes the full transcript of messages exchanged in
	round $r$ in addition to the contents of all gossip tables
	$\ptable$.\footnote{Note that we assume the peer retrieval by node $i$ is done
	by full-table transmission and client-side filtering. Alternatively, TEEs can
	be used to reduce the communication overhead while still achieving connection
	privacy.} and outputs $b'$.
\end{enumerate}
Define the advantage
\(
  \cadv := \left| \Pr[b' = b] - \frac{1}{2} \right|.
\)\\
\protocolname satisfies \textbf{connection privacy} if for all adversaries
$\mathcal{A}$, $\cadv = 0$.
\end{definition}
The guarantee is information-theoretic (it holds against unbounded
adversaries) and is achieved under either full-table transmission
(each responder sends its entire $\ptable$ and the requester filters
locally) or TEE-based filtering (\cref{rem:tee-privacy}).
Server-side filtering without a TEE reveals the selection.

\newcommand{\stmtConnPrivacy}{%
  \protocolname
	satisfies connection privacy under full-table transmission, where each node
	response returns a complete $\ptable$ to requesting nodes, and nodes perform
	selection locally without revealing their nonce value $\nonceoverlay$.}
\begin{lemma}[connection privacy] \label{lem:conn-privacy}
  \stmtConnPrivacy
\end{lemma}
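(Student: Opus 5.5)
We argue that the adversary's view in $\cgame$ is a deterministic function of quantities whose distribution does not depend on $b$. The two worlds are then identically distributed, and every adversary, even an unbounded one, has advantage exactly $0$. The plan has three steps.

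\textbf{Step 1: separate the round's execution.} We split node $i$'s execution of round $r$ into an \emph{externally visible} part and a \emph{local} part. The externally visible part consists of the messages $i$ sends: its \netrecord, the signed gossip seeds, and the \commitrecord with openings. It also includes the full-table responses $i$ receives and the responses $i$ serves to others. The local part is the filtering of the received records by $\nonceoverlay$ into \otable, followed by the \pconn-sampling of overlay neighbors.

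Under full-table transmission, the responder returns its entire \ptable regardless of the requester's seeds. So the responses $i$ receives are fixed by the other nodes' gossip tables and the adversary's delivery choices, and neither of these depends on $b$. Node $i$'s own outgoing requests depend only on $\nonce$, $\sk$, the round number, and the recipient list drawn from \ptable. Crucially, $\nonceoverlay$ is never transmitted under client-side filtering. We must check this against \cref{alg:heartbeat}: the signed nonce pair is sent only in the TEE variant, and in the full-table variant only $\nonce$ (or nothing) leaves the node.

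\textbf{Step 2: gossip state is independent of $b$.} Responses that $i$ serves come from \ptable, which is populated by the $\nonce$-slice criterion. The membership test in \algline{alg:receive_new_record}{line:rec-insert-ptable} uses only $\nonce$, and records admitted solely through the $\nonceoverlay$ test go to \otable only (\algline{alg:receive_new_record}{line:rec-insert-otable}). Hence the gossip tables that the adversary is additionally allowed to see are identical in both worlds. The same holds for the eviction rule, which uses the $\PRNG$ score of the relevant table's own seed.

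\textbf{Step 3: conclude by coupling.} We couple the two executions using the same randomness for everything except the choice $P_b$. Under this coupling the transcript and all \ptable contents coincide pointwise. It follows that $\Pr[b'=b]=1/2$ exactly.

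\textbf{Main obstacle.} The delicate point is Step 2's claim that no observable behavior leaks \otable. Two possible leaks need checking. First, whether $i$ later contacts overlay neighbors within round $r$: we take the game's transcript to cover only gossip messages of round $r$, before any overlay connections are opened. Second, whether table-size caps or \denylist handling couple the two tables. I would verify from the pseudocode that \otable is write-only with respect to all outgoing messages; that is the crux.
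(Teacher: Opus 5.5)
Your route is the paper's: show that everything the adversary sees (the transcript and all $\ptable$ contents) is the same for $b=0$ and $b=1$, so $\cadv=0$. You are more careful than the paper in two places. You check that $\nonceoverlay$ never leaves $i$, and that $i$'s own $\ptable$ depends only on $\nonce$, both for insertion (\algline{alg:receive_new_record}{line:rec-insert-ptable}) and for eviction.

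The gap is the step you defer as the crux, and it fails for the pseudocode as written. In Step~2 you treat a response served by $i$ as a function of $\ptable$. But \cref{alg:send_response} returns $\langle Peers,\denylist\rangle$, and $\denylist$ is written inside \textsc{ReceiveNewRecord}. The merge at \algline{alg:receive_new_record}{line:rec-merge} folds $\otable[\netpk]$ into every incoming record whose key already appears in either table. The duplicate-commitment check at \algline{alg:receive_new_record}{line:rec-dup-detect} then runs on this merged record, appends a \slashproof{} to $\denylist$, and submits \textsc{Slash} on-chain. So $\otable$ is read, not write-only. It feeds three observable outputs: the $\denylist$ field of later responses, the refusal at \algline{alg:send_response}{line:resp-denylist}, and the on-chain transaction.

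This channel is exploitable. The adversary picks $\nonceoverlay_0,\nonceoverlay_1$, so it can choose one of its own nodes $j$ with $\PRNG[\nonceoverlay_0][\netpk_j] < s/\sqrt{n} \le \PRNG[\nonceoverlay_1][\netpk_j]$. It has $j$ produce two \commitrecord{s} for round $r$ under distinct $\reqcommit$ values, sacrificing $j$'s stake. Its responders then hand these to $i$ as two separate table entries.

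Unless $j$ already sits in one of $i$'s tables (e.g.\ in its $\nonce$-slice, probability about $s/\sqrt{n}$), the first entry is retained, in $\otable$ only, iff $b=0$. Hence the second entry triggers detection iff $b=0$. Reading $i$'s $\denylist$ in a later response, or watching for the \textsc{Slash}, gives advantage close to $1/2$. Your Step~3 coupling breaks on exactly this field.

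To close the argument you have two options. First, you can state as an explicit modeling assumption what your Step~1 already presumes: $\otable$ is produced by a purely local end-of-round filter that feeds nothing back into merging, duplicate detection, or any message. Second, you can change the protocol so that merging and duplicate detection consult only $b$-independent state, e.g.\ $\ptable$ or a log of all received commitment records. Under either option your Steps~1--3 go through verbatim. The paper's proof implicitly works in the first abstraction (its view is ``the request transcript and all tables except $i$'s newly constructed table'') and does not address this channel either.
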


\section{Evaluation}
\label{sec:eval}

To evaluate our protocol, we implemented a custom event-driven
simulator\footnote{Source code: \url{https://github.com/CedArctic/aetherweave_simulator}.}
as well as a full prototype implementation (\cref{sec:prototype}) and
conducted experiments across various scenarios. This section presents the simulation results,
demonstrating the protocol's stability under churn
and its security against adversarial manipulation.

\paragraph{Simulator implementation.}
We built a discrete-event simulator in Python (${\sim}2\text{k}$ LOC).
The simulator maintains a heap-based priority queue of events---heartbeats,
message deliveries, churn events, and metric snapshots---and advances
simulated time only when processing the next event.
Cryptographic operations (commitments, signatures, VRFs) are replaced by fast
hashing (\texttt{xxhash3}); this abstraction is sound because our analysis
depends on the statistical properties of peer selection, not on cryptographic
hardness.
The network is modeled as fully connected with uniform random message delays
($20$--$250$\,ms).
Each data point aggregates multiple independent runs with distinct random
seeds for statistical confidence.

\subsection{Network health}
We first establish the baseline performance of the protocol in maintaining
network connectivity and accommodating new participants.

\subsubsection{Churn resistance.}
We measure \emph{record correctness}: the fraction of peer records in honest
nodes' peer tables whose stored network address matches the peer's current
address. We model churn as a random process with a fixed rate $c$, where $c$
is the number of nodes that change their network address per round (each event
corresponds to a node updating its network address or leaving and re-joining).
We observe that record correctness degrades only marginally under high churn
rates (\cref{fig:churn}).

\subsubsection{Table quality.}
We define the \emph{table quality} of a node as the fraction of its
slice---the set of honest nodes selected by its nonce---that is
present in its peer table with a correct network address. We report the
average table quality across all honest nodes in the presence of churn and a
fixed fraction of \emph{silent} (non-responding) nodes, i.e., nodes that do not
respond to peer-discovery requests. As shown in \cref{fig:table-quality}, the
protocol maintains high-quality peer tables even under adverse conditions,
with only a slight decrease under high churn rates and a
reasonable fraction of silent nodes.
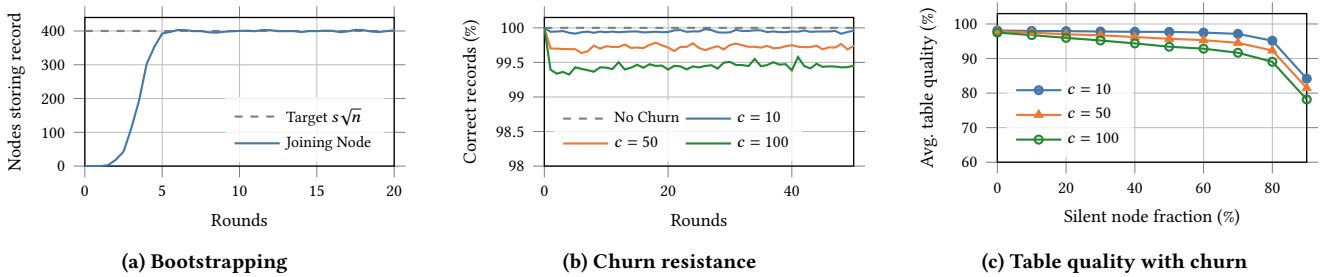
\begin{figure*}
    \centering
    \begin{subfigure}[b]{\awplotwidth}
        \centering
        \begin{tikzpicture}
            \begin{axis}[
                awplot,
                xlabel={Rounds},
                ylabel={Nodes storing record},
                xmin=0, xmax=20,
                ymin=0, ymax=440,
                width=\awplotwidth,
                height=\awplotheight,
                legend pos=south east,
                legend style={fill=white, fill opacity=0.8, text opacity=1},
            ]
                \addplot[dashed, thick, gray, domain=0:25] {400};
                \addlegendentry{Target $s\sqrt{n}$};
                \label{plt:target-s}

                \addplot [color=plotA, thick, mark=none]
                    table[x=round, y=appearances]
                    {large_experiments/data/bootstrap.txt};
                \addlegendentry{Joining Node};
                \label{plt:bootstrap-line}
            \end{axis}
        \end{tikzpicture}
        \caption{Bootstrapping}
        \label{fig:bootstrap}
    \end{subfigure}
    \hfill
    \begin{subfigure}[b]{\awplotwidth}
        \centering
        \begin{tikzpicture}
            \begin{axis}[
                awplot,
                xlabel={Rounds},
                ylabel={Correct records (\%)},
                xmin=0, xmax=50,
                ymin=98, ymax=100.15,
                width=\awplotwidth,
                height=\awplotheight,
                legend pos=south west,
                legend columns=2,
                legend style={fill=white, fill opacity=0.8, text opacity=1},
            ]
                \addplot [thick, gray, dashed, domain=0:50] {100};
                \addlegendentry{No Churn};

                \addplot [color=plotA, thick, mark=none]
                    table[x=time, y expr=\thisrow{success_rate}*100]
                    {large_experiments/data/churn_10.txt};
                \addlegendentry{$c=10$};
                \label{plt:churn-low}

                \addplot [color=plotB, thick, mark=none]
                    table[x=time, y expr=\thisrow{success_rate}*100]
                    {large_experiments/data/churn_50.txt};
                \addlegendentry{$c=50$};
                \label{plt:churn-med}

                \addplot [color=plotC, thick, mark=none]
                    table[x=time, y expr=\thisrow{success_rate}*100]
                    {large_experiments/data/churn_100.txt};
                \addlegendentry{$c=100$};
                \label{plt:churn-high}
            \end{axis}
        \end{tikzpicture}
        \caption{Churn resistance}
        \label{fig:churn}
    \end{subfigure}
    \hfill
    \begin{subfigure}[b]{\awplotwidth}
        \centering
        \begin{tikzpicture}
            \begin{axis}[
                awplot,
                xlabel={Silent node fraction (\%)},
                ylabel={Avg.\ table quality (\%)},
                xmin=0, xmax=90,
                ymin=60, ymax=103,
                xtick={0, 20, 40, 60, 80},
                width=\awplotwidth,
                height=\awplotheight,
                legend pos=south west,
                legend style={fill=white, fill opacity=0.8, text opacity=1},
            ]

                \addplot [color=plotA, thick, mark=*, mark size=1.5pt]
                    table[x expr=\thisrow{alpha}*100, y expr=\thisrow{quality}*100]
                    {large_experiments/data/table_quality_churn_10.txt};
                \addlegendentry{$c=10$};
                \label{plt:low-churn-line}

                \addplot [color=plotB, thick, mark=triangle*, mark size=1.5pt]
                    table[x expr=\thisrow{alpha}*100, y expr=\thisrow{quality}*100]
                    {large_experiments/data/table_quality_churn_50.txt};
                \addlegendentry{$c=50$};
                \label{plt:med-churn-line}

                \addplot [color=plotC, thick, mark=o, mark size=1.5pt]
                    table[x expr=\thisrow{alpha}*100, y expr=\thisrow{quality}*100]
                    {large_experiments/data/table_quality_churn_100.txt};
                \addlegendentry{$c=100$};
                \label{plt:high-churn-line}
            \end{axis}
        \end{tikzpicture}
        \caption{Table quality with churn}
        \label{fig:table-quality}
    \end{subfigure}

    \caption{Network health simulations ($n = 10{,}000$, $s\sqrt{n} = 400$).
    (a)~A joining node reaches its target representation within a few rounds.
    (b)~Record correctness remains above 99\% even under high churn ($c$ = address changes per round).
    (c)~Table quality degrades gracefully with silent (non-responding) nodes across churn rates.}
    \label{fig:dynamics}
\end{figure*}

\subsubsection{Bootstrapping.}
Next, we simulate the process of a new node joining and bootstrapping into the
system (\cref{fig:bootstrap}). We track the number of existing
participants that list the joining node in their peer tables after each
round. Consistent with our theoretical analysis, we observe that the node's
representation in the network converges rapidly to $s\sqrt{n}$.

\subsection{Adversarial Resilience}
\label{sec:eval:security}

We analyze the protocol's robustness against adversaries attempting to increase
their representation beyond their fair share (proportional to their stake). We
consider two distinct attack vectors: \emph{filtering} (censorship) and
\emph{oversampling} (protocol violation).

\subsubsection{The filtering attack.}
We first examine an adversary controlling $\alpha$ fraction of the total stake
who attempts to dominate the views of honest nodes. In this scenario,
adversarial nodes selectively "filter out" records from
honest nodes to amplify the relative frequency of adversarial records.

\cref{fig:filtering} demonstrates that even with significant stake, the
adversary fails to monopolize the network. Honest nodes remain
well-represented in the peer tables of their peers, preventing the adversary
from over-representing itself.

\subsubsection{The slashing mechanism.}
Finally, we evaluate the detectability of an adversary attempting to request
data from more than the permitted $s\sqrt{n}$ nodes per round. We simulate different
adversaries that send peer records to $k \cdot s\sqrt{n}$ targets per round for
$k = 2,3,4$ to accelerate their view construction.

As shown in \cref{fig:slashing}, detection is almost certain within one round.
Each curve corresponds to a different table-size scaling $s$ (where
$s\sqrt{n}$) and aggregates detection events across all three $k$ values,
showing that detection speed increases with $s$. The plot shows the cumulative
fraction of adversarial heartbeat rounds in which the adversary is detected.
This confirms that deviations from the protocol are detectable with high
probability and can be penalized via slashing.

\begin{figure*}
    \centering
    \begin{subfigure}[b]{\awplotiiwidth}
        \centering
        \begin{tikzpicture}
            \begin{axis}[
                awplot,
                xlabel={Adversary Ratio $\alpha$ (\%)},
                ylabel={Fraction of peer table (\%)},
                xmin=0, xmax=100,
                ymin=0, ymax=100,
                width=\awplotiiwidth,
                height=\awplotheight,
                xtick={0, 20, 40, 60, 80, 100},
                legend style = {at={(0.0,0.35)}, anchor=south west}%
            ]
                \addplot [color=plotA, thick, mark size=1.5pt, mark=triangle*]
                    table[x expr=\thisrow{alpha}*100, y expr=\thisrow{representation}*100]
                    {large_experiments/data/filtering_honest.txt};
                \addlegendentry{Honest};
                \label{plt:filtering-honest}

                \addplot [color=plotB, thick, mark size=1.5pt, mark=square*]
                    table[x expr=\thisrow{alpha}*100, y expr=\thisrow{representation}*100]
                    {large_experiments/data/filtering_adversary.txt};
                \addlegendentry{Adversary};
                \label{plt:filtering-adversary}

                \addplot [color=gray, thick, dashed, domain=0:100] {100-x};
                \addplot [color=gray, thick, dashed, domain=0:100] {x};
            \end{axis}
        \end{tikzpicture}
        \caption{Filtering ($n=10000$)}
        \label{fig:filtering}
    \end{subfigure}
    \hspace{2em}
    \begin{subfigure}[b]{\awplotiiwidth}
        \centering
        \begin{tikzpicture}
            \begin{axis}[
                awplot,
                xlabel={Rounds},
                ylabel={Cumulative detection probability (\%)},
                xmin=0, xmax=1,
                ymin=0, ymax=100,
                width=\awplotiiwidth,
                height=\awplotheight,
                legend pos=south east,
                legend style={fill=white, fill opacity=0.8, text opacity=1},
            ]
                \addplot [color=plotA, thick]
                    table[x=time, y expr=\thisrow{cdf}*100]
                    {large_experiments/data/slashing_rs1.txt};
                \addlegendentry{$s^2=1$};

                \addplot [color=plotB, thick]
                    table[x=time, y expr=\thisrow{cdf}*100]
                    {large_experiments/data/slashing_rs2.txt};
                \addlegendentry{$s^2=2$};

                \addplot [color=plotC, thick]
                    table[x=time, y expr=\thisrow{cdf}*100]
                    {large_experiments/data/slashing_rs4.txt};
                \addlegendentry{$s^2=4$};

                \addplot [color=plotD, thick]
                    table[x=time, y expr=\thisrow{cdf}*100]
                    {large_experiments/data/slashing_rs8.txt};
                \addlegendentry{$s^2=8$};

                \label{plt:slashing}
            \end{axis}
        \end{tikzpicture}
        \caption{Slashing ($n=10000$)}
        \label{fig:slashing}
    \end{subfigure}
    \vspace{0.5em}
    \caption{Adversarial resilience with $n = 10000$ and $s\sqrt{n} = 400$.
    (a) \emph{Filtering}: the fraction of honest nodes' address-table entries
    occupied by each class. Honest nodes remain well-represented
    (\ref{plt:filtering-honest}) compared to the adversary
    (\ref{plt:filtering-adversary}), even as the adversary ratio $\alpha$ increases.
    (b) \emph{Slashing}: cumulative detection probability (CDF) over time for an
    adversary sending requests to $k \cdot s\sqrt{n}$ targets per round, aggregated over
    $k = 2,3,4$. Each curve uses a different table-size scaling $s\sqrt{n}$,
    parameterized by $s^2$. Detection approaches $100\%$ rapidly; when $s^2 \ge 1$
    (\ref{plt:slashing}), the adversary is caught with high probability within a
    single round.}
    \label{fig:security}
\end{figure*}
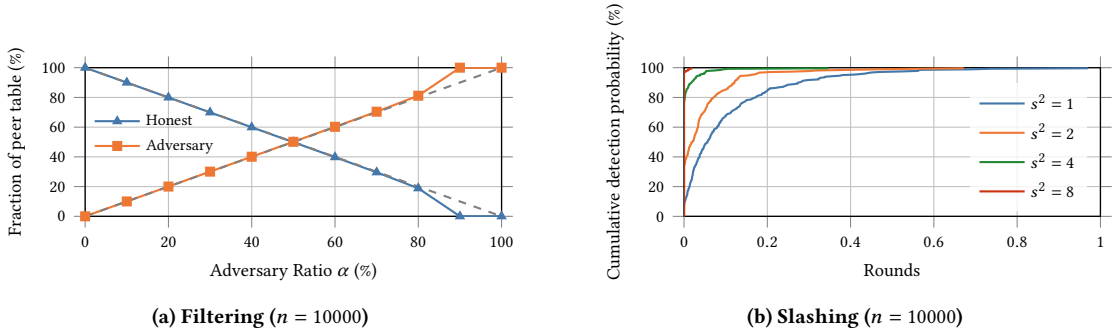

Having validated the protocol's properties in simulation, we now turn to a
prototype implementation that demonstrates its practicality on a real
blockchain platform.

\section{Prototype implementation} \label{sec:prototype}
To evaluate the practicality of \protocolname{}, we implement a full prototype
using three components: a staking smart contract on Ethereum,
ZKP circuits for stake and share verification, and a modified
consensus client that runs the protocol over libp2p.\footnote{Prototype source code: \url{https://github.com/CedArctic/aetherweave_prysm}. Staking contract: \url{https://github.com/CedArctic/aetherweave_contract}.}

Our prototype is built by forking Prysm~\cite{prysm}, a widely used
Go implementation of the Ethereum consensus client. We implement
\stakeproof and \shareproof as Circom~\cite{circom} circuits using the
Groth16~\cite{groth2016size} proof system with Baby Jubjub keys and
Poseidon~\cite{poseidon} hashes, and develop a Solidity staking contract
following the interface in \cref{alg:smartcontract}.
The client integrates \texttt{rapidsnark}~\cite{go-rapidsnark} for proof
generation and verification, and communicates with the staking contract via
standard JSON-RPC.
For full implementation details, see \cref{appendix:prototype}.

\paragraph{On-chain costs.}
Because ZKP verification is off-chain, the staking contract involves no heavy
operations. Cost is dominated by Poseidon rehashing in the sparse Merkle tree,
which grows linearly with traversal depth at ${\approx}46$k gas per level. At
$10{,}000$ stakers on a depth $32$ tree, we get estimated averages of
${\approx}873$k gas for \textsc{DepositAndStake}, ${\approx}668$k for
\textsc{UnStake}, and ${\approx}679$k for \textsc{Slash}. \textsc{ClaimFunds}
(which only clears storage mappings) costs a constant ${\approx}32$k gas. All
operations remain well within the block gas limit.

\paragraph{Experimental setup.}
\label{sec:experimental_setup}
We deploy a simulated Ethereum testnet using the
\texttt{ethereum-package} with the Kurtosis Docker orchestrator on a
server with an AMD EPYC 9354P processor and 512\,GB of RAM.
We set the table scaling constant to $s = 4$ and configure short 2-minute
protocol rounds.
Our infrastructure runs up to 100 full nodes concurrently;
considering that table size equals $s\sqrt{n}$, this allows us to
simulate deployments with network sizes of $n = 100$, $225$, $400$,
and $625$.
Where necessary, we benchmark individual functions and
estimate conservative execution times for larger network sizes
based on the expected number of invocations per round.

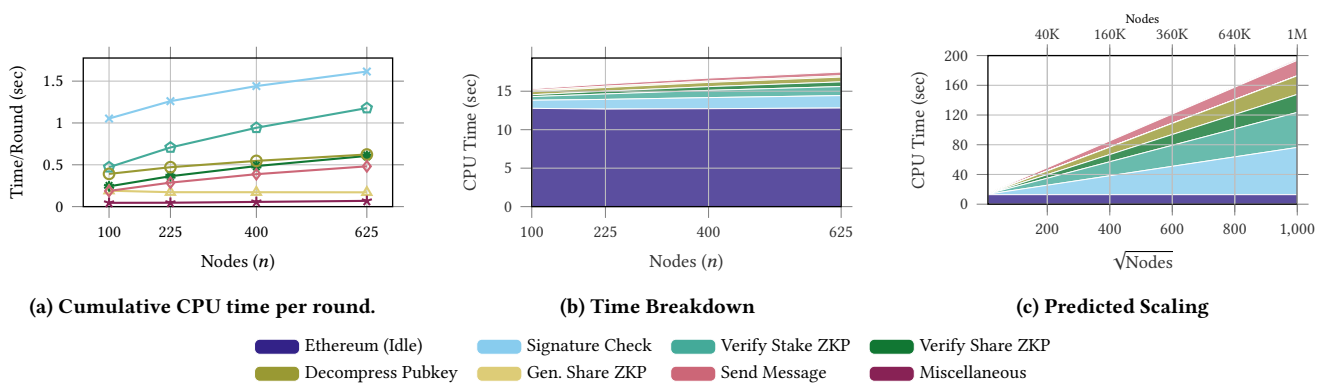
\begin{figure*}[t]
    \centering

    \begin{subfigure}[b]{\awplotwidth}
        \centering
        \begin{tikzpicture}
        \begin{axis}[
            awplot,
            width=\awplotwidth,
            height=\awplotheight,
            xlabel={Nodes ($n$)},
            ylabel={Time/Round (sec)},
            ymin=0,
            xtick=data,
            legend style={draw=none, fill=none, font=\tiny},
            mark size=2pt,
            cycle list={
                {fill=SciBlue, draw=white, line width=0.1pt},
                {fill=SciCyan, draw=white, line width=0.1pt},
                {fill=SciTeal, draw=white, line width=0.1pt},
                {fill=SciGreen, draw=white, line width=0.1pt},
                {fill=SciOlive, draw=white, line width=0.1pt},
                {fill=SciSand, draw=white, line width=0.1pt},
                {fill=SciRose, draw=white, line width=0.1pt},
                {fill=SciWine, draw=white, line width=0.1pt},
                {fill=SciPurple, draw=white, line width=0.1pt},
                {fill=SciGray, draw=white, line width=0.1pt}
            }
        ]
        \addplot[color=SciCyan, thick, mark=x] table[x=nodes, y=verifyAWMessage] {prototype/profiling/verifyAWMessage_scaling.dat};
        \addplot[color=SciTeal, thick, mark=pentagon] table[x=nodes, y=VerifyGroth16] {prototype/profiling/VerifyGroth16_Stake_predicted_small.dat};
        \addplot[color=SciGreen, thick, mark=10-pointed star] table[x=nodes, y=VerifyGroth16] {prototype/profiling/VerifyGroth16_Share_predicted_small.dat};
        \addplot[color=SciOlive, thick, mark=o] table[x=nodes, y=processMarshalledPubkey] {prototype/profiling/processMarshalledPubkey_scaling.dat};
        \addplot[color=SciSand, thick, mark=triangle] table[x=nodes, y=generateZKProof] {prototype/profiling/generateZKProof_scaling.dat};
        \addplot[color=SciRose, thick, mark=diamond] table[x=nodes, y=p2p.(*Service).Send] {prototype/profiling/p2p.Send_scaling.dat};
        \addplot[color=SciWine, thick, mark=star] table[x=nodes, y=aw_misc] {prototype/profiling/stacked_scaling.dat};
        \end{axis}
        \end{tikzpicture}
        \caption{Cumulative CPU time per round.}
        \label{fig:cpu_func_scaling}
    \end{subfigure}
    \hfill
    \begin{subfigure}[b]{\awplotwidth}
        \centering
        \begin{tikzpicture}
        \begin{axis}[
            awplot,
            width=\awplotwidth,
            height=\awplotheight,
            xlabel={Nodes ($n$)},
            ylabel={CPU Time (sec)},
            stack plots=y,
            area style,
            legend to name=commonlegend,
            legend columns=4,
            legend style={
                font=\footnotesize,
                rounded corners=2pt,
                /tikz/every even column/.style={column sep=5pt}
            },
            enlarge x limits=false,
            scaled x ticks=false,
            xtick=data,
            ymin=0,
            draw=black,
            fill opacity=0.8,
            cycle list={
                {fill=SciBlue, draw=white, line width=0.1pt},
                {fill=SciCyan, draw=white, line width=0.1pt},
                {fill=SciTeal, draw=white, line width=0.1pt},
                {fill=SciGreen, draw=white, line width=0.1pt},
                {fill=SciOlive, draw=white, line width=0.1pt},
                {fill=SciSand, draw=white, line width=0.1pt},
                {fill=SciRose, draw=white, line width=0.1pt},
                {fill=SciWine, draw=white, line width=0.1pt},
                {fill=SciPurple, draw=white, line width=0.1pt},
                {fill=SciGray, draw=white, line width=0.1pt}
            }
        ]
        \addplot table[x=nodes, y=Other] {prototype/profiling/stacked_scaling.dat} \closedcycle; \addlegendentry{Ethereum (Idle)}
        \addplot table[x=nodes, y=verifyAWMessage] {prototype/profiling/stacked_scaling.dat} \closedcycle; \addlegendentry{Signature Check}
        \addplot table[x=nodes, y=VerifyGroth16] {prototype/profiling/VerifyGroth16_Stake_predicted_small.dat} \closedcycle; \addlegendentry{Verify Stake ZKP}
        \addplot table[x=nodes, y=VerifyGroth16] {prototype/profiling/VerifyGroth16_Share_predicted_small.dat} \closedcycle; \addlegendentry{Verify Share ZKP}
        \addplot table[x=nodes, y=processMarshalledPubkey] {prototype/profiling/stacked_scaling.dat} \closedcycle; \addlegendentry{Decompress Pubkey}
        \addplot table[x=nodes, y=generateZKProof] {prototype/profiling/stacked_scaling.dat} \closedcycle; \addlegendentry{Gen. Share ZKP}
        \addplot table[x=nodes, y={p2p.(*Service).Send}] {prototype/profiling/stacked_scaling.dat} \closedcycle; \addlegendentry{Send Message}
        \addplot table[x=nodes, y={aw_misc}] {prototype/profiling/stacked_scaling.dat} \closedcycle; \addlegendentry{Miscellaneous}
        \end{axis}
        \end{tikzpicture}
        \caption{Time Breakdown}
        \label{fig:cpu_stacked}
    \end{subfigure}
    \hfill
    \begin{subfigure}[b]{\awplotwidth}
        \centering
        \begin{tikzpicture}
        \begin{axis}[
            awplot,
            width=\awplotwidth,
            height=\awplotheight,
            xlabel={$\sqrt{\text{Nodes}}$},
            xlabel style={yshift=2.2ex},
            ylabel={CPU Time (sec)},
            xmin=10, xmax=1000,
            ymin=0, ymax=200,
            ytick={0, 40, 80, 120, 160, 200},
            xtick={200, 400, 600, 800, 1000},
            stack plots=y,
            area style,
            draw=black,
            fill opacity=0.8,
            extra x ticks={200,400,600,800,1000},
            extra x tick labels={40K, 160K, 360K, 640K, 1M},
            extra x tick style={ticklabel pos=top, major tick length=3pt},
            cycle list={
                {fill=SciBlue, draw=white, line width=0.1pt},
                {fill=SciCyan, draw=white, line width=0.1pt},
                {fill=SciTeal, draw=white, line width=0.1pt},
                {fill=SciGreen, draw=white, line width=0.1pt},
                {fill=SciOlive, draw=white, line width=0.1pt},
                {fill=SciSand, draw=white, line width=0.1pt},
                {fill=SciRose, draw=white, line width=0.1pt},
                {fill=SciWine, draw=white, line width=0.1pt},
                {fill=SciPurple, draw=white, line width=0.1pt},
                {fill=SciGray, draw=white, line width=0.1pt}
            }
        ]
        \addplot table[x={nodes_sqrt}, y=Other] {prototype/profiling/stacked_predicted.dat} \closedcycle;
        \addplot table[x={nodes_sqrt}, y=verifyAWMessage] {prototype/profiling/stacked_predicted.dat} \closedcycle;
        \addplot table[x={nodes_sqrt}, y={VerifyGroth16_Stake}] {prototype/profiling/stacked_predicted.dat} \closedcycle;
        \addplot table[x={nodes_sqrt}, y={VerifyGroth16_Share}] {prototype/profiling/stacked_predicted.dat} \closedcycle;
        \addplot table[x={nodes_sqrt}, y=processMarshalledPubkey] {prototype/profiling/stacked_predicted.dat} \closedcycle;
        \addplot table[x={nodes_sqrt}, y=generateZKProof] {prototype/profiling/stacked_predicted.dat} \closedcycle;
        \addplot table[x={nodes_sqrt}, y={p2p.(*Service).Send}] {prototype/profiling/stacked_predicted.dat} \closedcycle;
        \addplot table[x={nodes_sqrt}, y={aw_misc}] {prototype/profiling/stacked_predicted.dat} \closedcycle;
        \end{axis}
        \node[font=\tiny] at (rel axis cs:0.5,1.25) {Nodes};
        \end{tikzpicture}
        \caption{Predicted Scaling}
        \label{fig:cpu_stacked_predicted}
    \end{subfigure}

    \vspace{0.2em}
    \ref{commonlegend} %
    \vspace{-1em}
    \caption{CPU performance profiling and scaling across major protocol functions.}
    \label{fig:performance_all}
\end{figure*}

\paragraph{Experimental results.}
\label{sec:experimental_results}
\Cref{fig:performance_all} presents the CPU performance profile of the
protocol.
The cost is dominated by Baby Jubjub signature checks, public key
decompression, and ZKP verification;
\cref{tbl:complexities} reports the per-operation execution times and
expected invocation counts.
The predicted scaling (\cref{fig:cpu_stacked_predicted}) shows
that even at one million nodes, the protocol requires only a few CPU
cores with short 2-minute rounds.
Network traffic overhead also grows linearly with~$\sqrt{n}$, consistent
with the theoretical communication complexity.
Detailed per-round CPU and network traffic profiles
(\cref{fig:cpu_period,fig:net_rx_period,fig:net_rx_scaling}) are
provided in \cref{appendix:prototype}.

Our results confirm that \protocolname introduces modest
computational and networking overhead while maintaining the expected
asymptotic scaling properties.

    \begin{table}[h]
        \centering
        \resizebox{\linewidth}{!}{
        \begin{tblr}{
          cells = {c, m},
          rowsep = 1pt,
          row{1} = {font=\bfseries},
          hline{1-2,8} = {-}{0.08em},
        }
        Function                & {Invocations\\per Round}                         & {CPU Time per \\Invocation (sec)}\\
        Generate Share ZKP         & $1$                                 & 0.171              \\
        Verify Share ZKP            & $(s + 1) \cdot \sqrt{n}$          & $6.06 \cdot 10^{-3}$ \\
        Verify Stake ZKP            & $(s + 1) \cdot \sqrt{n}$          & $5.89 \cdot 10^{-3}$ \\
        Send Message                & $2 \cdot s \cdot \sqrt{n}$          & $2.58 \cdot 10^{-3}$ \\
        Signature Check         & $s\cdot(s^2 + 1) \cdot \sqrt{n}$    & $9.35 \cdot 10^{-4}$ \\
        Decompress PubKey & $s\cdot(s^2 + 1) \cdot \sqrt{n}$    & $3.68 \cdot 10^{-4}$ \\
        \end{tblr}
        }
        \caption{Per-round invocations and CPU cost.}
        \label{tbl:complexities}
    \end{table}

\section{Related Work} \label{sec:related-work}
Decentralized peer discovery falls into two families:
\emph{structured tables} such as Chord~\cite{stoica2001chord},
Pastry~\cite{rowstron2001pastry}, and
Kademlia~\cite{maymounkov2002kademlia}; and \emph{unstructured
gossiping} protocols like Cyclon~\cite{voulgaris2005cyclon} and
HyParView~\cite{leitao2007hyparview}, which maintain randomized partial
views~\cite{jelasity2007peersampling}. Neither family addresses open,
adversarial settings where Sybil attacks~\cite{douceur2002sybil} allow a
single party to overwhelm honest participants.

\paragraph{Eclipse and Sybil defenses.}
Social-graph approaches such as SybilGuard~\cite{yu2006sybilguard} and
SybilLimit~\cite{yu2008sybillimit} bound Sybil penetration under
trust-graph assumptions; S/Kademlia~\cite{baumgart2007skademlia}
constrains identifier generation and adds redundant lookups. Practical
eclipse attacks have been demonstrated against both
Bitcoin~\cite{heilman2015eclipsebitcoin} and
Ethereum~\cite{marcus2018eclipseethereum} at low cost, while deployed
defenses remain ad-hoc (e.g., IP subnet bans).

\paragraph{Economic mechanisms and stake-weighted security.}
Resource-based defenses date to proof-of-work puzzles~\cite{dwork1992pricing};
modern proof-of-stake chains use locked stake to limit adversarial
influence and enable slashing~\cite{kiayias2017ouroboros,buterin2017casper}.
Coretti et al.~\cite{coretti2022scuttlebutt} propose a stake-weighted
random graph for the network layer but provide no concrete discovery
mechanism for a dynamic participant set. \protocolname{} applies stake as a
Sybil-resistance primitive at the \emph{networking layer}, combining
stake-based discovery with slashing while preserving stake-ownership
privacy.

\paragraph{Network privacy and topology inference.}
Topology-inference attacks can link Bitcoin clients to IP-level
information~\cite{biryukov2014deanonymisation} or reconstruct
significant portions of network
topology~\cite{delgadosegura2019txprobe}. \protocolname{} counters this by
mixing peer information so that peers shared with others are unlinked
from the overlay, making topology inference extremely costly.
Dandelion~\cite{venkatakrishnan2017dandelion,fanti2018dandelionpp}
addresses a complementary problem---transaction-origin anonymity---but does
not cover peer discovery or overlay formation.

\section{Discussion and Conclusion} \label{sec:discussion}

\paragraph{Limitations.}
Symmetric connectivity is assumed; nodes behind NATs or firewalls
require hole-punching or relay mechanisms not modeled here.
Requiring stake excludes capital-less nodes, creating a tension between
attack cost and accessibility.
The seed-privacy guarantee (\cref{rem:tee-privacy}) depends on TEE integrity;
side-channel attacks~\cite{kocher2019spectre,vanbulck2018foreshadow} weaken
this in practice, though a compromised TEE affects only slice privacy---not
correctness, slashing, or eclipse detection.
The protocol punishes misbehavior via slashing but does not reward
honest participation; a rational node could free-ride by declining to serve
responses, suggesting micropayments or non-responsiveness penalties as future
work.
Higher-layer protocols may leak topology information that undermines
connection privacy; similarly, an adversary that drops overlay
connections after peer discovery could partition nodes despite healthy gossip
tables---whether detection analogous to
\cref{sec:eclipse-detect} is possible at the overlay level remains open.

\paragraph{Extensions.}
Private information retrieval could replace TEEs for seed privacy
(\cref{rem:tee-privacy}), removing hardware trust at the cost of higher
computation.
Communication overhead can be reduced via partial querying: by a
coupon-collector argument, a node need only contact a subset of its peers to
obtain a representative slice.

\paragraph{Conclusion.}
We presented \protocolname, a stake-backed peer-discovery protocol for open P2P
networks. \protocolname{} combines economic stake with cryptographic techniques
to achieve eclipse detection, verifiable rate limiting via slashable
commitments, and stake-network unlinkability. Mean-field analysis and
simulations confirm rapid convergence to healthy equilibria and resilience under
adversarial churn. A prototype built on a production consensus client
demonstrates practical scalability.
\label{sec:conclusion}

\bibliography{ref.bib}


\begin{thebibliography}{34}


\ifx \showCODEN    \undefined \def \showCODEN     #1{\unskip}     \fi
\ifx \showISBNx    \undefined \def \showISBNx     #1{\unskip}     \fi
\ifx \showISBNxiii \undefined \def \showISBNxiii  #1{\unskip}     \fi
\ifx \showISSN     \undefined \def \showISSN      #1{\unskip}     \fi
\ifx \showLCCN     \undefined \def \showLCCN      #1{\unskip}     \fi
\ifx \shownote     \undefined \def \shownote      #1{#1}          \fi
\ifx \showarticletitle \undefined \def \showarticletitle #1{#1}   \fi
\ifx \showURL      \undefined \def \showURL       {\relax}        \fi
\providecommand\bibfield[2]{#2}
\providecommand\bibinfo[2]{#2}
\providecommand\natexlab[1]{#1}
\providecommand\showeprint[2][]{arXiv:#2}

\bibitem[Baumgart and Mies(2007)]%
        {baumgart2007skademlia}
\bibfield{author}{\bibinfo{person}{Ingmar Baumgart} {and}
  \bibinfo{person}{Sebastian Mies}.} \bibinfo{year}{2007}\natexlab{}.
\newblock \showarticletitle{{S/Kademlia}: A Practicable Approach Towards Secure
  Key-Based Routing}. In \bibinfo{booktitle}{\emph{Proceedings of the 13th
  International Conference on Parallel and Distributed Systems ({ICPADS})}}.
  \bibinfo{publisher}{IEEE}, \bibinfo{pages}{1--8}.
\newblock
\href{https://doi.org/10.1109/ICPADS.2007.4447808}{doi:\nolinkurl{10.1109/ICPADS.2007.4447808}}


\bibitem[Bell\'{e}s-Mu\~{n}oz et~al\mbox{.}(2023)]%
        {circom}
\bibfield{author}{\bibinfo{person}{Marta Bell\'{e}s-Mu\~{n}oz},
  \bibinfo{person}{Miguel Isabel}, \bibinfo{person}{Jose~Luis Mu\~{n}oz Tapia},
  \bibinfo{person}{Albert Rubio}, {and} \bibinfo{person}{Jordi Baylina}.}
  \bibinfo{year}{2023}\natexlab{}.
\newblock \showarticletitle{Circom: A Circuit Description Language for Building
  Zero-Knowledge Applications}.
\newblock \bibinfo{journal}{\emph{IEEE Transactions on Dependable and Secure
  Computing}} \bibinfo{volume}{20}, \bibinfo{number}{6} (\bibinfo{year}{2023}),
  \bibinfo{pages}{4733--4751}.
\newblock
\href{https://doi.org/10.1109/TDSC.2022.3232813}{doi:\nolinkurl{10.1109/TDSC.2022.3232813}}


\bibitem[Benet(2014)]%
        {benet2014ipfs}
\bibfield{author}{\bibinfo{person}{Juan Benet}.}
  \bibinfo{year}{2014}\natexlab{}.
\newblock \showarticletitle{{IPFS} -- Content Addressed, Versioned, {P2P} File
  System}.
\newblock \bibinfo{journal}{\emph{arXiv preprint arXiv:1407.3561}}
  (\bibinfo{year}{2014}).
\newblock


\bibitem[Biryukov et~al\mbox{.}(2014)]%
        {biryukov2014deanonymisation}
\bibfield{author}{\bibinfo{person}{Alex Biryukov}, \bibinfo{person}{Dmitry
  Khovratovich}, {and} \bibinfo{person}{Ivan Pustogarov}.}
  \bibinfo{year}{2014}\natexlab{}.
\newblock \showarticletitle{Deanonymisation of Clients in Bitcoin P2P Network}.
  In \bibinfo{booktitle}{\emph{Proceedings of the 2014 ACM SIGSAC Conference on
  Computer and Communications Security}} \emph{(\bibinfo{series}{CCS '14})}.
  \bibinfo{publisher}{ACM}, \bibinfo{pages}{15--29}.
\newblock
\href{https://doi.org/10.1145/2660267.2660379}{doi:\nolinkurl{10.1145/2660267.2660379}}


\bibitem[Bojja~Venkatakrishnan et~al\mbox{.}(2017)]%
        {venkatakrishnan2017dandelion}
\bibfield{author}{\bibinfo{person}{Shaileshh Bojja~Venkatakrishnan},
  \bibinfo{person}{Giulia Fanti}, {and} \bibinfo{person}{Pramod Viswanath}.}
  \bibinfo{year}{2017}\natexlab{}.
\newblock \showarticletitle{Dandelion: Redesigning the Bitcoin Network for
  Anonymity}.
\newblock \bibinfo{journal}{\emph{Proc. ACM Meas. Anal. Comput. Syst.}}
  \bibinfo{volume}{1}, \bibinfo{number}{1}, Article \bibinfo{articleno}{22}
  (\bibinfo{year}{2017}).
\newblock
\href{https://doi.org/10.1145/3084459}{doi:\nolinkurl{10.1145/3084459}}


\bibitem[Buterin and Griffith(2017)]%
        {buterin2017casper}
\bibfield{author}{\bibinfo{person}{Vitalik Buterin} {and}
  \bibinfo{person}{Virgil Griffith}.} \bibinfo{year}{2017}\natexlab{}.
\newblock \showarticletitle{Casper the Friendly Finality Gadget}.
\newblock \bibinfo{journal}{\emph{arXiv preprint arXiv:1710.09437}}
  (\bibinfo{year}{2017}).
\newblock
\urldef\tempurl%
\url{https://arxiv.org/abs/1710.09437}
\showURL{%
\tempurl}


\bibitem[Cholez and Ignat(2024)]%
        {cholez2024ipfs}
\bibfield{author}{\bibinfo{person}{Thibault Cholez} {and}
  \bibinfo{person}{Claudia-Lavinia Ignat}.} \bibinfo{year}{2024}\natexlab{}.
\newblock \showarticletitle{Sybil Attack Strikes Again: Denying Content Access
  in {IPFS} with a Single Computer}. In \bibinfo{booktitle}{\emph{Proc.\ 19th
  International Conference on Availability, Reliability and Security (ARES)}}.
\newblock
\href{https://doi.org/10.1145/3664476.3664482}{doi:\nolinkurl{10.1145/3664476.3664482}}


\bibitem[Coretti et~al\mbox{.}(2022)]%
        {coretti2022scuttlebutt}
\bibfield{author}{\bibinfo{person}{Sandro Coretti}, \bibinfo{person}{Aggelos
  Kiayias}, \bibinfo{person}{Cristopher Moore}, {and}
  \bibinfo{person}{Alexander Russell}.} \bibinfo{year}{2022}\natexlab{}.
\newblock \showarticletitle{The Generals' Scuttlebutt: Byzantine-Resilient
  Gossip Protocols}. In \bibinfo{booktitle}{\emph{Proceedings of the 2022 ACM
  SIGSAC Conference on Computer and Communications Security}}
  \emph{(\bibinfo{series}{CCS '22})}. \bibinfo{publisher}{ACM}.
\newblock
\href{https://doi.org/10.1145/3548606.3560638}{doi:\nolinkurl{10.1145/3548606.3560638}}


\bibitem[Delgado-Segura et~al\mbox{.}(2019)]%
        {delgadosegura2019txprobe}
\bibfield{author}{\bibinfo{person}{Sergi Delgado-Segura},
  \bibinfo{person}{Surya Bakshi}, \bibinfo{person}{Cristina
  P{\'e}rez-Sol{\`a}}, \bibinfo{person}{James Litton}, \bibinfo{person}{Andrew
  Pachulski}, \bibinfo{person}{Andrew Miller}, {and} \bibinfo{person}{Bobby
  Bhattacharjee}.} \bibinfo{year}{2019}\natexlab{}.
\newblock \showarticletitle{{TxProbe}: Discovering Bitcoin's Network Topology
  Using Orphan Transactions}. In \bibinfo{booktitle}{\emph{Financial
  Cryptography and Data Security}} \emph{(\bibinfo{series}{Lecture Notes in
  Computer Science}, Vol.~\bibinfo{volume}{11598})},
  \bibfield{editor}{\bibinfo{person}{Ian Goldberg} {and} \bibinfo{person}{Tyler
  Moore}} (Eds.). \bibinfo{publisher}{Springer}, \bibinfo{pages}{550--566}.
\newblock
\href{https://doi.org/10.1007/978-3-030-32101-7_32}{doi:\nolinkurl{10.1007/978-3-030-32101-7_32}}


\bibitem[Dingledine et~al\mbox{.}(2004)]%
        {dingledine2004tor}
\bibfield{author}{\bibinfo{person}{Roger Dingledine}, \bibinfo{person}{Nick
  Mathewson}, {and} \bibinfo{person}{Paul Syverson}.}
  \bibinfo{year}{2004}\natexlab{}.
\newblock \showarticletitle{Tor: The Second-Generation Onion Router}. In
  \bibinfo{booktitle}{\emph{Proc.\ 13th USENIX Security Symposium}}.
  \bibinfo{pages}{303--320}.
\newblock


\bibitem[Douceur(2002)]%
        {douceur2002sybil}
\bibfield{author}{\bibinfo{person}{John~R. Douceur}.}
  \bibinfo{year}{2002}\natexlab{}.
\newblock \showarticletitle{The {Sybil} Attack}. In
  \bibinfo{booktitle}{\emph{Peer-to-Peer Systems, First International Workshop,
  {IPTPS} 2002}} \emph{(\bibinfo{series}{Lecture Notes in Computer Science},
  Vol.~\bibinfo{volume}{2429})}. \bibinfo{publisher}{Springer},
  \bibinfo{pages}{251--260}.
\newblock
\href{https://doi.org/10.1007/3-540-45748-8_24}{doi:\nolinkurl{10.1007/3-540-45748-8_24}}


\bibitem[Dwork and Naor(1992)]%
        {dwork1992pricing}
\bibfield{author}{\bibinfo{person}{Cynthia Dwork} {and} \bibinfo{person}{Moni
  Naor}.} \bibinfo{year}{1992}\natexlab{}.
\newblock \showarticletitle{Pricing via Processing or Combatting Junk Mail}. In
  \bibinfo{booktitle}{\emph{Advances in Cryptology --- {CRYPTO} '92}}
  \emph{(\bibinfo{series}{Lecture Notes in Computer Science},
  Vol.~\bibinfo{volume}{740})}. \bibinfo{publisher}{Springer},
  \bibinfo{pages}{139--147}.
\newblock
\href{https://doi.org/10.1007/3-540-48071-4_10}{doi:\nolinkurl{10.1007/3-540-48071-4_10}}


\bibitem[{Ethereum Foundation}(2023)]%
        {geth}
\bibfield{author}{\bibinfo{person}{{Ethereum Foundation}}.}
  \bibinfo{year}{2023}\natexlab{}.
\newblock \bibinfo{title}{Go Ethereum (Geth)}.
\newblock
  \bibinfo{howpublished}{\url{https://github.com/ethereum/go-ethereum}}.
\newblock


\bibitem[Fanti et~al\mbox{.}(2018)]%
        {fanti2018dandelionpp}
\bibfield{author}{\bibinfo{person}{Giulia Fanti}, \bibinfo{person}{Shaileshh
  Bojja~Venkatakrishnan}, \bibinfo{person}{Surya Bakshi},
  \bibinfo{person}{Bradley Denby}, \bibinfo{person}{Shruti Bhargava},
  \bibinfo{person}{Andrew Miller}, {and} \bibinfo{person}{Pramod Viswanath}.}
  \bibinfo{year}{2018}\natexlab{}.
\newblock \showarticletitle{Dandelion++: Lightweight Cryptocurrency Networking
  with Formal Anonymity Guarantees}.
\newblock \bibinfo{journal}{\emph{Proc. ACM Meas. Anal. Comput. Syst.}}
  \bibinfo{volume}{2}, \bibinfo{number}{2}, Article \bibinfo{articleno}{29}
  (\bibinfo{year}{2018}).
\newblock
\href{https://doi.org/10.1145/3224424}{doi:\nolinkurl{10.1145/3224424}}


\bibitem[Grassi et~al\mbox{.}(2021)]%
        {poseidon}
\bibfield{author}{\bibinfo{person}{Lorenzo Grassi}, \bibinfo{person}{Dmitry
  Khovratovich}, \bibinfo{person}{Christian Rechberger}, \bibinfo{person}{Arnab
  Roy}, {and} \bibinfo{person}{Markus Schofnegger}.}
  \bibinfo{year}{2021}\natexlab{}.
\newblock \showarticletitle{Poseidon: A New Hash Function for {Zero-Knowledge}
  Proof Systems}. In \bibinfo{booktitle}{\emph{30th USENIX Security Symposium
  (USENIX Security 21)}}. \bibinfo{publisher}{USENIX Association},
  \bibinfo{pages}{519--535}.
\newblock
\showISBNx{978-1-939133-24-3}
\urldef\tempurl%
\url{https://www.usenix.org/conference/usenixsecurity21/presentation/grassi}
\showURL{%
\tempurl}


\bibitem[Groth(2016)]%
        {groth2016size}
\bibfield{author}{\bibinfo{person}{Jens Groth}.}
  \bibinfo{year}{2016}\natexlab{}.
\newblock \showarticletitle{On the size of pairing-based non-interactive
  arguments}. In \bibinfo{booktitle}{\emph{Annual International Conference on
  the Theory and Applications of Cryptographic Techniques}}. Springer,
  \bibinfo{pages}{305--326}.
\newblock


\bibitem[Heilman et~al\mbox{.}(2015)]%
        {heilman2015eclipsebitcoin}
\bibfield{author}{\bibinfo{person}{Ethan Heilman}, \bibinfo{person}{Alison
  Kendler}, \bibinfo{person}{Aviv Zohar}, {and} \bibinfo{person}{Sharon
  Goldberg}.} \bibinfo{year}{2015}\natexlab{}.
\newblock \showarticletitle{Eclipse attacks on Bitcoin's peer-to-peer network}.
  In \bibinfo{booktitle}{\emph{Proceedings of the 24th USENIX Conference on
  Security Symposium}} (Washington, D.C.) \emph{(\bibinfo{series}{SEC'15})}.
  \bibinfo{publisher}{USENIX Association}, \bibinfo{address}{USA},
  \bibinfo{pages}{129--144}.
\newblock
\showISBNx{9781931971232}


\bibitem[{iden3}(2018)]%
        {snarkjs}
\bibfield{author}{\bibinfo{person}{{iden3}}.} \bibinfo{year}{2018}\natexlab{}.
\newblock \bibinfo{title}{{snarkjs}}.
\newblock \bibinfo{howpublished}{\url{https://github.com/iden3/snarkjs}}.
\newblock


\bibitem[{iden3}(2022)]%
        {go-rapidsnark}
\bibfield{author}{\bibinfo{person}{{iden3}}.} \bibinfo{year}{2022}\natexlab{}.
\newblock \bibinfo{title}{{go-rapidsnark}}.
\newblock \bibinfo{howpublished}{\url{https://github.com/iden3/go-rapidsnark}}.
\newblock


\bibitem[Jelasity et~al\mbox{.}(2007)]%
        {jelasity2007peersampling}
\bibfield{author}{\bibinfo{person}{M\'{a}rk Jelasity}, \bibinfo{person}{Spyros
  Voulgaris}, \bibinfo{person}{Rachid Guerraoui}, \bibinfo{person}{Anne-Marie
  Kermarrec}, {and} \bibinfo{person}{Maarten van Steen}.}
  \bibinfo{year}{2007}\natexlab{}.
\newblock \showarticletitle{Gossip-based Peer Sampling}.
\newblock \bibinfo{journal}{\emph{ACM Transactions on Computer Systems}}
  \bibinfo{volume}{25}, \bibinfo{number}{3}, Article \bibinfo{articleno}{8}
  (\bibinfo{year}{2007}).
\newblock
\href{https://doi.org/10.1145/1275517.1275520}{doi:\nolinkurl{10.1145/1275517.1275520}}


\bibitem[Kiayias et~al\mbox{.}(2017)]%
        {kiayias2017ouroboros}
\bibfield{author}{\bibinfo{person}{Aggelos Kiayias}, \bibinfo{person}{Alexander
  Russell}, \bibinfo{person}{Bernardo David}, {and} \bibinfo{person}{Roman
  Oliynykov}.} \bibinfo{year}{2017}\natexlab{}.
\newblock \showarticletitle{Ouroboros: A Provably Secure Proof-of-Stake
  Blockchain Protocol}. In \bibinfo{booktitle}{\emph{Advances in Cryptology --
  CRYPTO 2017}} \emph{(\bibinfo{series}{Lecture Notes in Computer Science},
  Vol.~\bibinfo{volume}{10401})}. \bibinfo{publisher}{Springer},
  \bibinfo{pages}{357--388}.
\newblock
\href{https://doi.org/10.1007/978-3-319-63688-7_12}{doi:\nolinkurl{10.1007/978-3-319-63688-7_12}}


\bibitem[Kocher et~al\mbox{.}(2019)]%
        {kocher2019spectre}
\bibfield{author}{\bibinfo{person}{Paul Kocher}, \bibinfo{person}{Jann Horn},
  \bibinfo{person}{Anders Fogh}, \bibinfo{person}{Daniel Genkin},
  \bibinfo{person}{Daniel Gruss}, \bibinfo{person}{Werner Haas},
  \bibinfo{person}{Mike Hamburg}, \bibinfo{person}{Moritz Lipp},
  \bibinfo{person}{Stefan Mangard}, \bibinfo{person}{Thomas Prescher},
  \bibinfo{person}{Michael Schwarz}, {and} \bibinfo{person}{Yuval Yarom}.}
  \bibinfo{year}{2019}\natexlab{}.
\newblock \showarticletitle{Spectre Attacks: Exploiting Speculative Execution}.
  In \bibinfo{booktitle}{\emph{Proc.\ 40th IEEE Symposium on Security and
  Privacy (S\&P)}}. \bibinfo{pages}{1--19}.
\newblock


\bibitem[Leit\~{a}o et~al\mbox{.}(2007)]%
        {leitao2007hyparview}
\bibfield{author}{\bibinfo{person}{Jo\~{a}o Leit\~{a}o},
  \bibinfo{person}{Jos\'{e} Pereira}, {and} \bibinfo{person}{Lu\'{i}s
  Rodrigues}.} \bibinfo{year}{2007}\natexlab{}.
\newblock \showarticletitle{{HyParView}: A Membership Protocol for Reliable
  Gossip-Based Broadcast}. In \bibinfo{booktitle}{\emph{Proceedings of the 37th
  Annual IEEE/IFIP International Conference on Dependable Systems and Networks
  (DSN '07)}}. \bibinfo{publisher}{IEEE Computer Society},
  \bibinfo{address}{Edinburgh, UK}, \bibinfo{pages}{419--429}.
\newblock
\href{https://doi.org/10.1109/DSN.2007.56}{doi:\nolinkurl{10.1109/DSN.2007.56}}


\bibitem[Marcus et~al\mbox{.}(2018)]%
        {marcus2018eclipseethereum}
\bibfield{author}{\bibinfo{person}{Yuval Marcus}, \bibinfo{person}{Ethan
  Heilman}, {and} \bibinfo{person}{Sharon Goldberg}.}
  \bibinfo{year}{2018}\natexlab{}.
\newblock \bibinfo{title}{Low-Resource Eclipse Attacks on {Ethereum}'s
  Peer-to-Peer Network}.
\newblock \bibinfo{howpublished}{Cryptology ePrint Archive, Report 2018/236}.
\newblock
\urldef\tempurl%
\url{https://eprint.iacr.org/2018/236}
\showURL{%
\tempurl}


\bibitem[Maymounkov and Mazi\`{e}res(2002)]%
        {maymounkov2002kademlia}
\bibfield{author}{\bibinfo{person}{Petar Maymounkov} {and}
  \bibinfo{person}{David Mazi\`{e}res}.} \bibinfo{year}{2002}\natexlab{}.
\newblock \showarticletitle{Kademlia: A Peer-to-Peer Information System Based
  on the {XOR} Metric}. In \bibinfo{booktitle}{\emph{Revised Papers from the
  First International Workshop on Peer-to-Peer Systems (IPTPS)}}
  \emph{(\bibinfo{series}{Lecture Notes in Computer Science},
  Vol.~\bibinfo{volume}{2429})}. \bibinfo{publisher}{Springer},
  \bibinfo{address}{Berlin, Heidelberg}, \bibinfo{pages}{53--65}.
\newblock
\href{https://doi.org/10.1007/3-540-45748-8_5}{doi:\nolinkurl{10.1007/3-540-45748-8_5}}


\bibitem[{OffchainLabs}(2019)]%
        {prysm}
\bibfield{author}{\bibinfo{person}{{OffchainLabs}}.}
  \bibinfo{year}{2019}\natexlab{}.
\newblock \bibinfo{title}{{prysm: Go implementation of Ethereum proof of
  stake}}.
\newblock \bibinfo{howpublished}{\url{https://github.com/OffchainLabs/prysm/}}.
\newblock


\bibitem[{Protocol Labs}(2017)]%
        {protocollabs2017filecoin}
\bibfield{author}{\bibinfo{person}{{Protocol Labs}}.}
  \bibinfo{year}{2017}\natexlab{}.
\newblock \bibinfo{booktitle}{\emph{Filecoin: A Decentralized Storage
  Network}}.
\newblock \bibinfo{type}{{T}echnical {R}eport}.
\newblock
\urldef\tempurl%
\url{https://filecoin.io/filecoin.pdf}
\showURL{%
\tempurl}


\bibitem[Rowstron and Druschel(2001)]%
        {rowstron2001pastry}
\bibfield{author}{\bibinfo{person}{Antony Rowstron} {and}
  \bibinfo{person}{Peter Druschel}.} \bibinfo{year}{2001}\natexlab{}.
\newblock \showarticletitle{Pastry: Scalable, Decentralized Object Location,
  and Routing for Large-Scale Peer-to-Peer Systems}. In
  \bibinfo{booktitle}{\emph{Proceedings of the IFIP/ACM International
  Conference on Distributed Systems Platforms (Middleware)}}
  \emph{(\bibinfo{series}{Lecture Notes in Computer Science},
  Vol.~\bibinfo{volume}{2218})}. \bibinfo{publisher}{Springer},
  \bibinfo{address}{Berlin, Heidelberg}, \bibinfo{pages}{329--350}.
\newblock
\href{https://doi.org/10.1007/3-540-45518-3_18}{doi:\nolinkurl{10.1007/3-540-45518-3_18}}


\bibitem[Stoica et~al\mbox{.}(2001)]%
        {stoica2001chord}
\bibfield{author}{\bibinfo{person}{Ion Stoica}, \bibinfo{person}{Robert
  Morris}, \bibinfo{person}{David~R. Karger}, \bibinfo{person}{M.~Frans
  Kaashoek}, {and} \bibinfo{person}{Hari Balakrishnan}.}
  \bibinfo{year}{2001}\natexlab{}.
\newblock \showarticletitle{Chord: A Scalable Peer-to-Peer Lookup Service for
  Internet Applications}. In \bibinfo{booktitle}{\emph{Proceedings of the 2001
  Conference on Applications, Technologies, Architectures, and Protocols for
  Computer Communications}} \emph{(\bibinfo{series}{SIGCOMM '01})}.
  \bibinfo{publisher}{ACM}, \bibinfo{address}{New York, NY, USA},
  \bibinfo{pages}{149--160}.
\newblock
\href{https://doi.org/10.1145/383059.383071}{doi:\nolinkurl{10.1145/383059.383071}}


\bibitem[Van~Bulck et~al\mbox{.}(2018)]%
        {vanbulck2018foreshadow}
\bibfield{author}{\bibinfo{person}{Jo Van~Bulck}, \bibinfo{person}{Marina
  Minkin}, \bibinfo{person}{Ofir Weisse}, \bibinfo{person}{Daniel Genkin},
  \bibinfo{person}{Baris Kasikci}, \bibinfo{person}{Frank Piessens},
  \bibinfo{person}{Mark Silberstein}, \bibinfo{person}{Thomas~F. Wenisch},
  \bibinfo{person}{Yuval Yarom}, {and} \bibinfo{person}{Raoul Strackx}.}
  \bibinfo{year}{2018}\natexlab{}.
\newblock \showarticletitle{Foreshadow: Extracting the Keys to the {Intel SGX}
  Kingdom with Transient Out-of-Order Execution}. In
  \bibinfo{booktitle}{\emph{Proc.\ 27th USENIX Security Symposium}}.
  \bibinfo{pages}{991--1008}.
\newblock


\bibitem[Voulgaris et~al\mbox{.}(2005)]%
        {voulgaris2005cyclon}
\bibfield{author}{\bibinfo{person}{Spyros Voulgaris}, \bibinfo{person}{Daniela
  Gavidia}, {and} \bibinfo{person}{Maarten van Steen}.}
  \bibinfo{year}{2005}\natexlab{}.
\newblock \showarticletitle{{CYCLON}: Inexpensive Membership Management for
  Unstructured {P2P} Overlays}.
\newblock \bibinfo{journal}{\emph{Journal of Network and Systems Management}}
  \bibinfo{volume}{13}, \bibinfo{number}{2} (\bibinfo{year}{2005}),
  \bibinfo{pages}{197--217}.
\newblock
\href{https://doi.org/10.1007/s10922-005-4441-x}{doi:\nolinkurl{10.1007/s10922-005-4441-x}}


\bibitem[Winter et~al\mbox{.}(2016)]%
        {winter2016sybiltor}
\bibfield{author}{\bibinfo{person}{Philipp Winter}, \bibinfo{person}{Roya
  Ensafi}, \bibinfo{person}{Karsten Loesing}, {and} \bibinfo{person}{Nick
  Feamster}.} \bibinfo{year}{2016}\natexlab{}.
\newblock \showarticletitle{Identifying and Characterizing Sybils in the {Tor}
  Network}. In \bibinfo{booktitle}{\emph{Proc.\ 25th USENIX Security
  Symposium}}. \bibinfo{pages}{1169--1185}.
\newblock


\bibitem[Yu et~al\mbox{.}(2008)]%
        {yu2008sybillimit}
\bibfield{author}{\bibinfo{person}{Haifeng Yu}, \bibinfo{person}{Phillip~B.
  Gibbons}, \bibinfo{person}{Michael Kaminsky}, {and} \bibinfo{person}{Feng
  Xiao}.} \bibinfo{year}{2008}\natexlab{}.
\newblock \showarticletitle{{SybilLimit}: A Near-Optimal Social Network Defense
  against {Sybil} Attacks}. In \bibinfo{booktitle}{\emph{Proceedings of the
  2008 {IEEE} Symposium on Security and Privacy ({S\&P})}}.
  \bibinfo{publisher}{IEEE}, \bibinfo{pages}{3--17}.
\newblock
\href{https://doi.org/10.1109/SP.2008.13}{doi:\nolinkurl{10.1109/SP.2008.13}}


\bibitem[Yu et~al\mbox{.}(2006)]%
        {yu2006sybilguard}
\bibfield{author}{\bibinfo{person}{Haifeng Yu}, \bibinfo{person}{Michael
  Kaminsky}, \bibinfo{person}{Phillip~B. Gibbons}, {and}
  \bibinfo{person}{Abraham Flaxman}.} \bibinfo{year}{2006}\natexlab{}.
\newblock \showarticletitle{{SybilGuard}: Defending Against {Sybil} Attacks via
  Social Networks}. In \bibinfo{booktitle}{\emph{Proceedings of the 2006
  Conference on Applications, Technologies, Architectures, and Protocols for
  Computer Communications ({ACM} {SIGCOMM})}}. \bibinfo{publisher}{ACM},
  \bibinfo{pages}{267--278}.
\newblock
\href{https://doi.org/10.1145/1159913.1159945}{doi:\nolinkurl{10.1145/1159913.1159945}}


\end{thebibliography}

\appendix
\crefalias{section}{appendix}
\numberwithin{theorem}{section}
\numberwithin{lemma}{section}
\numberwithin{corollary}{section}
\numberwithin{definition}{section}
\numberwithin{proposition}{section}
\section{Deferred Proofs}
\label{appendix:proofs}

\subsection{Partition Detection Proofs}

\begin{lemma}[Restatement of \cref{lem:attack-soundness}]
  \stmtAttackSoundness
\end{lemma}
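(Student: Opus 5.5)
The plan is to reduce the statement to a lower-tail Chernoff bound on the number of distinct records a fixed honest node $i$ collects in round $r$.

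First, model the count. Let $X$ denote the number of unique peer records $i$ receives in round $r$; the flag fires exactly when $X < \theta s\sqrt{n}$. The slice is the set of identifiers $\netpk_j$ with $\PRNG[\nonce][\netpk_j] < s/\sqrt{n}$. Treating $\PRNG$ as a uniform random function (independent across identifiers), membership of each of the $n$ nodes in the slice is an independent Bernoulli$(s/\sqrt{n})$ variable. The slice size therefore has mean $s\sqrt{n}$.

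Second, use health. By \cref{def:network_states}, in a round where everyone behaves honestly, $i$ receives records from at least a $\gamma$-fraction of its slice. I would formalize this by writing $X \ge \sum_{j} Z_j$, where each $Z_j$ is an independent Bernoulli with parameter at least $\gamma s/\sqrt{n}$. Concretely, $Z_j$ indicates that $j$ is in the slice and $j$'s record is delivered. This gives $\mu := \E[\sum_j Z_j] \ge \gamma s\sqrt{n}$. By stochastic domination, it suffices to bound the lower tail of a sum with mean exactly $\mu_0=\gamma s\sqrt{n}$; a larger mean only makes the tail event rarer.

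Third, apply the multiplicative Chernoff bound $\Pr[Y \le (1-\eta)\mu_0] \le \exp(-\eta^2\mu_0/2)$ with $(1-\eta)\mu_0 = \theta s\sqrt{n}$, i.e.\ $\eta = (\gamma-\theta)/\gamma > 0$ since $\theta<\gamma$. Then
\[
\eta^2\mu_0/2 = \frac{(\gamma-\theta)^2}{\gamma^2}\cdot\frac{\gamma s\sqrt{n}}{2} = \frac{s\sqrt{n}(\gamma-\theta)^2}{2\gamma},
\]
which is exactly the claimed exponent. The ``every honest node'' phrasing is per-node; no union bound is needed.

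The main obstacle is justifying independence in the second step. Health as defined is a deterministic ``at least $\gamma$ fraction'' guarantee, and delivery events across different $j$ could be correlated through shared responders. I would handle this in one of two ways. The first option is to read $\gamma$-health as saying that each slice member is delivered with probability at least $\gamma$, independently given the seed. The second is to condition on the delivered set and use only the PRNG randomness of slice membership, which is independent across identifiers because $\PRNG$ is a random function. Either way, the count is a sum of independent indicators with mean at least $\gamma s\sqrt{n}$, and the bound follows.
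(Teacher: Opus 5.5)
Your proposal is correct and follows the paper's proof. The paper takes your second option: it reads $\gamma$-health as giving a reachable set $\mathcal{R}_i$ of at least $\gamma n$ nodes, so that $X$ dominates a $\mathrm{Binomial}(\gamma n, s/\sqrt{n})$ variable with mean $\gamma s\sqrt{n}$, and then applies the same multiplicative Chernoff lower-tail bound with $\eta = 1-\theta/\gamma$ to get the stated exponent.
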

\begin{proof}
	Suppose the network is healthy and the adversary is not attacking. By
	\cref{def:network_states}, for all honest nodes $i$, $|\mathcal{R}_i| \ge
  \gamma n$ and there is no partition. The protocol
  (\alglines{alg:heartbeat}{line:hb-send-loop}{line:hb-send})
  samples from at least
	$\gamma n$ nodes. Let $X$ be the number of unique honest peer records received
	by node $i$ in round $r$. Then, $X$ is bounded below by a binomial
	distribution with parameters $\gamma n$ and $s / \sqrt{n}$. Hence,
	\(
		\Pr[X \le \theta s\sqrt{n}]
		\leq \Pr_{Y \sim \text{Binomial}(\gamma n, s/\sqrt{n})}
              [Y \leq \theta s\sqrt{n}]
		= \Pr_{Y \sim \text{Binomial}(\gamma n, s/\sqrt{n})}
    [Y \leq (1 - (1-\theta/\gamma)) \gamma s\sqrt{n}]
	\).
  Applying a standard Chernoff bound, we obtain
  $\exp\left(-\frac{s\sqrt{n}(\gamma - \theta)^2}{2\gamma}\right)$.
\end{proof}

\begin{lemma}[Restatement of \cref{lem:partition-size}]
  \stmtPartitionSize
\end{lemma}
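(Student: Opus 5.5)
The plan is a direct counting argument. We only need the cardinalities $|\Advset| = \alpha n$ and $|\Honset| = (1-\alpha)n$, together with the fact that a partition $(A,B)$ splits the honest nodes into disjoint sets. Concretely, $A \cap B = \emptyset$ and $A \cup B \subseteq \Honset$, so
\[
|A| + |B| \le (1-\alpha)n .
\]
We treat the adversarial set pessimistically, assuming every adversarial node is visible to both sides. This is the worst case for the bound and covers any strategy in which the adversary shows different subsets to each side. Under this assumption $|V_A| = |A| + \alpha n$ and $|V_B| = |B| + \alpha n$, since $A$ and $B$ are disjoint from $\Advset$. If fewer adversarial nodes are visible to $A$, then $|V_A|$ only decreases, so the bound below still holds.

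First, the hypothesis $|V_A| \le |V_B|$ cancels the common term $\alpha n$ and gives $|A| \le |B|$. Second, combining this with the disjointness bound yields
\[
2|A| \le |A| + |B| \le (1-\alpha)n, \qquad\text{so}\qquad |A| \le \tfrac{1-\alpha}{2}\,n .
\]
Third, adding back the adversarial nodes gives
\[
|V_A| \le \tfrac{1-\alpha}{2}\,n + \alpha n = \tfrac{1+\alpha}{2}\,n = \varphi n .
\]

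No step is a real obstacle. The only point needing care is the modeling convention for $V_A$ and $V_B$. If the adversary shows only a subset $\Advset_A \subseteq \Advset$ to side $A$, the comparison $|V_A| \le |V_B|$ no longer reduces to $|A| \le |B|$. In that case I would bound directly: $|V_A| \le \min(|A|, |B|) + \alpha n$ whenever $|A| \le |B|$. If instead $|A| > |B|$, I would apply the same bound to the side with fewer honest nodes and note that the smaller view $V_A$ is at most that side's view. Either way, $\min(|V_A|, |V_B|) \le \tfrac{1-\alpha}{2}n + \alpha n = \varphi n$. Following the statement's definition $V_A = A \cup \Advset$, however, the first argument already suffices.
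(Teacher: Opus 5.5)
Your proof is correct and is essentially the paper's argument: the paper argues by contradiction that if $|V_A| > \varphi n$ then also $|V_B| > \varphi n$, so $|V_A|+|V_B| > (1+\alpha)n$, contradicting $|V_A|+|V_B| = |A|+|B|+2\alpha n = (1+\alpha)n$. This is the same averaging fact you apply directly through $|A| \le |B|$. Your extension to an adversary that shows only a subset of $\Advset$ to each side is a harmless strengthening; the paper's sum identity then becomes the inequality $|V_A|+|V_B| \le (1+\alpha)n$, and its contradiction still goes through.
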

\begin{proof}
  Assume, for sake of
  contradiction, that $|V_A| > \varphi n$. Then, since we assumed $|V_A| \le
  |V_B|$, we have $|V_B| \ge \varphi n$ as well. Thus, $|V_A| + |V_B| > 2
  \varphi n = (1 + \alpha) n$. However, $|V_A| + |V_B| = |A| + |B| + 2
  \alpha n = (1 + \alpha) n$ which is a contradiction. Therefore, $|V_A| \leq
  \varphi n$.
\end{proof}

\begin{lemma}[Restatement of \cref{lem:attack-completeness}]
  \stmtAttackCompleteness
\end{lemma}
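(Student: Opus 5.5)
The plan mirrors the proof of \cref{lem:attack-soundness}, with the Chernoff bound applied to the upper tail instead of the lower tail. Fix an honest node $i$ in the smaller side $A$ of the partition. By \cref{lem:partition-size}, its full view $V_A = A\cup\Advset$ satisfies $|V_A|\le\varphi n$. Since $A$ is eclipsed (\cref{def:eclipsed}), every record $i$ can obtain in round $r$ belongs to a node of $V_A$. Let $X$ be the number of unique valid peer records $i$ receives in round $r$.

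The key structural point is that the adversary cannot inflate $X$. A record with identifier $\netpk_j$ is accepted only if $\PRNG[\nonce][\netpk_j]<s/\sqrt{n}$, where the seed $\nonce$ is freshly sampled by $i$. The valid identifiers in $V_A$ are fixed before $\nonce$ is drawn, because they are bound to stake in $\stakeroot$ via the ZKP. Hence each member of $V_A$ lands in $i$'s slice independently with probability $s/\sqrt{n}$, treating $\PRNG$ as a random function. Consequently $X$ is stochastically dominated by $Y\sim\mathrm{Binomial}(\varphi n, s/\sqrt{n})$, whose mean is $\mu=\varphi s\sqrt{n}$. To make this precise I will pad $V_A$ with dummy elements up to exactly $\varphi n$; this only increases the count.

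Next I apply the multiplicative Chernoff upper-tail bound $\Pr[Y\ge(1+\eta)\mu]\le\exp(-\eta^2\mu/(2+\eta))$ with $\eta=(\theta-\varphi)/\varphi>0$, which uses $\theta>\varphi$. Then
\[
\frac{\eta^2\mu}{2+\eta}=\frac{(\theta-\varphi)^2}{\varphi^2}\cdot\frac{\varphi s\sqrt{n}}{(2\varphi+\theta-\varphi)/\varphi}=\frac{s\sqrt{n}(\theta-\varphi)^2}{\theta+\varphi}.
\]
This gives $\Pr[X\ge\theta s\sqrt{n}]\le\exp\!\left(-\frac{s\sqrt{n}(\theta-\varphi)^2}{\theta+\varphi}\right)$. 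Because the flag fires exactly when $X<\theta s\sqrt{n}$, the complement is the claimed bound.

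The main obstacle is justifying the domination step. First, I must argue that a computationally bounded adversary cannot choose or grind identifiers after seeing $\nonce$; this follows from the seed being secret (TEE or local filtering) and from identities being fixed by the epoch commitment. Second, I must handle the fact that the flag is phrased via the new distinct records $U_t$ rather than the total count. Here I will note that $U_t\le X$, so bounding $X$ suffices.
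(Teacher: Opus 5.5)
Your proposal is correct and follows the same route as the paper. Both use \cref{lem:partition-size} to dominate the count by $\mathrm{Binomial}(\varphi n, s/\sqrt{n})$, then apply the upper-tail Chernoff bound with $\eta=(\theta-\varphi)/\varphi$, which yields exactly the exponent $s\sqrt{n}(\theta-\varphi)^2/(\theta+\varphi)$. Your additional remarks only make explicit details the paper's proof leaves implicit: identities are fixed by the epoch commitment before the fresh seed $\nonce$ is drawn, $V_A$ can be padded to size $\varphi n$, and $U_t$ is bounded by the slice count.
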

\begin{proof}
	Suppose the network is partitioned. Without loss of generality, let node $i$
  be in the smaller partition $A$ with $|A| \le \varphi n$ by
  \cref{lem:partition-size}. Let $X$ be the number of unique peer records
  received by node $i$ in round $r$
  via the heartbeat (\cref{alg:heartbeat}).
  Since the number of honest peer records
  reachable by node $i$ is bounded above by $\varphi n$ and each is sampled with
  probability $s / \sqrt{n}$, $X$ is bounded above by a binomial distribution with
  parameters $\varphi n$ and $s / \sqrt{n}$. Hence,
  $\Pr[X \le \theta s\sqrt{n} \mid \mathcal{N}_{partitioned}] \geq \Pr_{Y \sim
  \text{Binomial}(\varphi n, s/\sqrt{n})}[Y \leq \theta s\sqrt{n}] = \Pr_{Y \sim
  \text{Binomial}(\varphi n, s/\sqrt{n})}[Y \leq (1 - (1-\theta/\varphi)) \varphi s\sqrt{n}]$.
  Applying a standard Chernoff bound yields $1 -
  \exp\left(-\frac{s\sqrt{n}(\theta - \varphi)^2}{\theta + \varphi}\right)$.
\end{proof}

\subsection{Overlay Resistance Proof}

\paragraph{Parameter choice.}
The threshold $\theta$ is derived by dividing the interval $[\varphi, 1]$ into
three equal slices of width $\varepsilon = (1-\alpha)/6$, yielding
$\theta = \varphi + 2\varepsilon = (5+\alpha)/6$.
\Cref{fig:overlay-three-slice-line} illustrates the construction.
\usetikzlibrary{calc,decorations.pathreplacing,arrows.meta}

\colorlet{axisgrayV}{black}
\colorlet{thetaCV}{blue!70!black}
\colorlet{slackCV}{orange!70!black}
\colorlet{flagCV}{red!70!black}
\colorlet{textgrayV}{black}

\tikzset{
  AxisV/.style = {line width=0.9pt, draw=axisgrayV},
  TickV/.style = {line width=0.8pt, draw=axisgrayV},
  DotV/.style n args = {1}{circle, fill=#1, draw=white, line width=0.35pt, inner sep=1.45pt},
  CallV/.style = {-{Stealth[length=2.0mm,width=1.25mm]}, line width=0.75pt, shorten >=1pt},
  BraceV/.style args = {#1}{semithick, decorate,
      decoration={brace,#1,raise=1.7pt,amplitude=2.6pt,
      pre=moveto,pre length=1pt,post=moveto,post length=1pt}},
  labV/.style = {font=\small, inner sep=0.4pt},
  slabV/.style = {font=\footnotesize, inner sep=0.4pt},
  ysV/.style = {yshift=#1}
}
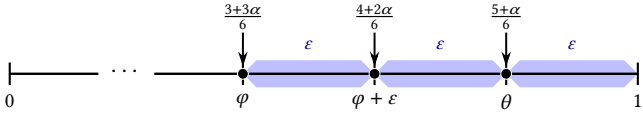
\begin{figure}[h]
    \centering
\begin{tikzpicture}[every node/.style={align=center}]
  \coordinate (Z)     at (0,0);
  \coordinate (Theta) at (31mm,0);
  \coordinate (H)     at (48.5mm,0);
  \coordinate (T)     at (66mm,0);
  \coordinate (One)   at (83.5mm,0);

  \def\barH{1.8mm}   %
  \def\wedge{1.8mm}  %
  \colorlet{sliceC}{blue!25}
  \fill[sliceC]
    (Theta) --
    ($(Theta)+(\wedge,\barH)$) -- ($(H)+(-\wedge,\barH)$) --
    (H) --
    ($(H)+(-\wedge,-\barH)$) -- ($(Theta)+(\wedge,-\barH)$) -- cycle;
  \fill[sliceC]
    (H) --
    ($(H)+(\wedge,\barH)$) -- ($(T)+(-\wedge,\barH)$) --
    (T) --
    ($(T)+(-\wedge,-\barH)$) -- ($(H)+(\wedge,-\barH)$) -- cycle;
  \fill[sliceC]
    (T) --
    ($(T)+(\wedge,\barH)$) -- ($(One)+(-\wedge,\barH)$) --
    (One) --
    ($(One)+(-\wedge,-\barH)$) -- ($(T)+(\wedge,-\barH)$) -- cycle;

  \coordinate (gapL) at ($(Z)!0.38!(Theta)$);
  \coordinate (gapR) at ($(Z)!0.62!(Theta)$);
  \draw[AxisV] (Z) -- (gapL);
  \draw[AxisV] (gapR) -- (One);

  \node[font=\normalsize, text=axisgrayV, inner sep=0pt] at ($(Z)!0.5!(Theta)$) {$\cdots$};

  \foreach \p in {Z,Theta,H,T,One} {
    \draw[TickV] ([ysV=1.5mm]\p) -- ([ysV=-1.5mm]\p);
  }

  \node[DotV={black}] at (Theta) {};
  \node[DotV={black}] at (H) {};
  \node[DotV={black}]  at (T) {};

  \node[labV, text=black, below=7pt] at (Z) {$0$};
  \node[labV, text=black, below=7pt] at (Theta) {$\varphi$};
  \node[labV, text=black, below=7pt] at (H) {$\varphi+\varepsilon$};
  \node[labV, text=black, below=7pt] at (T) {$\theta$};
  \node[labV, text=black, below=7pt] at (One) {$1$};

  \node[labV, text=black] (thlab) at ([ysV=7.6mm]Theta)
    {$\tfrac{3+3\alpha}{6}$};
  \draw[CallV, draw=black] ([ysV=-0.6pt]thlab.south) -- ([ysV=0.6mm]Theta);

  \node[labV, text=black] (hlab) at ([ysV=7.6mm]H)
    {$\tfrac{4+2\alpha}{6}$};
  \draw[CallV, draw=black] ([ysV=-0.6pt]hlab.south) -- ([ysV=0.6mm]H);

  \node[labV, text=black] (tlab) at ([ysV=7.6mm]T)
    {$\tfrac{5+\alpha}{6}$};
  \draw[CallV, draw=black] ([ysV=-0.6pt]tlab.south) -- ([ysV=0.6mm]T);

  \node[slabV, text=blue!60!black] at ($([ysV=3.8mm]Theta)!0.5!([ysV=3.8mm]H)$) {$\varepsilon$};
  \node[slabV, text=blue!60!black] at ($([ysV=3.8mm]H)!0.5!([ysV=3.8mm]T)$) {$\varepsilon$};
  \node[slabV, text=blue!60!black] at ($([ysV=3.8mm]T)!0.5!([ysV=3.8mm]One)$) {$\varepsilon$};
\end{tikzpicture}
    \caption{
     Chosen parameters for the analysis. For the largest cut, the
     non-$B$ contribution to a node's table is centered at
     $\varphi=(1+\alpha)/2$. The remaining interval $[\varphi,1]$ has width
     $(1-\alpha)/2$ and is split into three equal slices of width
     $\varepsilon=(1-\alpha)/6$. This places the detection threshold at
     $\theta=\varphi+2\varepsilon=(5+\alpha)/6$.}
    \label{fig:overlay-three-slice-line}
\end{figure}

\paragraph{Proof setup.}
We now prove the relevant lemmas and theorems for overlay resistance.
The following definitions and notations will be used in the proofs.
We write $\flag(V)$ for the number of nodes in $V \subseteq [n]$ that raise
the attack-detection flag.
We quantify over all cuts $(A,B)$ of the honest nodes where $A$ is
the smaller side of the cut:
\[
A,B \subseteq \Honset,
\quad A \cup B = \Honset,
\quad A \cap B = \emptyset,
\quad 1 \leq |A| \leq \lfloor |\Honset|/2 \rfloor
\]
\paragraph{Parameters.}
In addition to our protocol parameters $n, s, \theta$ and the adversary stake
$\alpha$ there are several security parameters and notations that we will be
using in our analysis:
\begin{itemize}
  \item We use $X_{i,A} \sim \text{Binomial}(|A| + \alpha n, s/\sqrt{n})$ to
  denote the random variable representing the number of entries in $i$'s
  $\otable$ that come from nodes in $A$ and the adversary. We use $X_{i,B}$ to
  denote the number of entries in $i$'s $\otable$ that come from nodes in $B$.
  \item We say a node $i \in A$ is $A$-heavy if $X_{i,A} \geq
  (\theta-\varepsilon)s\sqrt{n}$, meaning that $i$ has many entries from nodes
  in $A$ and the adversary, and thus relatively fewer entries from nodes in $B$.
  \item $\delta$ is the overlay tolerance parameter: whenever the overlay is
  disconnected across $(A,B)$, at least a $(1-\delta)$-fraction of nodes in $A$
  raise the attack-detection flag. In the good case, we allow at most a
  $(\delta/2)$-fraction of nodes in $A$ to be $A$-heavy.
  \item $\varepsilon$ is the slack that we define the good events with such that
  we get at least $\varepsilon s \sqrt{n}$ entries on the peer tables of nodes
  in $A$ that come from nodes in $B$.
\end{itemize}

\paragraph{Local events.}
We now define cut specific events.
\begin{itemize}
  \item $\nf(A)_{\delta} := \{ \flag(A) \leq (1-\delta) |A| \}$ is the event
  that at most a $(1-\delta)$-fraction of nodes in $A$ raise the
  attack-detection flag. Equivalently, at least a $\delta$-fraction of nodes
  in $A$ failed to raise the flag.
  \item $\dc(A)$ is the event that no node in $A$ has a connection to any node
  in $B$ in the sampled overlay.
  \item $\hv(A)_{\delta} := \{|\{i \in A \mid X_{i,A} \geq
  (\theta-\varepsilon)s\sqrt{n} \}| \geq \frac{\delta}{2} |A| \}$ is the event that at least
  a $(\delta/2)$-fraction of nodes in $A$ are $A$-heavy.
\end{itemize}

\paragraph{Global events.}
Now we can define the global events that quantify over the cuts of the honest
nodes.
\begin{itemize}
  \item $\HV_{\delta} := \exists\,{(A,B)}\; \hv(A)_{\delta}$ is the event
  that for some cut $(A,B)$ of the honest nodes, at least a $(\delta/2)$-fraction
  of nodes in $A$ are $A$-heavy.
  \item $\Bad_{\delta, \varepsilon} := \exists\,{(A,B)}\;
  (\nf(A)_{\delta} \land \dc(A))$ is the event that there exists a cut $(A,B)$
  of the honest nodes such that at least a $\delta$-fraction of nodes in $A$
  fail to raise the flag, and no node in $A$ has a connection to any node in $B$
  in the sampled overlay.
\end{itemize}
We have defined $\Bad := \Bad_{\delta, \varepsilon}$ to be the event
main event of interest that captures the adversary's success in partitioning the
honest nodes in the overlay without raising the attack-detection flag for most
nodes in $A$. We show that $\Pr[\Bad]$ is negligible for reasonable choices of
$\delta$ and $\varepsilon$.

Our proof proceeds by first showing that $\Pr[\HV_{\delta}]$ is negligible. We
then condition on $\neg \HV_{\delta}$ and show that $\Pr[\Bad]$ is negligible
as well. The key intuition is that if $\neg \HV_{\delta}$ holds, then for any
cut $(A,B)$ of the honest nodes, at least a $(1-\delta/2)$-fraction of nodes in
$A$ are not $A$-heavy, meaning that they have at least $\varepsilon s\sqrt{n}$
entries from nodes in $B$. Hence, each such node has a good chance of connecting
to $B$ in the overlay, and the probability that none of them connect to $B$ is
negligible.

\paragraph{Most nodes are not heavy.}
We first show that $\Pr[\HV_{\delta}]$ is negligible. Our first step is to
show that for a single node $i\in A$, being $A$-heavy is unlikely. Then, most of
$A$ is not $A$-heavy with high probability, and thus $\Pr[\HV_{\delta}]$ is
negligible as well after a union bound over the cuts of the honest nodes.
\begin{lemma}\label{lem:heavy-node} For $\varepsilon > 0$, any cut $(A,B)$ of
  the honest nodes, and any node $i \in A$, let $\Delta_A = (\theta - \varepsilon) - (\alpha +
  |A|/n)$. If $\Delta_A > 0$, then
  \[
    \Pr[i \in A \text{ is $A$-heavy}]
    \leq
    \exp\left(-
      \frac{\Delta_A^2 s \sqrt{n}}
      {2\left(\alpha + \frac{|A|}{n}\right) + \Delta_A}
    \right)
  \]
\end{lemma}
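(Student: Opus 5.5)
This is a single multiplicative Chernoff upper-tail bound applied to the binomial count $X_{i,A}$. By the parameter setup, $X_{i,A} \sim \text{Binomial}(|A| + \alpha n, s/\sqrt{n})$. The adversary respects the nonce $\nonceoverlay$, so it can only withhold records, never inject ones outside the slice. Hence the number of entries in $\otable$ coming from $A \cup \Advset$ is stochastically dominated by the number of nodes in $A \cup \Advset$ whose $\PRNG$ score falls below $s/\sqrt{n}$. If the paper's definition is read literally as that binomial, we can work with it directly; otherwise we first note the domination and bound the dominating binomial.

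The mean is $\mu = (|A| + \alpha n)\, s/\sqrt{n} = (\alpha + |A|/n)\, s\sqrt{n}$. Being $A$-heavy means $X_{i,A} \ge (\theta - \varepsilon) s\sqrt{n}$. Writing $p_A = \alpha + |A|/n$, the threshold is $(p_A + \Delta_A) s\sqrt{n} = (1+\eta)\mu$ with $\eta = \Delta_A / p_A$. The hypothesis $\Delta_A > 0$ guarantees $\eta > 0$, so we are genuinely in the upper tail.

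We apply the form of Chernoff valid for all $\eta > 0$:
\[
\Pr[X \ge (1+\eta)\mu] \le \exp\!\left(-\frac{\eta^2 \mu}{2+\eta}\right).
\]
Substituting $\eta = \Delta_A/p_A$ and $\mu = p_A s\sqrt{n}$ gives the exponent
\[
\frac{(\Delta_A/p_A)^2\, p_A s\sqrt{n}}{2 + \Delta_A/p_A} = \frac{\Delta_A^2 s\sqrt{n}}{2p_A + \Delta_A},
\]
which is exactly the stated bound.

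The only point needing care is the choice of Chernoff form. We cannot use the $\eta \le 1$ variant $\exp(-\eta^2\mu/3)$, because $\eta$ may exceed $1$ when $|A|$ and $\alpha$ are small. The $2+\eta$ form covers every $\eta > 0$ and follows from the standard bound $\left(e^\eta/(1+\eta)^{1+\eta}\right)^\mu$ via $\ln(1+\eta) \ge 2\eta/(2+\eta)$. The domination argument is routine, since each node's slice membership is independent across identifiers under the $\PRNG$ random-oracle model.
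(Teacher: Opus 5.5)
Your proposal is correct and is essentially the paper's proof: the paper applies the Chernoff upper-tail bound $\Pr[X \ge \mu + t] \le \exp\bigl(-t^2/(2\mu + t)\bigr)$ directly to $X_{i,A} \sim \text{Binomial}(|A|+\alpha n, s/\sqrt{n})$ with $t = \Delta_A s\sqrt{n}$. That is exactly your multiplicative form $\exp\bigl(-\eta^2\mu/(2+\eta)\bigr)$ after substituting $\eta = t/\mu = \Delta_A/(\alpha + |A|/n)$, and your extra remarks (the stochastic-domination justification and the derivation via $\ln(1+\eta) \ge 2\eta/(2+\eta)$, valid for all $\eta > 0$) are correct refinements that the paper leaves implicit.
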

\begin{proof}
  Let $i \in A$ and consider $X_{i,A} \sim \text{Binomial}(|A| + \alpha n,
  s/\sqrt{n})$. We know that $\E[X_{i,A}] = s\sqrt{n}(\alpha + \frac{|A|}{n})$.
  Since $(\theta-\varepsilon) = \Delta_As\sqrt{n} + \E[X_{i,A}]$, by applying a
  Chernoff bound, we can bound the probability that $X_{i,A}$ is greater than
  $(\theta-\varepsilon)s\sqrt{n}$ as follows:
  \begin{align*}
    \Pr[X_{i,A} \geq (\theta-\varepsilon)s\sqrt{n}]
    &\leq \exp\left(
      -\frac{(\Delta_A s \sqrt{n})^2}
      {2\E[X_{i,A}] + \Delta_A s \sqrt{n}}
    \right) \\
    &=
    \exp\left(
      -\frac{\Delta_A^2 s \sqrt{n}}
      {2\left(\alpha + \frac{|A|}{n}\right) + \Delta_A}
    \right)
  \end{align*}
  which proves our claim.
\end{proof}

\begin{lemma} \label{lem:heavy} For any $\delta > 0$, assume $0 < \varepsilon
  < \theta - \frac{1+\alpha}{2}$. Then,
  \[
    \Pr[\HV_{\delta}]
    \leq
    \exp \left(
      |\Honset| \left(
        \frac{2ex}{\delta}
      \right)^{\delta/2}
     \right) - 1, \qquad    x = \exp\left(
      \frac{-\varepsilon^2 s \sqrt{n}}{1+\alpha+\varepsilon}
    \right)
  \]
\end{lemma}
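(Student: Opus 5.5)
The plan is to bound the heaviness probability of a single node uniformly over all cuts. Independence across nodes then controls how many nodes in $A$ can be heavy, and a union bound over cuts, grouped by $|A|=k$, sums to the stated exponential expression.

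\emph{Step 1: uniform per-node bound.} Fix a cut with $|A| = k \le |\Honset|/2 = (1-\alpha)n/2$. Then $\mu_A := \alpha + |A|/n \le \alpha + (1-\alpha)/2 = \varphi$. I will take $\Delta_A \ge \varepsilon$ as given: with the paper's choice $\theta = \varphi + 2\varepsilon$ we get $\Delta_A = \theta - \varepsilon - \mu_A \ge \theta-\varepsilon-\varphi = \varepsilon$. Note that the hypothesis $\varepsilon < \theta - \varphi$ by itself only yields $\Delta_A > 0$. The exponent in \cref{lem:heavy-node}, namely $\Delta^2 s\sqrt n/(2\mu+\Delta)$, is increasing in $\Delta$ and decreasing in $\mu$. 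Substituting $\Delta_A \ge \varepsilon$ and $2\mu_A \le 2\varphi = 1+\alpha$ therefore gives
\[
  \Pr[i \text{ is } A\text{-heavy}] \le \exp\left(-\frac{\varepsilon^2 s\sqrt n}{1+\alpha+\varepsilon}\right) = x
\]
for every $i\in A$ and every cut.

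\emph{Step 2: many heavy nodes in one cut.} The quantity $X_{i,A}$ depends only on which nodes fall in $i$'s slice. By \cref{def:omniscient-overlay}, this is fixed by $i$'s own nonce $\nonceoverlay$, which the adversary cannot alter. So the heaviness indicators of distinct $i\in A$ are independent. This independence is the step I expect to need the most care: I must argue that adversarial withholding can only lower $B$-counts and cannot correlate the slice draws. Given independence, let $m=\lceil \delta k/2\rceil$. Then
\[
  \Pr[\hv(A)_\delta] \le \binom{k}{m} x^{m} \le \left(\frac{ek}{m}\right)^m x^m \le \left(\frac{2ex}{\delta}\right)^m \le \left(\frac{2ex}{\delta}\right)^{\delta k/2} =: y^k,
\]
where $y = (2ex/\delta)^{\delta/2}$. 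The last inequality assumes $2ex/\delta\le 1$. Otherwise $y\ge 1$ and the claimed bound is at least $e-1>1$, so it holds trivially.

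\emph{Step 3: union over cuts.} There are at most $\binom{|\Honset|}{k} \le |\Honset|^k/k!$ cuts with $|A|=k$. Hence
\[
  \Pr[\HV_\delta] \le \sum_{k\ge 1} \frac{(|\Honset|\,y)^k}{k!} = \exp\bigl(|\Honset|\,y\bigr) - 1,
\]
which is the claimed bound.
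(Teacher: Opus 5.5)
Your proof is correct and follows the paper's argument: the per-node bound $x$ comes from \cref{lem:heavy-node}, the number of heavy nodes in a fixed cut gets a $(2ex/\delta)^{\delta|A|/2}$ tail, and a union bound runs over cuts grouped by size. The differences are cosmetic: you bound the tail by a union over $\lceil \delta k/2\rceil$-subsets where the paper uses a binomial Chernoff tail, and you use $\binom{|\Honset|}{k}\le |\Honset|^k/k!$ where the paper uses the binomial theorem and $1+y\le e^y$. You are right that $\Delta_A\ge\varepsilon$ needs $\theta\ge\varphi+2\varepsilon$ and not just the stated hypothesis $\varepsilon<\theta-\varphi$; the paper's proof asserts $\Delta_A\ge\varepsilon$ while tacitly relying on its choice $\theta=(5+\alpha)/6$.
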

\begin{proof}
  We first show that the bound on $i \in A$ being $A$-heavy from
  \cref{lem:heavy-node} gives us a bound on the probability that for a
  given cut $(A,B)$, $\hv(A)_{\delta}$ holds. Fix a cut $(A,B)$ with $|A|\le
  \lfloor |\Honset|/2\rfloor$.
  Since $\alpha + \frac{|A|}{n}\le \frac{1+\alpha}{2}$, we
  have $\Delta_A \ge \varepsilon$, and the Chernoff bound in
  \cref{lem:heavy-node} implies that for each fixed $i\in A$,
  \[
    \Pr[i \in A \text{ is $A$-heavy}] \le x.
  \]
  Let $Y$ be the number of $A$-heavy nodes in $A$. Then $Y$ is stochastically
  dominated by $\text{Binomial}(|A|,x)$. Using the tail bound $\Pr[Y \geq \frac{\delta}{2}
  k] \le \left(\frac{e \E[Y]}{\frac{\delta}{2} k}\right)^{\frac{\delta}{2} k}$, we have
  \begin{align*}
    \Pr[\hv(A)_{\delta}]
    &\leq
    \Pr_{Y \sim \text{Binomial}(|A|, x)}\!\left[Y \geq \tfrac{\delta}{2} k\right]
    \\
    &\leq
    \left(
      \frac{2e \E[Y]}{\delta |A|}
    \right)^{\frac{\delta}{2} |A|}
    =
    \left(
      \frac{2e x}{\delta}
    \right)^{\frac{\delta}{2} |A|}
  \end{align*}
  We proceed by a union bound over the cuts of the honest nodes:
  \begin{align*}
    \Pr[\HV_{\delta}]
    &\leq \sum_{(A,B)} \Pr[\hv(A)_{\delta}] \\
    &\leq
    \sum_{k=1}^{\lfloor |\Honset|/2 \rfloor} \binom{|\Honset|}{k}
    \left(
      \frac{2e x}{\delta}
    \right)^{\frac{\delta}{2} k} \\
    &\leq
    \sum_{k=1}^{|\Honset|}
     \binom{|\Honset|}{k}
    \left(
      \frac{2e x}{\delta}
    \right)^{\frac{\delta}{2} k} \\
    &\leq
    \left(1 + \left(
      \frac{2e x}{\delta}
    \right)^{\delta/2}\right)^{|\Honset|} - 1 \\
    &\leq
    \exp\left(
      |\Honset| \left(
        \frac{2e x}{\delta}
      \right)^{\delta/2}
     \right) - 1
  \end{align*}
\end{proof}

\paragraph{Disconnection without flags is unlikely.}
Now that we showed that $\Pr[\HV_{\delta}]$ is negligible, we can condition on
$\neg \HV_{\delta}$ and show that $\Pr[\Bad]$ is negligible as well. We
directly show that for any cut $(A,B)$ of the honest nodes, there must either be
many nodes in $A$ that raise the attack-detection flag, or many nodes in $A$
that have good chances of connecting to $B$.
\begin{lemma} \label{lem:bad-heavy} Assume $\delta > 0$. Then,
  \[
    \Pr[\Bad \mid \neg \HV_{\delta}]
    \leq
    \exp(|\Honset|e^{-\frac{\delta}{2}\pconn\varepsilon s \sqrt{n}}) - 1
  \]
\end{lemma}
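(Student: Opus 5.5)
Fix a cut $(A,B)$ with $|A|=k$ and work on the event $\neg\HV_\delta$. The plan is to bound $\Pr[\nf(A)_\delta \land \dc(A)]$ for this single cut by a per-node overlay-sampling argument. Then a union bound over cuts will reproduce the $\exp(|\Honset|\cdot\,\cdot)-1$ shape, exactly as in the proof of \cref{lem:heavy}.

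\emph{Step 1 (a good set of size $\tfrac{\delta}{2}|A|$).} Under $\nf(A)_\delta$, at least $\delta|A|$ nodes of $A$ do not raise the flag. Under $\neg\HV_\delta$, at most $\tfrac{\delta}{2}|A|$ nodes of $A$ are $A$-heavy. Hence at least $\tfrac{\delta}{2}|A|$ nodes $i\in A$ are simultaneously unflagged and not $A$-heavy. For such an $i$, its $\otable$ holds at least $\theta s\sqrt{n}$ distinct records, since it did not flag, and fewer than $(\theta-\varepsilon)s\sqrt{n}$ of them come from $A\cup\Advset$. So at least $\varepsilon s\sqrt{n}$ entries of $i$'s table come from $B$. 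This is the key counting step: whatever the adversary does, it cannot both suppress the flag and keep $B$-entries out of these tables. Its only freedom is to withhold records, and by the constraint in \cref{def:omniscient-overlay} the slice itself is fixed by $\nonceoverlay$.

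\emph{Step 2 (overlay sampling).} Each node includes each $\otable$ entry independently with probability $\pconn$. This coin-flipping happens after the adversary has fixed the table contents, and it uses the node's private randomness. A single good node therefore has no $B$-neighbor with probability at most $(1-\pconn)^{\varepsilon s\sqrt{n}}\le e^{-\pconn\varepsilon s\sqrt{n}}$. Independence across nodes gives, for the cut,
\[
\Pr[\nf(A)_\delta\land \dc(A)\mid \neg\HV_\delta]\le e^{-\frac{\delta}{2}|A|\,\pconn\varepsilon s\sqrt{n}}.
\]
The main obstacle is here. The set of good nodes is chosen adversarially, and we are conditioning on $\neg\HV_\delta$. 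To make the argument rigorous, I would fix the tables and flags first; all of this is determined before overlay sampling and is independent of the $\pconn$ coins, while $\HV$ depends only on table contents. Then $\dc(A)$ requires every node in the good set, which is now a deterministic set of size at least $\tfrac{\delta}{2}|A|$, to miss all of its $\ge\varepsilon s\sqrt{n}$ $B$-entries. Since the coins are independent of the tables, conditioning on the tables leaves the bound above intact.

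\emph{Step 3 (union bound).} Summing over cuts,
\[
\Pr[\Bad\mid\neg\HV_\delta]\le\sum_{k\ge1}\binom{|\Honset|}{k}\bigl(e^{-\frac{\delta}{2}\pconn\varepsilon s\sqrt n}\bigr)^{k}\le\bigl(1+e^{-\frac{\delta}{2}\pconn\varepsilon s\sqrt n}\bigr)^{|\Honset|}-1,
\]
and $1+y\le e^y$ gives the claimed $\exp\bigl(|\Honset|e^{-\frac{\delta}{2}\pconn\varepsilon s\sqrt n}\bigr)-1$.
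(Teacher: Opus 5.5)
Your proposal is correct and follows the paper's proof step for step. It uses the same counting argument to get more than $\frac{\delta}{2}|A|$ unflagged, non-$A$-heavy nodes with $X_{i,B}\ge\varepsilon s\sqrt{n}$, the same per-node bound $(1-\pconn)^{\varepsilon s\sqrt{n}}\le e^{-\pconn\varepsilon s\sqrt{n}}$ multiplied over independent nodes, and the same binomial-theorem union bound over cuts. Your extra step of fixing the tables (which determine both $\HV_\delta$ and the flags) before the independent overlay coins, and conditioning on the global event $\neg\HV_\delta$ throughout, makes rigorous what the paper leaves implicit when it conditions each union-bound term on $\neg\hv(A)_\delta$.
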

\begin{proof}
  Consider the event $\Bad$ and condition on $\neg \HV_{\delta}$. Then, for
  any cut $(A,B)$ of the honest nodes, we have $\neg \hv(A)_{\delta}$, which
  means that at most a $(\delta/2)$-fraction of nodes in $A$ are $A$-heavy. Hence,
  at most $\frac{\delta}{2} |A|$ nodes $i \in A$ satisfy:
  \[
        X_{i,A} \geq (\theta-\varepsilon)s\sqrt{n}
  \]
  We also know that since $\Bad$ holds, there exists some cut $(A,B)$ of honest
  nodes such that $\nf(A)_{\delta}$ and $\dc(A)$ hold. Since
  $\nf(A)_{\delta}$ holds, at least a $\delta$-fraction of nodes in $A$ fail
  to raise the attack-detection flag:
  \[
      X_{i,A} + X_{i,B} \geq \theta s \sqrt{n}
  \]
  Combining the two inequalities above, at least a $(\delta/2)$-fraction of nodes in
  $A$ are not $A$-heavy and fail to raise the flag, which means that there are
  at least $\frac{\delta}{2}|A|$ nodes in $A$ that satisfy:
  \[
    \begin{cases}
      X_{i,A} < (\theta-\varepsilon)s\sqrt{n} \\
      X_{i,B} \geq \theta s \sqrt{n} - X_{i,A}
    \end{cases}
  \]
  Hence, for at least a $(\delta/2)$-fraction of nodes in $A$, we
  have
  \[
    X_{i,B} \geq \varepsilon s \sqrt{n}
  \]
  Each of these entries become a connection to $B$ in the overlay with
  probability $\pconn$. Hence,
  \[
    \Pr[i \not\to_{\text{overlay}} B]
    \leq (1 - \pconn)^{\varepsilon s \sqrt{n}}
    \leq e^{-\pconn\varepsilon s \sqrt{n}}
  \]
  and therefore we have,
  \begin{align*}
    \Pr[\dc(A) \land \nf(A)_{\delta} &\mid \neg \hv(A)_{\delta}]\\
    &\leq
    \exp\!\left(-\tfrac{\delta}{2}\,p\varepsilon s \sqrt{n}|A|\right)
  \end{align*}
  Finally, we can take a union bound over the cuts of the honest nodes to get
  \begin{align*}
    \Pr[\Bad \mid &\neg \HV_{\delta}] \\
    &\leq \sum_{(A,B)} \Pr[\dc(A) \land \nf(A)_{\delta}
    \mid \neg \hv(A)_{\delta}] \\
    &=
    \sum_{k=1}^{\lfloor |\Honset|/2 \rfloor} \binom{|\Honset|}{k}
    \exp\!\left(-\tfrac{\delta}{2}\,p\varepsilon s \sqrt{n} k\right) \\
    &\leq
    \sum_{k=1}^{|\Honset|} \binom{|\Honset|}{k}
    \exp\!\left(-\tfrac{\delta}{2}\,p\varepsilon s \sqrt{n} k\right) \\
    &\leq
    (1 + e^{-\frac{\delta}{2}\pconn\varepsilon s \sqrt{n}})^{|\Honset|} - 1 \\
    &\leq
    \exp(|\Honset|e^{-\frac{\delta}{2}\pconn\varepsilon s \sqrt{n}}) - 1
  \end{align*}
\end{proof}

\begin{theorem}[Restatement of \cref{thm:overlay}]
  \stmtOverlay
\end{theorem}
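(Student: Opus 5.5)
We combine \cref{lem:heavy} and \cref{lem:bad-heavy}. The failure event of the theorem is exactly $\Bad = \Bad_{\delta,\varepsilon}$: some cut $(A,B)$ is disconnected in the overlay ($\dc(A)$) while at most a $(1-\delta)$-fraction of $A$ flags ($\nf(A)_\delta$). By total probability,
\[
  \Pr[\Bad] \le \Pr[\HV_\delta] + \Pr[\Bad \mid \neg\HV_\delta],
\]
and we bound the two terms separately. Throughout we fix $\varepsilon = (1-\alpha)/6$ and $\theta = (5+\alpha)/6 = \varphi + 2\varepsilon$. Then $\theta - \frac{1+\alpha}{2} = 2\varepsilon > \varepsilon > 0$, so the hypothesis of \cref{lem:heavy} holds.

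\emph{First term.} \cref{lem:heavy} gives $\Pr[\HV_\delta] \le \exp(|\Honset| (2ex/\delta)^{\delta/2}) - 1$ with $x = \exp(-\varepsilon^2 s\sqrt n/(1+\alpha+\varepsilon))$. Since $\alpha,\delta,\varepsilon$ are constants,
\[
  (2ex/\delta)^{\delta/2} = (2e/\delta)^{\delta/2}\, e^{-c_1 s\sqrt n}
\]
for a constant $c_1 = \delta\varepsilon^2/(2(1+\alpha+\varepsilon)) > 0$. Hence $|\Honset|(2ex/\delta)^{\delta/2} \le O(n)e^{-c_1 s\sqrt n} = e^{-\Theta(\sqrt n)}$. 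Using $e^y - 1 \le 2y$ for $y \le 1$, which holds for large $n$, the first term is $e^{-\Theta(\sqrt n)}$.

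\emph{Second term.} \cref{lem:bad-heavy} gives $\Pr[\Bad\mid\neg\HV_\delta] \le \exp(|\Honset| e^{-\frac\delta2 \pconn\varepsilon s\sqrt n}) - 1$. Substituting the degree assumption $\pconn s\sqrt n \ge C n^\kappa \log n$ yields
\[
  |\Honset|\, e^{-\frac\delta2 \pconn\varepsilon s\sqrt n} \le n \cdot n^{-\frac{\delta\varepsilon C}{2} n^\kappa}.
\]
The condition $C > 2/(\delta\varepsilon)$ makes $\beta := \delta\varepsilon C/2 > 1$. This gives $n^{1-\beta n^\kappa}$, which is $n^{-\Theta(n^\kappa)}$ for $\kappa > 0$ and $n^{-(\beta-1)} = n^{-\Theta(1)}$ for $\kappa = 0$. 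In both cases the quantity tends to $0$, and linearizing $e^y - 1 \le 2y$ again gives the bound $n^{-\Theta(n^\kappa)}$. Adding the two terms yields the claimed probability $1 - e^{-\Theta(\sqrt n)} - n^{-\Theta(n^\kappa)}$. On the complement of $\Bad$, every cut $(A,B)$ with $\dc(A)$ has $\flag(A) > (1-\delta)|A|$, which is the conclusion.

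\emph{Main obstacle: the $\kappa=0$ boundary case.} Exactly here the constraint $C > 2/(\delta\varepsilon)$ is needed: it makes $\beta$ strictly exceed $1$, so the per-cut decay beats the factor $|\Honset| \le n$ coming from the union bound over cut sizes. I would also check two modelling points. First, that the lemmas' conditioning is legitimate: the omniscient adversary chooses table contents, but overlay sampling with probability $\pconn$ happens afterwards, independently, using honest randomness. Second, that $A$-heaviness is measured against the nonce-determined slice, which the adversary cannot alter. Both are implicit in \cref{lem:heavy-node,lem:bad-heavy}, so for this theorem they only need to be stated.
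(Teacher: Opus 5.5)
Your proposal is correct and follows the paper's proof essentially step for step: the same split $\Pr[\Bad]\le\Pr[\HV_\delta]+\Pr[\Bad\mid\neg\HV_\delta]$, the same use of \cref{lem:heavy,lem:bad-heavy} with $\varepsilon=(1-\alpha)/6$ and $\theta=(5+\alpha)/6$, and the same substitution $\delta\varepsilon C/2>1$ to handle the second term for both $\kappa>0$ and $\kappa=0$. Your explicit check of the hypothesis of \cref{lem:heavy}, and your observation that $C>2/(\delta\varepsilon)$ is only really needed when $\kappa=0$, are accurate refinements of points the paper leaves implicit; one small correction is that the theorem's failure event is contained in $\Bad$ rather than equal to it, because $\dc(A)$ only excludes $A\to B$ edges, but containment is all your argument uses.
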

\begin{proof}
  We first define the bad event that the adversary successfully partitions the
  honest nodes in the overlay without raising the attack-detection flag for most
  nodes in the smaller side of the cut:
  \[
    \Bad := \exists\,(A,B)\; (\nf(A)_\delta \land \dc(A)),
  \]
  where $\nf(A)_\delta$ is the event that at most a $(1-\delta)$-fraction of
  nodes in $A$ raise the attack-detection flag, and $\dc(A)$ is the event that
  no node in $A$ has a connection to any node in $B$ in the sampled overlay.
  Throughout the proof, we use the parameter selection in
  \cref{fig:overlay-three-slice-line}.

  Let $\rho$ be defined as:
  \[
    \rho = \left( \frac{2e x}{\delta}\right)^{\delta/2} \qquad \text{where}
    \quad x = \exp\left(
      \frac{-\varepsilon^2 s \sqrt{n}}{1+\alpha+\varepsilon}
      \right).
  \]

  By the law of total probability, we have
  \begin{align*}
    \Pr[\Bad]
    &=
    \Pr[\Bad \mid \HV_{\delta}] \Pr[\HV_{\delta}]\\
    &\qquad\qquad+
    \Pr[\Bad \mid \neg \HV_{\delta}] \Pr[\neg \HV_{\delta}] \\
    &\leq
    \Pr[\HV_{\delta}]
    +
    \Pr[\Bad \mid \neg \HV_{\delta}]
  \end{align*}
  Using the bounds from \cref{lem:heavy,lem:bad-heavy}, we
  can further bound this as
  \[
    \Pr[\Bad]
    \leq
    \left(
      \exp\left(|\Honset|\rho\right) - 1
    \right)
    +
    \left(
      \exp\left(|\Honset|e^{-\frac{\delta}{2}\pconn\varepsilon s \sqrt{n}}\right) - 1
    \right).
  \]

  Now observe that for $c = (\varepsilon^2 s)/(1+\alpha+\varepsilon)$
  \[
    \ln\rho
    =
    \frac{\delta}{2}\!\left(1+\ln 2+\ln x-\ln\delta\right)
    =
    \frac{\delta}{2}\!\left(1 + \ln 2 - c \sqrt{n} - \ln\delta\right).
  \]
  Since $\delta$ is a constant, we have $\ln\rho = -\Theta(\sqrt n)$,
  and hence $\rho = e^{-\Theta(\sqrt n)}$.
  We then get $|\Honset|\rho \le n e^{-\Theta(\sqrt n)} = e^{-\Theta(\sqrt n)}$.
  In particular, using $e^x - 1 = x + O(x^2)$, we obtain
  \[
    \exp\left(|\Honset|\rho\right)-1
    =
    |\Honset|\rho + O\left((|\Honset|\rho)^2\right)
    =
    e^{-\Theta(\sqrt n)}.
  \]

  For the second term, set $c = \frac{\delta\varepsilon C}{2}$ where $C$ is the
  constant from the degree assumption, so that
  $\frac{\delta}{2}\pconn\varepsilon s\sqrt{n} \geq c n^\kappa \log n$.
  By assumption $C > 2/(\delta\varepsilon)$, hence $c > 1$.
  Therefore
  \[
    |\Honset|e^{-\frac{\delta}{2}\pconn\varepsilon s\sqrt n}
    \le
    n e^{-c\, n^\kappa\log n}
    =
    n^{1-c\, n^\kappa}.
  \]
  When $\kappa > 0$ we have $c\, n^\kappa \to \infty$, so
  $n^{1-c\, n^\kappa} = n^{-\Theta(n^\kappa)}$.
  When $\kappa = 0$ the exponent is $1 - c < 0$, so
  $n^{1-c} = n^{-\Theta(1)}$.
  In both cases applying $e^x - 1 = x + O(x^2)$ again gives
  \[
    \exp\left(|\Honset|e^{-\frac{\delta}{2}\pconn\varepsilon s\sqrt n}\right)-1
    =
    n^{-\Theta(n^\kappa)}.
  \]
  Combining the two bounds, we get:
  \[
    \Pr[\Bad] \le e^{-\Theta(\sqrt n)} + n^{-\Theta(n^\kappa)}.
  \]
\end{proof}

\subsection{Slashing Proofs}

\begin{lemma}[Restatement of \cref{lem:stake-secret-recovery}]
  \stmtStakeSecretRecovery
\end{lemma}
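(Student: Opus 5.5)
The plan is to argue in two stages: first show that a violation forces two valid requests under distinct commitments in round $r$, then show that any two such requests determine $\stakesk_i$ by solving a linear system.

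\emph{Stage 1: a violation yields two distinct commitments.} Suppose node $i$ sends valid requests (ones responders accept) to more than $s\sqrt{n}$ distinct peers in round $r$. Each accepted request carries a $\commitrecord=\langle\reqcommit,\slashshare,\shareproof\rangle$ and an opening $\langle\ind,\indproof\rangle$. The opening shows that the recipient's $\netpk$ sits at position $\ind$ of the vector committed by $\reqcommit$. A single commitment has exactly $s\sqrt{n}$ positions. So, by position binding of the Merkle vector commitment, one $\reqcommit$ admits valid openings to at most $s\sqrt{n}$ distinct identities, except with negligible probability (this relies on collision resistance). By pigeonhole, two of the accepted requests must therefore carry commitments $\reqcommit_1\neq\reqcommit_2$ for the same $\netpk_i$ and the same round $r$.

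\emph{Stage 2: two distinct commitments reveal $\stakesk_i$.} By soundness of the NIZK for $\Rshare$, each valid $\shareproof$ certifies that its share is well formed:
\[
\slashshare_j=\hashshare(\sk_i,r)\cdot\reqcommit_j+\stakesk_i,\qquad j=1,2.
\]
Both shares use the same $\sk_i$, since $\Rshare$ ties the witness to $\netpk_i$. Both also use the same round $r$, which responders check for freshness. Hence the multiplier $h=\hashshare(\sk_i,r)$ is identical in the two equations. Working in the field over which shares are defined, subtract the equations to obtain
\[
h=\frac{\slashshare_1-\slashshare_2}{\reqcommit_1-\reqcommit_2},
\]
which is well defined because $\reqcommit_1\neq\reqcommit_2$ as field elements. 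Then $\stakesk_i=\slashshare_1-h\cdot\reqcommit_1$. This takes $O(1)$ field operations, and anyone can check it via $\hashid(\stakesk_i)=\stakeID_i$ before slashing.

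\emph{Main obstacle.} The delicate step is Stage 2's reliance on extraction and soundness. We must ensure the binding relation really fixes the same $\sk_i$ behind $\netpk_i$ in both proofs. We must also ensure that distinct commitments remain distinct after encoding into the field, so the denominator is nonzero; I would handle this by hashing commitments into the field, which preserves distinctness except with negligible probability under collision resistance. I would state the lemma as holding except with negligible probability in $\lambda$, absorbing the binding and soundness failures into that error.
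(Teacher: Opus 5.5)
Your proposal is correct and follows essentially the same route as the paper: a pigeonhole argument over the committed recipient list (each $\reqcommit$ covers at most $s\sqrt{n}$ positions) forces two requests with distinct commitments in round $r$, and the two affine share equations $\slashshare_X = \hashshare(\sk_i,r)\cdot\reqcommit_X + \stakesk_i$ are then solved for the intercept $\stakesk_i$. Your explicit appeals to position binding, NIZK soundness, the binding of the witness $\sk_i$ to $\netpk_i$, and distinctness of commitments as field elements make precise what the paper leaves implicit; the paper mentions only the last of these, as a negligible-probability caveat in its appendix slashing derivation.
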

\begin{proof}
  Suppose a node $i$ violates the slashing condition in some round $r$ by
  contacting $s\sqrt{n}+1$ distinct peers. Let $i$'s stake credentials be derived from
  secret $\sk_i$, resulting in $\stakeID_i$, $\stakesk_i$, and $\netpk_i$.

  We first argue that if a node violates the slashing condition, then that node
  must have sent two requests with distinct commitments. Each request sent by
  $i$ includes a commitment $C$
  (\algline{alg:heartbeat}{line:hb-commit})
  to a receiver set of size at most $s\sqrt{n}$ and a
  opening proving that the receiver is a member of $C$. By the pigeonhole
  principle, since $i$ sent $s\sqrt{n}+1$ requests to distinct peers, each with a
  commitment of size at most $s\sqrt{n}$, at least two of these requests must contain
  distinct commitments $C_A \neq C_B$.

  Next, we argue that two distinct commitments reveal the stake secret. The
  arithmetic below is over the finite field $\mathbb{F}_q$ for a large prime $q$
  (where hashes are encoded as field elements). For round $r$, the protocol
  derives the per-round slope
  \[
    a_r \gets \hashshare(\sk_i, r) \in \mathbb{F}_q
  \]
  Each request contains a share value constructed (\algline{alg:heartbeat}{line:hb-share}),
  whose well-formedness is verified by each responder
  (\algline{alg:send_response}{line:resp-verify-share}), as
  \[
    \slashshare_X = a_r \cdot C_X + \stakesk_i \pmod q
    \qquad \text{for } X \in \{A,B\}
  \]
  which is a function of the commitment $C$. Thus, the two commitments $C_A \neq
  C_B$ induce two distinct points on the line $P(x) = a_r x + \stakesk_i$:
  \[
    (x_A,y_A) = (C_A,\slashshare_A)
    \quad\text{and}\quad
    (x_B,y_B) = (C_B,\slashshare_B)
  \]
  Hence, we can recover the intercept $\stakesk_i$ by interpolation:
  \[
    a_r = \frac{y_A-y_B}{x_A-x_B},
    \qquad
    \stakesk_i = y_A - a_r x_A = P(0).
  \]
  Therefore, $\stakesk_i$ can be efficiently recovered from the two requests.
\end{proof}

\begin{lemma}[Restatement of \cref{lem:slashing-soundness}]
  \stmtSlashingSoundness
\end{lemma}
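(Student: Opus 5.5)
An honest node that sends at most $s\sqrt{n}$ requests in round $r$ issues every one of them under a single commitment $\reqcommit$. I would therefore argue by a reduction showing that everything the adversary sees about $\stakesk_i$ can be simulated without it, except for negligible probability. First I would record what the adversary observes. Within round $r$, node $i$ publishes only one share $\slashshare = a_r\cdot\reqcommit + \stakesk_i$, where $a_r = \hashshare(\sk_i,r)$. Across the other rounds $r'$ it publishes shares $a_{r'}\reqcommit_{r'}+\stakesk_i$ with fresh slopes $a_{r'}$. It also publishes the ZK proofs $\stakeproof$ and $\shareproof$, its $\netpk$, and $\stakeID_i=\hashid(\stakesk_i)$ on-chain. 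Since $\hashshare$ is a random oracle and $\sk_i$ has $\lambda$ bits of entropy, each $a_{r'}$ is uniform in $\mathbb{F}_q$ and independent across rounds, unless the adversary queries the oracle at $(\sk_i,\cdot)$, which happens with probability at most $Q/2^\lambda$ for $Q$ oracle queries.

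Next I would run a hybrid argument. In the first hybrid, every $\stakeproof$ and $\shareproof$ is replaced by a simulated proof, which costs $\negl(\lambda)$ by zero-knowledge. In the second hybrid, the slopes $a_{r'}$ are replaced by truly uniform field elements, which costs $Q/2^\lambda$ by the random-oracle argument above. In this hybrid, each published share is a one-time pad $a_{r'}\reqcommit_{r'}+\stakesk_i$ with a uniform, independent $a_{r'}$. Because $\reqcommit_{r'}\neq 0$ except with negligible probability (collision resistance of the commitment), each share is uniform and independent of $\stakesk_i$. This is exactly the Shamir-style fact that a single point of a random degree-one polynomial reveals nothing about its intercept. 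So the shares carry no information about $\stakesk_i$.

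The remaining leakage is $\stakeID_i=\hashid(\stakesk_i)$ and $\netpk$, derived via $\keygen(\sk_i)$. With $\hashstake$ and $\hashid$ modeled as random oracles, recovering $\stakesk_i$ means either querying $\hashstake$ at $\sk_i$ or inverting $\hashid$ on a $\lambda$-bit-entropy input. Both succeed with probability at most $\mathrm{poly}(\lambda)/2^\lambda$, using the fact that the signature key pair does not leak $\sk_i$, by EUF-CMA security and key derivation through an independent path. Summing the hybrid losses gives $\negl(\lambda)$.

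The main obstacle is the step that rules out a second distinct commitment in round $r$ being attributed to $i$. An adversary could try to forge a second commitment record for $\netpk_i$, or to reuse $i$'s share under a different $\reqcommit$. I would handle this with the soundness of $\shareproof$ for $\Rshare$: any valid share under $\reqcommit'$ proves knowledge of an $\sk$ consistent with $\netpk_i$. A knowledge extractor would then yield $\sk_i$, or a forgery against the signature binding on the signed nonces and round. Either outcome contradicts EUF-CMA security, or contradicts the negligible probability of guessing $\sk_i$ established above.
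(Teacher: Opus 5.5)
Your proposal is correct. It closes the same leakage channels as the paper's proof ($\stakeID_i$, $\netpk$, and the per-round commitment record), using the same three assumptions (random oracles, one-wayness of key generation, zero-knowledge). The real difference is how the commitment record is handled.

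The paper argues by enumeration: the adversary must invert $\hashid$, invert $\keygen$, or extract the witness from $\shareproof$. It names only the pair $(\reqcommit,\shareproof)$ as the $\stakesk$-dependent part of a request. But $\reqcommit$ does not depend on $\stakesk$ at all. Meanwhile the share $\slashshare=\hashshare(\sk_i,\roundnum)\cdot\reqcommit+\stakesk_i$ is sent to every contacted peer as a public input of the $\Rshare$ statement, so zero-knowledge says nothing about it.

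Your hybrid supplies exactly this missing step. Once the proofs are simulated and the slope is replaced by a fresh uniform field element, the single point published per round is a one-time pad on $\stakesk_i$. Points from different rounds lie on independently sloped lines. This is where the hypothesis of at most $s\sqrt{n}$ requests is genuinely used; the paper only notes that the values are reused. You also account for $\stakeproof$ and for rounds other than $r$, which the paper's round-$r$ view omits. The paper's route buys brevity; yours gives an actual justification for the one published value that is linear in the secret.

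A few details need tightening.

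(i) The $Q/2^{\lambda}$ loss in your second hybrid is not an entropy bound. $\netpk$ is a public, deterministic function of $\sk_i$, so the chance that the adversary queries an oracle at $\sk_i$ must be bounded computationally. Such an adversary yields $\netsk$ and hence an EUF-CMA forgery. The reduction lacks $\sk_i$, so it can only run the adversary in the hybrid where the proofs are already simulated.

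(ii) The one-time-pad step needs $\hashshare$ modeled as a random oracle into $\mathbb{F}_q$, not into $\{0,1\}^{\lambda}$ as the paper types it. If $q$ is much larger than $2^{\lambda}$, both $a_r$ and $\stakesk$ are short. Lattice reduction on the two-dimensional relation $a_r\reqcommit+\stakesk\equiv\slashshare \pmod q$ can then exploit even a single share. Separately, $\reqcommit\neq 0$ is a preimage-resistance property, not collision resistance; this matters because the adversary influences the committed peer list.

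(iii) Your last paragraph is not the main obstacle, and it is not needed for the lemma as stated. A second valid share consistent with $\sk_i$ would already let the adversary interpolate $\stakesk_i$, which your hybrids rule out. An inconsistent share yields a wrong intercept whose hash does not match $\stakeID_i$, so $i$'s deposit is untouched. As written, that extraction argument would also require simulation-extractability, since the adversary has seen $i$'s own proof for a related statement. Finally, the signature on $\langle\nonce,\nonceoverlay,\roundnum\rangle$ does not cover the \commitrecord, so no signature forgery arises there.
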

\begin{proof}
  Let $\stakeID = \hashid(\stakesk)$ be the node's stake identifier, and
  $(\netsk,\netpk)$ its network keypair all derived from the secret key $\sk$.
  We first observe that an honest node that sends at most $s\sqrt{n}$ requests in round
  $r$ reuses the same exact commitment $C$,
  slash share $\slashshare$,
  and proof
  $\shareproof$ for all its requests
  (\alglines{alg:heartbeat}{line:hb-commit}{line:hb-commitrecord};
  see also \cref{sec:too-many-reqs}).

  Note that slashing requires learning the stake secret $\stakesk$ of the node.
  Hence, consider the adversary's view in round $r$. The values in the protocol
  messages that depend on $\stakesk$ are: (a) the stake identifier
  $\stakeID=\hashid(\stakesk)$, (b) the public key $\netpk$ derived from $\sk$,
  and (c) the commitment/proof pair $(C,\shareproof)$ included in the requests.
  Therefore, one of the following must occur with non-negligible probability for
  the adversary to slash node $i$:
  \begin{itemize}
      \item The adversary inverts the hash computation $\hashid$ on input
      $\stakeID$.
      \item The adversary derives the secret $\sk$ from $\netpk$ by inverting
      key-generation.
      \item The adversary extracts the witness from the zero-knowledge proof
      $\shareproof$ included in the requests.
  \end{itemize}
  As we assumed that hash functions are random oracles, key generation
  is one-way, and the proof system is zero-knowledge, each of these
  events occurs with at most negligible probability $\negl(\lambda)$,
  and the claim follows by a union bound.
\end{proof}

\begin{corollary}[Restatement of \cref{lem:slashing-completeness}]
  \stmtSlashingCompleteness
\end{corollary}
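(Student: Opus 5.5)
The plan is to reduce slashing completeness to a coverage event and then reuse \cref{lem:stake-secret-recovery}. Suppose node $i$ sends requests in round $r$ under two distinct commitments $\reqcommit_1 \neq \reqcommit_2$. By the pigeonhole argument in the proof of \cref{lem:stake-secret-recovery}, each commitment certifies at most $s\sqrt{n}$ recipients, so a violator has two nonempty recipient sets $R_1, R_2$. Each recipient, upon verifying the request, stores the corresponding $\commitrecord$ in the \peerrecord of $\netpk_i$.

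Once any single honest node holds both $\commitrecord_1$ and $\commitrecord_2$ for $(\netpk_i, \roundnum)$, \cref{lem:stake-secret-recovery} lets it interpolate the line $a_r x + \stakesk_i$ and recover $\stakesk_i$. It suffices, therefore, to lower-bound the probability that some honest node ends up with both records.

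\textbf{Step 1: seed both records at honest nodes.} The violator's recipients are staked peers. I would argue that each $R_X$ contains at least one honest node, with $\Omega(s\sqrt{n})$ honest recipients in the relevant regime. If a commitment were delivered only to adversarial nodes, the violation would be pointless for DoS, and any honest recipient suffices anyway. I will make this explicit by conditioning on each commitment reaching at least one honest node; this is the natural content of ``violating'' the quota against honest responders.

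\textbf{Step 2: propagation through the heartbeat in a $\gamma$-healthy network.} The commitment records travel inside the \peerrecord of $\netpk_i$. Consider any honest node $j$ whose slice contains $\netpk_i$. By $\gamma$-healthiness (\cref{def:network_states}), $j$ queries $s\sqrt{n}$ peers and receives records from a $\gamma$-fraction of its slice. This mirrors the mean-field visibility argument.

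Fix an honest holder $h_1$ of $\commitrecord_1$ and $h_2$ of $\commitrecord_2$. Each other honest node $j$ has $h_X$ in its $\ptable$ with probability about $s/\sqrt{n}$. It also has $\netpk_i$ in its slice with probability about $s/\sqrt{n}$. Over the $\Theta(n)$ honest nodes, the expected number of nodes $j$ that query both $h_1$ and $h_2$ and select $\netpk_i$ is of order $n \cdot (s/\sqrt{n})^3$. That is too small, so the argument must instead use the many recipients.

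With $|R_X| \ge s\sqrt{n}$ honest-ish recipients, a requester $j$ with $\netpk_i$ in its slice contacts each recipient with probability about $s/\sqrt{n}$. The number of recipients of commitment $X$ it contacts is therefore approximately Poisson with mean $s^2(1-\alpha)\gamma$. The probability it misses one of the two commitments is then at most $2e^{-\Omega(s^2)}$.

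A single such $j$ exists with high probability, because about $s\sqrt{n}$ honest nodes have $\netpk_i$ in their slice. The requester's own table deduplication (\algline{alg:receive_new_record}{line:rec-dup-detect}) then detects the two distinct commitments. The resulting bound is $1-\exp(-\Omega(s^2))$, and already for one such $j$.

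\textbf{Main obstacle.} The hard part is Step 2's independence and coverage assumptions. I must justify that an honest requester's contacted set intersects each $R_X$ independently of the adversary's choice of $R_1, R_2$, and that recipients actually forward $\netpk_i$'s record; here the seed-determined filtering helps, since honest responders must return it. The adversary chooses $R_X$ without knowing honest seeds, so I would use that to treat intersections as binomial. The Chernoff/Poisson tail then gives the $\exp(-\Omega(s^2))$ bound.
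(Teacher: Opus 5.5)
Your reduction is the right frame: you show that some honest node ends up holding both \commitrecord{s} for $(\netpk_i,\roundnum)$, then interpolate via \cref{lem:stake-secret-recovery}. But Step~2 has a genuine gap. The Poisson mean $s^2(1-\alpha)\gamma$ requires \emph{both} recipient sets to contain $\Theta(s\sqrt{n})$ honest nodes. The hypothesis, and your own Step~1 conditioning, only give $|R_X\cap\Honset|\ge 1$. The violator controls both the split across commitments and the choice of recipients. For example, it can send $s\sqrt{n}$ requests under $\reqcommit_1$, almost all to its own nodes, and a single request under $\reqcommit_2$, so that each commitment reaches exactly one honest peer. In that case a fixed requester $j$ reaches the lone honest recipient of a commitment with probability only about $s/\sqrt{n}$, so your per-$j$ miss bound $2e^{-\Omega(s^2)}$ fails. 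Your own estimate $n(s/\sqrt{n})^3=s^3/\sqrt{n}\to 0$ then shows that no one-hop argument from two single holders can succeed. Saying such a violation would be ``pointless for DoS'' does not discharge a worst-case statement.

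The missing idea, which is how the paper argues, is to let gossip amplify each version before asking for an intersection. Start from one honest holder per commitment. By the visibility analysis of \cref{sec:mf-visibility}, $\netpk_i$'s \peerrecord{} carrying $\reqcommit_1$ (respectively $\reqcommit_2$) spreads to about $\gamma s\sqrt{n}$ honest nodes. These holder sets are driven by honest nodes' seeds, not by the adversary's choice of $R_1,R_2$, which also removes your ``main obstacle'' about independence. Two random subsets of that size among the $(1-\alpha)n$ honest nodes are disjoint with probability at most $\bigl(1-\tfrac{\gamma s\sqrt{n}}{(1-\alpha)n}\bigr)^{\gamma s\sqrt{n}}\le\exp\bigl(-\tfrac{\gamma^2 s^2}{1-\alpha}\bigr)$. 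Any node in the intersection sees both commitments when it merges records and recovers $\stakesk_i$. One caution for your write-up: by the linearized recurrence, spreading from a single holder grows by roughly a factor $R_0=s^2(1-\alpha)$ per round. So when both commitments reach only a few honest peers, detection is a multi-round event, not a one-round one. Your one-round conclusion is justified only when both commitments reach $\Theta(s\sqrt{n})$ honest peers.
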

\begin{proof}
  Suppose node $i$ violates the slashing condition in round $r$ by sending
  requests to more than $s\sqrt{n}$ distinct peers. By
  \cref{lem:stake-secret-recovery}, there exist two requests sent by $i$ in
  round $r$ with distinct commitments $C_A \neq C_B$ whose contents suffice
  to recover $\stakesk_i$.

  It remains to show that some honest node observes both commitments. Since
  $i$ must distribute $s\sqrt{n}+1$ requests across its peers, by the pigeonhole
  principle at least two honest nodes $j_1, j_2$ receive requests from $i$
  with $j_1$ receiving a request under $C_A$ and $j_2$ under $C_B$ (or a
  single node receives both). Each honest node stores the commitment record
  $(\commitrecord, \roundnum)$ inside the peer record for $i$
  (\algline{alg:send_response}{line:resp-recv}). Because the
  network is healthy ($\gamma$-healthy), the peer record of $i$ propagates.
  By the mean-field analysis (\cref{sec:mean-field}),
  an honest node's record reaches a $\Theta(s/\sqrt{n})$-fraction of honest nodes
  each round.  After the record-merging step
  (\algline{alg:receive_new_record}{line:rec-merge}), any honest node that receives peer records
  containing both $C_A$ and $C_B$ for the same $(\netpk_i, \roundnum)$
  detects the duplicate
  (\algline{alg:receive_new_record}{line:rec-dup-detect})
  and constructs a \slashproof. Since each of the two
  commitments is stored by at least one honest node and gossip propagates
  records with probability $\Theta(s/\sqrt{n})$ per hop, after one round each
  version is held by at least $\gamma s\sqrt{n}$ honest nodes. As these are
  random subsets of the $(1-\alpha)n$ honest nodes, the probability that they
  are disjoint is at most
  $\exp\left(-\frac{\gamma^2}{1-\alpha}\,s^2\right)
  = \exp(-\Omega(s^2))$. Once a \slashproof is constructed, any node can recover $\stakesk_i$
  (\algline{alg:receive_new_record}{line:rec-recover})
  and call $\textsc{Slash}(\stakesk_i, \stakeID_i)$ on-chain
  (\algline{alg:smartcontract}{line:sc-slash}).
\end{proof}

\subsection{Privacy Proofs}

\begin{lemma}[Restatement of \cref{lem:stake-anon}]
  \stmtStakeAnon
\end{lemma}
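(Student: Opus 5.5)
I will prove stake anonymity by a standard hybrid argument. The approach is to replace, one at a time, every component of the challenged node's round-$r$ transcript that could depend on $\stakeID_b$ or $\stakesk_b$ by a simulation that is independent of $b$. I would first list what the transcript contains when the request limit of $s\sqrt{n}$ is respected. Each request carries the signed \netrecord $\langle \netpk_b, \stakeroot, \stakeproof, \langle\addr,\timestamp\rangle_\sig\rangle$, the signed seeds $(\nonce,\nonceoverlay)$, and a single $\commitrecord = \langle \reqcommit, \slashshare, \shareproof\rangle$. That commitment record is identical across all requests, as noted in the proof of \cref{lem:slashing-soundness}. Each request also carries an opening $\langle \ind,\indproof\rangle$.

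\emph{Hybrid 0} is the real game. The hybrids then proceed as follows.
\begin{enumerate}
\item \emph{Hybrid 1} replaces $\stakeproof$ and $\shareproof$ by outputs of the NIZK simulator on the same public statements. Indistinguishability follows from the zero-knowledge property; this is a reduction that embeds the simulator, with the random oracle programmed if needed.
\item \emph{Hybrid 2} replaces $\slashshare = \hashshare(\sk_b,r)\cdot\reqcommit + \stakesk_b$ by a uniform element of $\mathbb{F}_q$. Since $\hashshare$ is a random oracle queried on $\sk_b$, and $\sk_b$ has $\lambda$ bits of entropy hidden from $\mathcal{A}$, the slope $a_r=\hashshare(\sk_b,r)$ is uniform and unqueried except with probability $\negl(\lambda)$.
\end{enumerate}
The request limit is used in Hybrid 2. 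Because only one point on the line $a_r x+\stakesk_b$ is released, the single affine value is a one-time pad on $\stakesk_b$. By contrast, two distinct commitments would reveal $\stakesk_b$, as in \cref{lem:stake-secret-recovery}.

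In Hybrid 2 the remaining $b$-dependent material is $\netpk_b$, the signatures under $\netsk_b$, the seeds, the commitment and its openings, and $\stakeroot$. The seeds and recipient list are chosen independently of $\sk$. The root $\stakeroot$ is public and common to both challenge stakes. What remains to show is that the pair $(\netpk_b,\stakeID_0,\stakeID_1)$ hides $b$. Both $\netpk_b$ (via $\keygen(\sk_b)$) and $\stakeID_b=\hashid(\hashstake(\sk_b))$ are derived from $\sk_b$. I would model $\keygen$'s randomness as derived through the random oracle, or argue directly, that $\stakeID_b$ is a random-oracle output on a point $\hashstake(\sk_b)$ that $\mathcal{A}$ never queries except with negligible probability. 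Under this modelling, $\stakeID_b$ is independent of $\netpk_b$ given $\mathcal{A}$'s view. Hence the joint distribution is the same for $b=0$ and $b=1$ up to $\negl(\lambda)$, the advantage in the final hybrid is $\negl(\lambda)$, and summing the hybrid gaps gives $\stadv\le\negl(\lambda)$.

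The step I expect to be the main obstacle is this last decoupling of $\netpk$ from $\stakeID$. Both are deterministic functions of the same $\sk$, and $\keygen$ is not literally a random oracle. The cleanest fix I would try first is to argue in the ROM that $\stakeID$ is a fresh random value unless $\mathcal{A}$ queries $\hashid$ on $\stakesk_b$. I would then bound that query probability by a reduction that extracts $\stakesk_b$. The reduction cannot use \cref{lem:slashing-soundness}, whose statement is about an honest run with an honest adversary view, so it must be a direct reduction: answer $\hashstake$ and $\hashid$ queries lazily and note that guessing $\stakesk_b$ requires guessing the oracle output on $\sk_b$, hence guessing $\sk_b$, which happens with probability at most $q_H\cdot 2^{-\lambda}$.
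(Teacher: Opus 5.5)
Your proposal is correct and follows the same skeleton as the paper's sketch: simulate the proofs, use the per-round quota so that only one commitment record is released, and conclude that the view is independent of which $\stakeID$ backs $\netpk_b$. You justify the middle and final steps differently, though, and your versions are the more rigorous ones.

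\begin{itemize}
\item The paper simulates only $\shareproof$. It then cites \cref{lem:slashing-soundness} for the claim that the reused $(\reqcommit,\slashshare)$ ``do not leak any information'' about $\stakesk_b$. That lemma is only an unpredictability statement (the adversary cannot compute $\stakesk_b$), which does not give independence. Its proof also does not list $\slashshare$ among the $\stakesk$-dependent values.
\item Your Hybrid~2 one-time-pad argument supplies exactly this missing indistinguishability step and shows where the quota enters. With $a_r=\hashshare(\sk_b,r)$ fresh, the single value $a_r\cdot\reqcommit+\stakesk_b$ is uniform provided $\reqcommit\neq 0$ in the field.
\item You also simulate $\stakeproof$, which the sketch omits. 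Its witness contains $\stakeID_b$ and its position in the stake commitment, so this simulation is essential.
\item You turn the paper's final ``independent of $\stakeID$'' assertion into an explicit random-oracle decoupling.
\end{itemize}

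One step needs tightening. You bound the probability that the adversary ever queries an oracle at $\sk_b$ by $q_H\cdot 2^{-\lambda}$, and in Hybrid~2 you claim $\sk_b$ keeps $\lambda$ bits of entropy. Both hold only if $\keygen$ is itself modeled through the random oracle. Under the paper's stated primitives, $\netpk_b=\keygen(\sk_b)$ and signatures under $\netsk_b$ are in the view, so $\sk_b$ is only computationally hidden.

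Without that extra modeling, bound this bad event by a reduction to the one-wayness of $\keygen$, which follows from EUF-CMA: any query $x$ with $\keygen(x)=(\ast,\netpk_b)$ yields a signing key for $\netpk_b$. The reduction samples $a_r$, $\stakesk_b$ and $\stakeID_b$ itself, in an identical-until-bad argument. This is the same assumption the paper invokes in \cref{lem:slashing-soundness}.

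Your stated reason for avoiding that lemma is also not quite accurate. Its statement does cover an honest node within its quota, and it would bound $\hashid$-queries at $\stakesk_b$. A direct reduction is still the better choice, because the lemma's proof ignores the share.
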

\begin{proof}[Proof (sketch)]
  Consider the transcript of messages sent by node $i$ in round $r$. Observe
  that the zero-knowledge property allows us to define an indistinguishable
  game, except with probability $\negl(\lambda)$ to the real execution where the
  challenger simulates the proof $\shareproof$ without knowing the witness
  $\stakesk_i$. Moreover, since the node sends at most $s\sqrt{n}$ requests, the
  commitments $C$ and shares $\slashshare$ are identical across all requests
  (\alglines{alg:heartbeat}{line:hb-commit}{line:hb-commitrecord}),
  and they do not leak any information about $\stakesk_i$
  (\cref{lem:slashing-soundness}). Therefore, the adversary's view is
  independent of the stake identity $\stakeID_i$ of node $i$, and the claim
  follows.
\end{proof}

\begin{lemma}[Restatement of \cref{lem:conn-privacy}]
  \stmtConnPrivacy
\end{lemma}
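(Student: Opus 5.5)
We will show that the adversary's entire view in $\cgame$ has the same distribution when $b=0$ and when $b=1$; then every adversary, unbounded or not, has $\Pr[b'=b]=\tfrac12$, i.e.\ $\cadv=0$. The approach is to list every value that reaches the adversary under full-table transmission and check that none of them depends on $\nonceoverlay_b$ or on $P_b$. Those values are: the request messages node $i$ sends in round $r$, the responses it receives, and the contents of every gossip table $\ptable$, which the game reveals.

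\emph{Requests.} Following \cref{alg:heartbeat}, a request from node $i$ contains the signed round number and nonces, the \commitrecord with its opening $\langle \ind,\indproof\rangle$, and the node's \netrecord. Under full-table transmission the requester does not need to send its seeds, since filtering is done locally. So the only seed in the message is $\nonce$ (or nothing), never $\nonceoverlay$. The commitment \reqcommit covers the recipient set, and that set is the contents of $\ptable$; these were populated using $\nonce$, not $\nonceoverlay$. The share $\slashshare$, the proof $\shareproof$, and the stake proof $\stakeproof$ depend only on $\sk$, $\roundnum$, \reqcommit, and \stakeroot. We will argue that the challenger can sample all of these before it samples $b$, so their joint distribution is the same in both worlds. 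This step fixes the order of sampling and confirms that the overlay nonce is drawn independently of everything else.

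\emph{Responses and gossip tables.} Each responder returns its complete $\ptable$, and that table is a function of the responder's own state and of round $r$, not of the requester's seeds. The adversary also sees every $\ptable$, including node $i$'s. Node $i$'s $\ptable$ is produced by the $\nonce$-filter (\algline{alg:receive_new_record}{line:rec-insert-ptable}), while the $\nonceoverlay$-filter writes only to $\otable$ (\algline{alg:receive_new_record}{line:rec-insert-otable}). Responses are served only from $\ptable$ (\alglines{alg:send_response}{line:filter-start}{line:filter-end}). So entries of $\otable$ never leave node $i$.

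Combining these, we will define the transcript $\tau$ together with all gossip tables as a randomized function $F(\nonce,\sk,\text{network state},\text{coins})$ that does not take $\nonceoverlay$ or $P_b$ as an argument. Since $b$ affects only the local step that computes $\otable=P_b$ from $\nonceoverlay_b$, $\tau$ is independent of $b$, and the advantage is $0$.

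\emph{Main obstacle.} The hardest step is justifying that $\otable$ has no indirect effect on later outputs within round $r$. Two channels need checking. First, table-size eviction by highest $\PRNG$ score must act on each table separately. Second, the overlay connections drawn from $\otable$ with probability $\pconn$ must not be part of the round-$r$ transcript the game reveals. We will handle both by reading the definition: the game hands over the gossip-round transcript and $\ptable$ contents, and overlay connection establishment comes after this and is outside the game. The one thing to argue carefully is that the two tables share nothing, so no eviction or merge step in \cref{alg:receive_new_record} reads $\otable$ while writing $\ptable$.
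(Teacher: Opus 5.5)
Your route is the paper's route: show that nothing reaching the adversary is a function of $\nonceoverlay_b$, because requests carry no overlay seed, responders return full $\ptable$s, and $\otable$ is never served. The paper's proof simply asserts that $i$'s requests and the other nodes' gossip tables coincide for $b=0,1$.

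\textbf{The gap.} The step that fails is the one you single out as the main obstacle. It is not true that no merge step in \cref{alg:receive_new_record} reads $\otable$ while writing $\ptable$. At \algline{alg:receive_new_record}{line:rec-merge} the incoming record is merged with both $\ptable[\netpk]$ and $\otable[\netpk]$. The result is written to $\ptable[\netpk]$ whenever $\netpk$ is in the $\nonce$-slice or already in $\ptable$. It also feeds the duplicate check at \algline{alg:receive_new_record}{line:rec-dup-detect}, as the slashing-completeness argument requires. A positive outcome there both adds a \slashproof to \denylist, which $i$ includes in every response it serves, and submits an on-chain \textsc{Slash}. Your list of what the adversary sees leaves out the responses $i$ itself sends, and that is exactly where this surfaces.

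\textbf{A concrete leak.} Take a key in $P_0\setminus P_1$ that is outside $i$'s $\nonce$-slice and not in $\ptable$. Let two records for it, with distinct \reqcommit values for the same round, reach $i$ in separate messages; an adversarial node willing to be slashed can arrange this.
\begin{itemize}
\item For $b=0$, the first copy is kept in $\otable$, the second is merged with it, and $i$ publishes the \slashproof.
\item For $b=1$, the first copy is discarded and nothing is detected.
\end{itemize}
Since the adversary chooses both candidate nonces, it can pick such a key, so under the pseudocode as written the view does depend on $b$. For keys in both slices the two copies are written in lockstep, so the merge into $\ptable$ is harmless until the separate size-cap evictions in \cref{alg:handle_response} desynchronize them.

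\textbf{How to repair it.} The paper's proof does not examine this channel either. It implicitly treats building $\otable$ as purely local post-processing, which is how the game's ``selecting exactly $P_b$'' should be read. To make your argument go through, make that an explicit assumption: $\otable$ is obtained by filtering the received full tables with $\nonceoverlay$, and no output-producing step (merging, duplicate detection, \denylist, responses) ever consults it. Equivalently, change \cref{alg:receive_new_record} so that merging and the duplicate check use only $\ptable$. Then your function $F$ genuinely does not take $\nonceoverlay_b$ as an argument. Your enumeration, extended to $i$'s outgoing responses and \denylist, is complete, and you obtain $\cadv=0$ exactly. That is the paper's argument with its hidden assumption made explicit.

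Note also that both candidate nonces are chosen by the adversary, so ``the overlay nonce is drawn independently of everything else'' is not the relevant justification. What you need is precisely that nothing observable is a function of $\nonceoverlay_b$, which your $F$ formulation already expresses.
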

\begin{proof}
While transmitting full tables, the node that $i$ requests from cannot tell
which subset of entries $i$ will retain
(the private seed $\nonceoverlay$ controls insertion into $\otable$;
\algline{alg:receive_new_record}{line:rec-insert-otable})
via the messages that they exchange.
Consider the scenario in which $b=0$ and $b=1$. In both cases, the outgoing
requests from node $i$ are identical (they only request the full table of peers)
and the gossip tables of all other nodes remain identical as well.
Therefore, the entire view of $\mathcal{A}$ (the request transcript and all
tables except $i$'s newly constructed table) is identically distributed under
$b=0$ and $b=1$. Hence $\Pr[b'=b]=\frac{1}{2}$ and $\cadv = 0$.
\end{proof}

\section{Deferred Pseudocode}
\label{appendix:pseudocode}

This appendix collects the pseudocode listings deferred from \cref{sec:protocol}
for space.

\paragraph{Smart contract.}
\Cref{alg:smartcontract} details the on-chain staking contract
introduced in \cref{sec:stake-contract}.

\begin{breakablealgorithm}
\caption{\protocolname Smart Contract}
\label{alg:smartcontract}
\begin{algorithmic}[1]
    \Function{DepositAndStake}{\stakeID, funds, stakeOwner} \label{line:sc-deposit}
        \State \require $funds = 1$
        \State \require $owner[\stakeID] = \bot$
        \State \require $timeNow < epochEndTime - \stakefreeze$ \label{line:sc-freeze}
        \State add \stakeID to committed vector
        \State $deposits[\stakeID] \gets 1$
        \State $owner[\stakeID] \gets stakeOwner$
    \EndFunction

    \Statex \vspace{-4pt}

    \Statex \hspace{\algorithmicindent}\Comment{Request removal from staking}
    \Function{UnStake}{\stakeID} \label{line:sc-unstake}
        \State \require $deposits[\stakeID]$
        \State \require $sender = owner[\stakeID]$
        \State \require $timeNow < epochEndTime - \stakefreeze$
        \State $withdrawalTime[\stakeID] \gets currentEpoch$
        \State $deposits[\stakeID] \gets 0$
        \State remove \stakeID from committed vector
    \EndFunction

    \Statex \vspace{-4pt}

    \Statex \hspace{\algorithmicindent}\Comment{Withdraw un-staked funds}
    \Function{ClaimFunds}{$\stakeID$} \label{line:sc-claim}
        \State \require $sender = owner[\stakeID]$
        \State \require $0 < withdrawalTime[\stakeID] < currentEpoch$
        \State \require \stakewithdraw time passed since start of epoch \label{line:sc-withdraw-delay}
        \State $withdrawalTime[\stakeID] \gets 0$
        \State $owner[\stakeID] \gets \bot$
        \State send $StakeUnit \to sender$
    \EndFunction

    \Statex \vspace{-4pt}

    \Function{Slash}{$\stakesk,\stakeID$} \label{line:sc-slash}
        \State \require $\stakeID = \hashid(\stakesk)$
        \State \require $deposits[\stakeID]=1$
        \Statex \hspace{\algorithmicindent}\hspace{\algorithmicindent}$\lor\ withdrawalTime[\stakeID]>0$
        \State $deposits[\stakeID] \gets 0$
        \State remove \stakeID from committed vector
        \State $withdrawalTime[\stakeID] \gets 0$
        \State $owner[\stakeID] \gets \bot$
    \EndFunction

    \Statex \vspace{-4pt}

    \Function{GetProof}{$\stakeID$} \label{line:sc-getproof}
        \State \require $deposits[\stakeID]=1$
        \State \Return $\vecopen(\mathsf{aux}, \stakeID)$
    \EndFunction

    \Statex \vspace{-4pt}

    \Function{GetCommitment}{\ } \label{line:sc-getcommit}
        \State \Return the current \stakeroot
    \EndFunction
\end{algorithmic}
\end{breakablealgorithm}

\paragraph{Heartbeat and response.}
\Cref{alg:heartbeat} details the heartbeat procedure executed once per round,
and \cref{alg:send_response} describes how a node responds to incoming requests.

\begin{breakablealgorithm}
\caption{Heartbeat (executed once per round)}
\label{alg:heartbeat}
\begin{algorithmic}[1]
\Procedure{Heartbeat}{$\ptable, \roundnum$}

\Statex \hspace{\algorithmicindent}\Comment{Create a fresh \netrecord}
    \State $\timestamp \gets$ \Call{GetCurrentTime}{{}}
    \State $(\stakeroot, aux, self.\ind) \gets$ stake commitment
    \Statex \hspace{\algorithmicindent}\hspace{\algorithmicindent}from blockchain (\cref{alg:smartcontract})
    \State $\sig \gets \Sign(self.\netsk, \langle self.\addr, \timestamp \rangle)$
    \State $VecP \gets \vecopen(aux,self.\ind)$
    \State $\stakeproof \gets \Call{$\text{ZKP}_{\Rstake}$}{}$ \label{line:hb-stakeproof}
    \Statex \hspace{\algorithmicindent}\hspace{\algorithmicindent}$\bigl( (self.\netpk,\stakeroot),$
    \Statex \hspace{\algorithmicindent}\hspace{\algorithmicindent}$\phantom{\bigl(}(self.\sk, self.\stakeID, VecP, self.\ind) \bigr)$
    \State $\netrecord \gets \langle self.\netpk, \stakeroot,$ \label{line:hb-netrecord}
    \Statex \hspace{\algorithmicindent}\hspace{\algorithmicindent}$\stakeproof, \langle self.\addr, \timestamp \rangle_\sig \rangle$

    \Statex \vspace{-4pt}

    \Statex \hspace{\algorithmicindent}\Comment{Sample and sign slice seeds}
    \State $\nonce, \nonceoverlay \xleftarrow{\$} \{0,1\}^{\lambda}$ \label{line:hb-seeds}
    \State $\roundnum \gets \Call{GetCurrentRound}{{}}$
    \State $\sig \gets \Sign(\netsk, \langle \nonce, \nonceoverlay, \roundnum \rangle)$

    \Statex \vspace{-4pt}

    \Statex \hspace{\algorithmicindent}\Comment{Commit to peers and generate slashing share}
    \State $\reqcommit, ReqAux \gets$ \label{line:hb-commit}
    \Statex \hspace{\algorithmicindent}\hspace{\algorithmicindent}$\veccommit\bigl(\bigl[\ptable[\ind].\netpk \bigr]_{\ind=1}^{s\sqrt{n}}\bigr)$
    \State $\slashshare \gets \hashshare(\sk, \roundnum)\cdot \reqcommit + \stakesk$ \label{line:hb-share}
    \State $\shareproof \gets \Call{$\text{ZKP}_{\Rshare}$}{}$
    \Statex \hspace{\algorithmicindent}\hspace{\algorithmicindent}$\bigl( (\netpk,\reqcommit, \slashshare,\roundnum), (\sk,\stakesk) \bigr)$
    \State $\commitrecord \gets \langle \reqcommit, \slashshare, \shareproof \rangle$ \label{line:hb-commitrecord}

    \Statex \vspace{-4pt}

    \Statex \hspace{\algorithmicindent}\Comment{Send requests}
    \For{$\ind \in \{1 \ldots s\sqrt{n}\}$} \label{line:hb-send-loop}
        \State $\indproof \gets \vecopen(ReqAux,\ind)$
        \State $\request \gets \langle \langle \nonce, \nonceoverlay, \roundnum \rangle_\sig,$
        \Statex \hspace{\algorithmicindent}\hspace{\algorithmicindent}\hspace{\algorithmicindent}$\commitrecord, \langle \ind, \indproof \rangle, \netrecord \rangle$
        \State send $\request$ to $\ptable[\ind].\addr$ \label{line:hb-send}
    \EndFor
\EndProcedure
\end{algorithmic}
\end{breakablealgorithm}

The ZK relations \Rshare (well-formed shares) and \Rstake (proof-of-stake) are
defined formally in \cref{appendix:primitives}. Informally, \Rstake certifies
that a node's $\netpk$ is backed by stake in the current commitment, while
\Rshare binds each per-round share to the node's secret key, enabling slashing
upon equivocation (\cref{sec:too-many-reqs}).

\begin{breakablealgorithm}
\caption{Respond Upon Request}
\label{alg:send_response}
\begin{algorithmic}[1]

\Procedure{UponRequest}{\request}

    \Statex \hspace{\algorithmicindent}\Comment{Unpack request fields}
    \State $\langle \langle \nonce, \nonceoverlay, \roundnum \rangle_\sig, \commitrecord,$
    \Statex \hspace{\algorithmicindent}\hspace{\algorithmicindent}$\langle \ind, \indproof \rangle, \netrecord \rangle \gets \request$
    \State $\langle \netpk, \stakeroot, \stakeproof, \langle \addr, \timestamp \rangle_\sig \rangle \gets \netrecord$
    \State $\langle \reqcommit, \slashshare, \shareproof \rangle \gets \commitrecord$

    \Statex \vspace{-4pt}

    \Statex \hspace{\algorithmicindent}\Comment{Validate sender and store}
    \State $\peerrecord \gets \langle \netrecord, [(\commitrecord,\roundnum)]\rangle$
    \State \require \Call{ReceiveNewRecord}{$\peerrecord$} \label{line:resp-recv}

    \Statex \vspace{-4pt}

    \Statex \hspace{\algorithmicindent}\Comment{Verify freshness and authenticity}
    \State \require $\roundnum = \Call{GetCurrentRound}{{}}$
    \State \require $\CheckSig(\netpk, \langle \nonce, \nonceoverlay, \roundnum \rangle_\sig)$
    \State \require $\vecverify (\reqcommit, \ind, self.\netpk, {\indproof})$ \label{line:resp-verify-opening}
    \State \require $\ind \in \{1 \ldots s\sqrt{n}\}$
    \State \require \Call{$\text{ZKVerify}_{\Rshare}$}{} \label{line:resp-verify-share}
    \Statex \hspace{\algorithmicindent}\hspace{\algorithmicindent}${(\reqcommit, \slashshare, \netpk, \roundnum),\shareproof}$

    \Statex \vspace{-4pt}
    \Statex \hspace{\algorithmicindent}\Comment{Rate limiting}
    \State \require $\netpk \notin RecentReqs[\roundnum]$ \label{line:resp-rate-limit}
    \State \require $\netpk \notin \denylist$ \label{line:resp-denylist}
    \State $RecentReqs[\roundnum] \gets RecentReqs[\roundnum] \cup \{\netpk\}$

    \Statex \vspace{-4pt}

    \Statex \hspace{\algorithmicindent}\Comment{Respond with selected peers}
    \State $Peers \gets \{rec \in \ptable : \PRNG[\nonce][rec.\netpk]<\frac{s}{\sqrt{n}}\}$ \label{line:filter-start}
    \State $Peers \gets Peers \cup \{rec \in \ptable : \PRNG[\nonceoverlay][rec.\netpk]<\frac{s}{\sqrt{n}}\}$ \label{line:filter-end}
    \State {\bf send} $\langle Peers, \denylist \rangle$ to \addr
\EndProcedure

\end{algorithmic}
\end{breakablealgorithm}

\paragraph{Protocol messages and records.}
\Cref{fig:protocol-messages-and-records} summarizes the data structures
and message formats used throughout the protocol.

\begin{figure*}[!ht]
\fbox{%
  \begin{minipage}{0.95\textwidth}
  \small
  \renewcommand{\arraystretch}{1.4}
  \begin{tabular}{@{}p{0.47\textwidth}@{\hspace{8pt}}p{0.46\textwidth}@{}}
    \multicolumn{2}{@{}l}{\textbf{Records}} \\[2pt]
    $\netrecord = \langle \netpk, \stakeroot, \stakeproof, \langle \addr, \timestamp\rangle_\sig \rangle$
      & Network identity with proof of stake and a signed address. \\
    $\peerrecord = \langle \netrecord, [(\commitrecord_i,\roundnum_i)]_{i=1}^m \rangle$
      & A peer's identity together with its recent per-round commitments. \\
    $\commitrecord = \langle \reqcommit, \slashshare, \shareproof \rangle$
      & Per-round commitment binding the set of requests to a slashing share (with ZK proof of correctness). \\[4pt]
    \multicolumn{2}{@{}l}{\textbf{Messages}} \\[2pt]
    $\request = \langle \langle \nonce,\nonceoverlay, \roundnum \rangle_\sig, \commitrecord, \langle \ind, \indproof \rangle, \netrecord \rangle$
      & Signed seeds for slice selection, the sender's commitment to all its requests, proof of inclusion at position $\ind$, and sender's identity. \\
    $\response = \langle[\peerrecord_i]_{i=1}^k, [\slashproof_j]_{j=1}^l \rangle$
      & Heartbeat response: a list of peer records selected by the requester's seeds, plus any slashing proofs to deny access to recently slashed peers. \\[4pt]
    \multicolumn{2}{@{}l}{\textbf{Accountability}} \\[2pt]
    $\slashproof = \langle \netpk, \stakeroot, \roundnum,$
    $\commitrecord_1, \commitrecord_2 \rangle$
      & Evidence of equivocation: two distinct commitments by the same peer in a single round. \\
  \end{tabular}

  \vspace{4pt}
  \footnotesize{Notation: $[\cdot]$ denotes a list; $\langle\cdot\rangle_\sig$ marks fields covered by a signature.}
  \end{minipage}
}
\caption{\label{fig:protocol-messages-and-records}Messages and records
used by \protocolname.}
\end{figure*}

\paragraph{Record validation.}
\Cref{alg:receive_new_record} validates incoming peer records, merges them with
existing entries, and detects double-commitment violations that trigger slashing.

\begin{breakablealgorithm}
\caption{Receive New Record}
\label{alg:receive_new_record}
\begin{algorithmic}[1]

\Procedure{ReceiveNewRecord}{$\peerrecord$}
    \Statex \hspace{\algorithmicindent}\Comment{Validate and store; returns $\mathit{True}$ on success}
    \Statex \hspace{\algorithmicindent}\Comment{Unpack and validate}
    \State $\langle \netrecord, [(CR_i,\roundnum_i)]_{i=1}^m \rangle \gets \peerrecord$
    \State $\langle \netpk, \stakeroot, \stakeproof, \langle \addr, \timestamp\rangle_\sig \rangle \gets \netrecord$
    \State \require $\CheckSig(\netpk, \langle \addr, \timestamp \rangle_\sig)$
    \State \require \Call{$\text{ZKVerify}_{\Rstake}$}{$(\netpk, \stakeroot),\stakeproof$} \label{line:rec-verify-stake}
    \State \require $\stakeroot \in \mathit{AccComs}$ \Comment{set by \cref{alg:epoch_transition}} \label{line:rec-stakeroot}
    \State \require $\Call{GetCurrentTime}{} - \timestamp \leq \recordexp$ \label{line:rec-expiry}

    \Statex \vspace{-4pt}

    \Statex \hspace{\algorithmicindent}\Comment{Merge with pre-existing records}
    \If{$\netpk \in \ptable \lor \netpk \in \otable$} \label{line:rec-merge}
        \State $\peerrecord \gets \peerrecord \uplus \ptable[\netpk] \uplus \otable[\netpk]$
        \Statex \hspace{\algorithmicindent}\hspace{\algorithmicindent}\Comment{keep newest \netrecord; union $[(CR_i,\roundnum_i)]$}%
    \EndIf

    \Statex \vspace{-4pt}

    \Statex \hspace{\algorithmicindent}\Comment{Detect two distinct commitments in same round}
    \State $\mathit{dup} \gets \exists i \neq j:\ \roundnum_i = \roundnum_j$%
    \label{line:rec-dup-detect}
    \Statex \hspace{\algorithmicindent}\hspace{\algorithmicindent}$\land\ CR_i.\reqcommit \neq CR_j.\reqcommit$
    \If{$\mathit{dup}$}
        \State $\slashproof \gets \langle \netpk, \stakeroot, \roundnum_i, CR_i, CR_j \rangle$
        \Statex \hspace{\algorithmicindent}\hspace{\algorithmicindent}\Comment{Recover \stakesk and slash on-chain}
        \State $a \gets \dfrac{CR_i.\slashshare - CR_j.\slashshare}%
            {CR_i.\reqcommit - CR_j.\reqcommit}$
        \State $\stakesk \gets CR_i.\slashshare - a \cdot CR_i.\reqcommit$ \label{line:rec-recover}
        \State $\stakeID \gets \hashid(\stakesk)$
        \State submit $\textsc{Slash}(\stakesk, \stakeID)$ on-chain \label{line:rec-submit-slash}
        \Statex \hspace{\algorithmicindent}\hspace{\algorithmicindent}(\cref{alg:smartcontract})
        \State $\denylist \gets \denylist \cup \{\slashproof\}$
    \EndIf

    \Statex \vspace{-4pt}

    \Statex \hspace{\algorithmicindent}\Comment{Insert into tables}
    \If{$\PRNG[\nonce][\netpk] < \frac{s}{\sqrt{n}} \lor \netpk \in \ptable$} \label{line:rec-insert-ptable}
        \State $\ptable[\netpk] \gets \peerrecord$
    \EndIf
    \If{$\PRNG[\nonceoverlay][\netpk] < \frac{s}{\sqrt{n}} \lor \netpk \in \otable$} \label{line:rec-insert-otable}
        \State $\otable[\netpk] \gets \peerrecord$
    \EndIf
    \State \Return $\mathit{True}$
\EndProcedure

\end{algorithmic}
\end{breakablealgorithm}

\paragraph{Response handling.}
\Cref{alg:handle_response} processes heartbeat responses, and
\cref{alg:verify_slash} verifies slash proofs carried in those responses.

\begin{breakablealgorithm}
\caption{Handle response}
\label{alg:handle_response}
\begin{algorithmic}[1]

\Procedure{UponResponse}{\response}

    \Statex \hspace{\algorithmicindent}\Comment{Process and verify slash proofs}
    \State $\langle Recs, SlashProofs \rangle \gets \response$
    \For{$sp \in SlashProofs$}
        \State \require \Call{VerifySlashProof}{$sp$} (\cref{alg:verify_slash}) \label{line:handle-verify-slash}
    \EndFor
    \State $\denylist \gets \denylist \cup SlashProofs$

    \Statex \vspace{-4pt}

    \Statex \hspace{\algorithmicindent}\Comment{Validate and store received records}
    \For{$record \in Recs$}
        \State \Call{ReceiveNewRecord}{record} (\cref{alg:receive_new_record})
    \EndFor

    \Statex \vspace{-4pt}

    \Statex \hspace{\algorithmicindent}\Comment{Evict expired records}
    \State $\ptable \gets \{rec \in \ptable : \Call{GetCurrentTime}{} - rec.\timestamp \leq \recordexp\}$ \label{line:handle-evict-expired}
    \State $\otable \gets \{rec \in \otable : \Call{GetCurrentTime}{} - rec.\timestamp \leq \recordexp\}$

    \Statex \vspace{-4pt}

    \Statex \hspace{\algorithmicindent}\Comment{Evict excess entries}
    \While{$|\ptable|>(1+\epsilon) \cdot s\sqrt{n}$}
        \State $evictRec \gets \displaystyle \argmax_{\netpk\in \ptable}\PRNG[\nonce][\netpk]$
        \State $\ptable \gets \ptable \setminus \ptable[evictRec]$
    \EndWhile
    \While{$|\otable|>(1+\epsilon) \cdot s\sqrt{n}$}
        \State $evictRec \gets \displaystyle \argmax_{\netpk\in \otable}\PRNG[\nonceoverlay][\netpk]$
        \State $\otable \gets \otable \setminus \otable[evictRec]$
    \EndWhile

\EndProcedure
\end{algorithmic}
\end{breakablealgorithm}

\begin{breakablealgorithm}
\caption{Verify Slash Proof}
\label{alg:verify_slash}
\begin{algorithmic}[1]
\Function{VerifySlashProof}{$\slashproof$}
    \Statex \hspace{\algorithmicindent}\Comment{Unpack and verify both commit records}
    \State $\langle \netpk, \stakeroot, \roundnum, \commitrecord_1, \commitrecord_2 \rangle \gets \slashproof$
    \State $\langle \reqcommit_1, \slashshare_1, \pi_{shr,1} \rangle \gets \commitrecord_1$
    \State $\langle \reqcommit_2, \slashshare_2, \pi_{shr,2} \rangle \gets \commitrecord_2$

    \Statex \vspace{-4pt}

    \State \require $\stakeroot \in \mathit{AccComs}$
    \State \require $\reqcommit_1 \neq \reqcommit_2$ \label{line:vs-distinct}
    \State \require \Call{$\text{ZKVerify}_{\Rshare}$}{}
    \Statex \hspace{\algorithmicindent}\hspace{\algorithmicindent}${(\reqcommit_1, \slashshare_1, \netpk, \roundnum), \pi_{shr,1}}$
    \State \require \Call{$\text{ZKVerify}_{\Rshare}$}{}
    \Statex \hspace{\algorithmicindent}\hspace{\algorithmicindent}${(\reqcommit_2, \slashshare_2, \netpk, \roundnum), \pi_{shr,2}}$
    \State \Return $\mathit{True}$
\EndFunction
\end{algorithmic}
\end{breakablealgorithm}

\paragraph{Epoch transitions.}
\Cref{alg:epoch_transition} is executed at the start of each new epoch
to update the accepted commitment set and refresh the node's stake proof.

\begin{breakablealgorithm}
\caption{Epoch Transition (triggered at the start of each new epoch)}
\label{alg:epoch_transition}
\begin{algorithmic}[1]
\Procedure{EpochTransition}{}

    \Statex \hspace{\algorithmicindent}\Comment{Fetch new commitment from blockchain}
    \State $(\stakeroot_{\mathit{new}}, \mathit{aux}, self.\ind) \gets$ new epoch
    \Statex \hspace{\algorithmicindent}\hspace{\algorithmicindent}commitment from blockchain
    \State $\mathit{AccComs} \gets$ last $d$ epoch commitments \label{line:et-acccoms}
    \Statex \hspace{\algorithmicindent}\hspace{\algorithmicindent}incl.\ $\stakeroot_{\mathit{new}}$

    \Statex \vspace{-4pt}

    \Statex \hspace{\algorithmicindent}\Comment{Regenerate own stake proof for new epoch}
    \State $VecP \gets \vecopen(\mathit{aux}, self.\ind)$
    \State $self.\stakeproof \gets \Call{$\text{ZKP}_{\Rstake}$}{}$
    \Statex \hspace{\algorithmicindent}\hspace{\algorithmicindent}$\bigl( (self.\netpk, \stakeroot_{\mathit{new}}),$
    \Statex \hspace{\algorithmicindent}\hspace{\algorithmicindent}$\phantom{\bigl(}(self.\sk, self.\stakeID, VecP, self.\ind) \bigr)$
    \State $self.\stakeroot \gets \stakeroot_{\mathit{new}}$

    \Statex \vspace{-4pt}

    \Statex \hspace{\algorithmicindent}\Comment{Purge denylist entries whose epoch left $\mathit{AccComs}$}
    \State $\denylist \gets \{sp \in \denylist : sp.\stakeroot \in \mathit{AccComs}\}$ \label{line:et-purge-deny}

    \Statex \vspace{-4pt}

    \Statex \hspace{\algorithmicindent}\Comment{Clear stale rate-limiting state}
    \State Clear $\mathit{RecentReqs}$ for rounds in previous epochs

\EndProcedure
\end{algorithmic}
\end{breakablealgorithm}

\paragraph{Seed privacy via TEE (details).}
For clarity, the pseudocode in \cref{alg:heartbeat,alg:send_response} shows the
seeds $(\nonce,\nonceoverlay)$ sent to the responder in the clear. In practice,
the response computation
(\alglines{alg:send_response}{line:filter-start}{line:filter-end}) runs inside a
trusted execution environment (e.g., an SGX or TDX enclave) on the responder
side: the seeds are decrypted only within the enclave and never exposed to the
responder's application logic. The enclave returns its response encrypted and
padded to a fixed length, so the responder learns neither the seeds nor which
records matched. This also prevents the responder from probing the enclave by
replaying the computation against different peer tables, thereby preserving the
privacy of the requester's slice selection without altering protocol logic.
Without TEEs, deployments may choose between two extremes: sending seeds in the
clear retains the communication saving (responses of size $s^2$) but reveals the
requester's slice, while omitting seeds from the request preserves privacy at
the cost of full $s\sqrt{n}$-sized responses.

\paragraph{Slashing derivation.}
\label{sec:slashing-derivation}
The share value (\algline{alg:heartbeat}{line:hb-share}) is
\[
\slashshare \;=\; \hashshare(\sk,\roundnum)\cdot \reqcommit \;+\; \stakesk
\qquad\text{over a field } \mathbb{F},
\]
together with a ZK proof of well-formedness and consistency with the
requester's $\netpk$.
Let $a=\hashshare(\sk,\allowbreak\roundnum)\in\mathbb{F}$.
If a node sends two requests in the same round under distinct
commitments $\reqcommit_1\neq \reqcommit_2$, observers obtain two
valid pairs $(\reqcommit_1,\allowbreak\slashshare_1)$ and
$(\reqcommit_2,\allowbreak\slashshare_2)$ satisfying:
\begin{align*}
\slashshare_1 &= a\cdot \reqcommit_1 + \stakesk,\\
\slashshare_2 &= a\cdot \reqcommit_2 + \stakesk.
\end{align*}
Subtracting yields
$\slashshare_1-\slashshare_2 = a\cdot(\reqcommit_1-\reqcommit_2)$.
Since $\reqcommit_1\neq \reqcommit_2$, the difference
$(\reqcommit_1-\reqcommit_2)$ is invertible with overwhelming probability under
the encoding used for commitments.\footnote{E.g., if $\reqcommit$ is a hash
interpreted as a field element, collisions are negligible.} One therefore recovers
$a = (\slashshare_1-\slashshare_2)/(\reqcommit_1-\reqcommit_2)$
and then
$\stakesk = \slashshare_1 - a\cdot \reqcommit_1$.
Anyone can then compute $\stakeID=\hashid(\stakesk)$ and call
$\textsc{Slash}(\stakesk,\stakeID)$ on-chain
(\algline{alg:smartcontract}{line:sc-slash}). Hence, \emph{a single detected
double-commit in one round} suffices to reveal the slashing pre-image and burn
the offender's stake.

\section{Prototype Implementation Details}
\label{appendix:prototype}

\subsection{Smart contract}
\label{sec:smart_contract}
We implement the staking contract (\cref{alg:smartcontract}) in Solidity,
supporting three operations: \textbf{deposit} (stake funds against a $\stakeID$),
\textbf{withdraw} (reclaim funds after a freeze period), and
\textbf{slash} (destroy stake by presenting a recovered $\stakesk$).
Any participant can read the current $\stakeroot$ from the contract for use in
zero-knowledge proofs.

\subsection{Zero-knowledge proofs}
\label{sec:zkp}
We implement \stakeproof and \shareproof as Circom~\cite{circom} circuits
using Baby Jubjub keys (chosen to optimize circuit constraint cost) and
Poseidon~\cite{poseidon} hashes to derive \stakesk and \stakeID from \sk.
We use Groth16~\cite{groth2016size} via snarkjs~\cite{snarkjs} for
its computational efficiency and proof succinctness.

\subsection{P2P client}
We build our prototype into the Ethereum ecosystem, which provides a
realistic deployment environment: permissionless P2P networking, native
staking, a capable VM for our contract, and a modular libp2p stack.

\textit{Client architecture:}
We fork Prysm~\cite{prysm}, a Go Ethereum consensus client, and register
\protocolname as an additional libp2p protocol.
The heartbeat (\cref{alg:heartbeat}) runs as a periodic task within
Prysm's \texttt{Sync} component; each round spawns Go routines that
distribute \request messages randomly across the round window to smooth load.
The client integrates \texttt{rapidsnark}~\cite{go-rapidsnark} for ZKP
operations and communicates with the staking contract via JSON-RPC
using \texttt{abigen}-generated bindings~\cite{geth}.
Prysm's native peer-discovery is replaced by \protocolname.

\textit{Networking layer:}
We extend libp2p with Baby Jubjub key support so that the
$\netsk$--$\netpk$ pair can be used directly for networking.
The prototype retains secp256k1 keys alongside Baby Jubjub for
interoperability with unmodified Ethereum clients.

\textit{Protocol messages:}
\protocolname messages use Protocol Buffers with SSZ encoding and Snappy
compression, matching Prysm's existing serialization pipeline.

\subsection{Detailed experimental results}

\textit{Setup details:}
Experiments run on Linux kernel~5.15.
The staking contract is preloaded into the genesis block and wallets are
prefunded so that nodes can deposit stake at initialization;
client connectivity is bootstrapped via a script that exchanges
$\netrecord$ objects among nodes (in a real deployment, nodes would use
conventional mechanisms such as hardcoded bootstrap peers).
Clients are kept idle (no injected transactions) to isolate protocol
overhead.
We configure a stake proof Merkle tree depth of~32, use a single gossip
table ($T_{\mathit{gsp}}$), and do not employ trusted execution
environments for private record retrieval, measuring the raw cost of the
protocol mechanisms.
To avoid underestimating protocol costs from caching effects, we
benchmark individual functions and estimate conservative execution
times based on the expected number of invocations per round.

\textit{Instrumentation:}
We collect per-container runtime metrics using \texttt{cAdvisor},
aggregate them with Prometheus, and visualize through Grafana dashboards.
For fine-grained CPU breakdowns, we profile the client using Go's
\texttt{pprof} tool.

\textit{Per-round CPU utilization:}
\Cref{fig:cpu_period} shows per-round CPU utilization.
Usage peaks at round start (proof generation, commitment preparation)
then stabilizes as nodes process incoming requests and responses.
Compared to an idle Ethereum client, the additional overhead remains modest.

\begin{figure}
    \centering
    \begin{tikzpicture}
    \begin{axis}[
        awplot,
        width=0.95\linewidth,
        height=\awplotheight,
        xlabel={Time in AetherWeave Round (sec)},
        ylabel={Average CPU utilization (\%)},
        xmin=0, xmax=120,
        ymin=10,
        grid=major,
        legend columns=3,
        legend style={at={(0.5,0.02)}, anchor=south, fill=white, fill opacity=0.8, text opacity=1}
    ]

    \addplot[plotA, thick, name path=base_mean] table [x=seconds, y=periodic_mean, col sep=comma] {prototype/monitoring/cpu_period_baseline.csv};
    \addplot[draw=none, forget plot, name path=base_low] table [x=seconds, y=periodic_ci_low, col sep=comma] {prototype/monitoring/cpu_period_baseline.csv};
    \addplot[draw=none, forget plot, name path=base_high] table [x=seconds, y=periodic_ci_high, col sep=comma] {prototype/monitoring/cpu_period_baseline.csv};
    \addplot[plotA!20, opacity=0.5, forget plot] fill between[of=base_low and base_high];
    \addlegendentry{Baseline}

    \addplot[plotB, thick, name path=m100] table [x=seconds, y=periodic_mean, col sep=comma] {prototype/monitoring/cpu_period_n100.csv};
    \addplot[draw=none, forget plot, name path=l100] table [x=seconds, y=periodic_ci_low, col sep=comma] {prototype/monitoring/cpu_period_n100.csv};
    \addplot[draw=none, forget plot, name path=h100] table [x=seconds, y=periodic_ci_high, col sep=comma] {prototype/monitoring/cpu_period_n100.csv};
    \addplot[plotB!20, opacity=0.5, forget plot] fill between[of=l100 and h100];
    \addlegendentry{n=100}

    \addplot[plotC, thick, name path=m225] table [x=seconds, y=periodic_mean, col sep=comma] {prototype/monitoring/cpu_period_n225.csv};
    \addplot[draw=none, forget plot, name path=l225] table [x=seconds, y=periodic_ci_low, col sep=comma] {prototype/monitoring/cpu_period_n225.csv};
    \addplot[draw=none, forget plot, name path=h225] table [x=seconds, y=periodic_ci_high, col sep=comma] {prototype/monitoring/cpu_period_n225.csv};
    \addplot[plotC!20, opacity=0.5, forget plot] fill between[of=l225 and h225];
    \addlegendentry{n=225}

    \end{axis}
    \end{tikzpicture}
    \caption{\protocolname prepares to make requests (proof generation, share calculation, etc.) at the beginning of the round leading to slightly higher mean CPU utilization. Increased CPU load over the baseline is attributed to the node processing Requests and Responses (proof checks, record scoring, etc.).}
    \label{fig:cpu_period}
\end{figure}
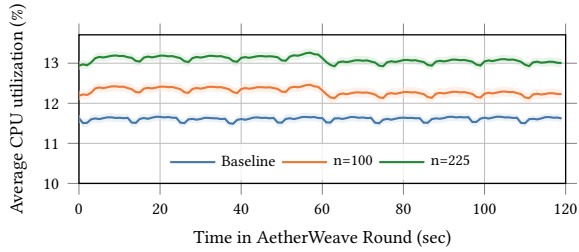

\textit{Per-round network traffic:}
\Cref{fig:net_rx_period} shows the network receive rate during a round
(transmit patterns are nearly identical).
Since requests are spread uniformly, traffic closely follows the baseline
with a constant offset.
\Cref{fig:net_rx_scaling} confirms that network overhead grows linearly
with~$\sqrt{n}$.

\begin{figure}
    \centering
    \begin{tikzpicture}
    \begin{axis}[
        awplot,
        width=0.95\linewidth,
        height=\awplotheight,
        xlabel={Time in AetherWeave Round (sec)},
        ylabel={Average RX Rate (kB/s)},
        xmin=0, xmax=120,
        ymin=270, ymax=325,
        grid=major,
        legend columns=3,
        legend style={at={(0.5,0.02)}, anchor=south, fill=white, fill opacity=0.8, text opacity=1}
    ]

    \addplot[plotA, thick, name path=base_mean] table [x=seconds, y=periodic_mean, col sep=comma] {prototype/monitoring/net_rx_period_baseline.csv};
    \addplot[draw=none, forget plot, name path=base_low] table [x=seconds, y=periodic_ci_low, col sep=comma] {prototype/monitoring/net_rx_period_baseline.csv};
    \addplot[draw=none, forget plot, name path=base_high] table [x=seconds, y=periodic_ci_high, col sep=comma] {prototype/monitoring/net_rx_period_baseline.csv};
    \addplot[plotA!20, opacity=0.5, forget plot] fill between[of=base_low and base_high];
    \addlegendentry{Baseline}

    \addplot[plotB, thick, name path=m100] table [x=seconds, y=periodic_mean, col sep=comma] {prototype/monitoring/net_rx_period_n100.csv};
    \addplot[draw=none, forget plot, name path=l100] table [x=seconds, y=periodic_ci_low, col sep=comma] {prototype/monitoring/net_rx_period_n100.csv};
    \addplot[draw=none, forget plot, name path=h100] table [x=seconds, y=periodic_ci_high, col sep=comma] {prototype/monitoring/net_rx_period_n100.csv};
    \addplot[plotB!20, opacity=0.5, forget plot] fill between[of=l100 and h100];
    \addlegendentry{n=100}

    \addplot[plotC, thick, name path=m225] table [x=seconds, y=periodic_mean, col sep=comma] {prototype/monitoring/net_rx_period_n225.csv};
    \addplot[draw=none, forget plot, name path=l225] table [x=seconds, y=periodic_ci_low, col sep=comma] {prototype/monitoring/net_rx_period_n225.csv};
    \addplot[draw=none, forget plot, name path=h225] table [x=seconds, y=periodic_ci_high, col sep=comma] {prototype/monitoring/net_rx_period_n225.csv};
    \addplot[plotC!20, opacity=0.5, forget plot] fill between[of=l225 and h225];
    \addlegendentry{n=225}

    \addplot[plotD, thick, name path=m400] table [x=seconds, y=periodic_mean, col sep=comma] {prototype/monitoring/net_rx_period_n400.csv};
    \addplot[draw=none, forget plot, name path=l400] table [x=seconds, y=periodic_ci_low, col sep=comma] {prototype/monitoring/net_rx_period_n400.csv};
    \addplot[draw=none, forget plot, name path=h400] table [x=seconds, y=periodic_ci_high, col sep=comma] {prototype/monitoring/net_rx_period_n400.csv};
    \addplot[plotD!20, opacity=0.5, forget plot] fill between[of=l400 and h400];
    \addlegendentry{n=400}

    \addplot[plotE, thick, name path=m625] table [x=seconds, y=periodic_mean, col sep=comma] {prototype/monitoring/net_rx_period_n625.csv};
    \addplot[draw=none, forget plot, name path=l625] table [x=seconds, y=periodic_ci_low, col sep=comma] {prototype/monitoring/net_rx_period_n625.csv};
    \addplot[draw=none, forget plot, name path=h625] table [x=seconds, y=periodic_ci_high, col sep=comma] {prototype/monitoring/net_rx_period_n625.csv};
    \addplot[plotE!20, opacity=0.5, forget plot] fill between[of=l625 and h625];
    \addlegendentry{n=625}

    \end{axis}
    \end{tikzpicture}
    \caption{AetherWeave's outbound network traffic pattern is similar to that of the baseline Ethereum client with an added offset because Requests are spread out across the round. The received traffic profile exhibits near identical patterns.}
    \label{fig:net_rx_period}
\end{figure}
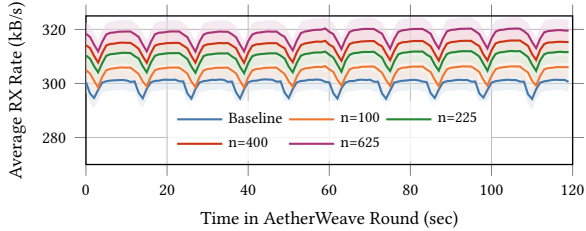

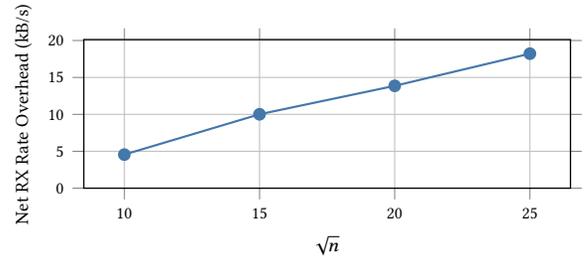
\begin{figure}
    \centering
    \begin{tikzpicture}
    \begin{axis}[
        awplot,
        width=0.95\linewidth,
        height=\awplotheight,
        xlabel={$\sqrt{n}$},
        ylabel={Net RX Rate Overhead (kB/s)},
        ymin=0,
        grid=major,
        legend pos=north west,
    ]

    \addplot[
        color=plotA,
        thick,
        mark=*,
        error bars/y dir=both,
        error bars/y explicit,
        error bars/error bar style={thick, color=blue}
    ] table [
        x=nodes_sqrt,
        y=mean_norm,
        y error=error,
        col sep=comma
    ] {prototype/monitoring/net_rx_scaling_data.csv};

    \end{axis}
    \end{tikzpicture}

    \caption{AetherWeave's network RX rate overhead scales linearly with $\sqrt{n}$. Error bars are included in the plot, but are negligible, and not visible.}
    \label{fig:net_rx_scaling}
\end{figure}

\section{Cryptographic Primitives}
\label{appendix:primitives}

\paragraph{Zero Knowledge Proofs.}
For any NP relation $\mathcal{R} \subseteq \mathcal{X} \times \mathcal{W}$,
given $x\in \mathcal{X},w \in \mathcal{W}$ that are public and private inputs to
the proof, we write: $\pi \gets \mathsf{ZKP}_{\mathcal{R}}(x; w)$ for a
non-interactive zero-knowledge proof that $(x,w) \in \mathcal{R}$, and
$\mathsf{ZKVerify}_{\mathcal{R}}(x, \pi) \in \{True, False\}$ for the
corresponding verifier.

\paragraph{Signature scheme.}
Let $(\keygen,\Sign,\CheckSig)$ be a digital signature scheme. For readability
we denote signed messages as $\langle m \rangle_\sig$.
\begin{itemize}
\item $\keygen$ is a deterministic, seeded key-generation algorithm.
\item Given a signing key $s$ and message $m$, $\Sign(s,m)$ outputs a signature
$\sig$.
\item Given a public key $pk$ and a signed message $\langle m \rangle_\sig$,\\
$\CheckSig(pk,\langle m \rangle_\sig)\in\{True, False\}$
verifies the signature.
\end{itemize}

We assume the signature scheme $(\keygen,\Sign,\CheckSig)$ is existentially
unforgeable under chosen-message attacks (EUF-CMA).

\paragraph{Hash Functions.}
Let $\lambda$ be the security parameter. Let $\hashstake$, $\hashid$,
$\hashshare$ be cryptographic hash functions of the form $H:\{0,1\}^*
\rightarrow \{0,1\}^{\lambda}$. We model these hash functions as random oracles.

\paragraph{Vector commitments.}
Let $(\mathsf{Commit},$ $\mathsf{Open},$ $\mathsf{Verify})$ be a
vector commitment scheme with implicit public parameters. In
particular, we assume access to a fixed collision-resistant hash
function, which is treated as a public parameter and omitted from the
syntax. Given a vector
$\vec{x} = (x_1,\ldots,x_n) \in \mathcal{X}^n$:
\begin{itemize}
    \item $(C, \mathsf{aux}) \gets \veccommit(\vec{x})$: commit to the vector
    $\vec{x}$
    \item $\pi \gets \vecopen(\mathsf{aux}, i)$: generate a proof for position
    $i$
    \item $\vecverify(C, i, x, \pi) \in \{True,False\}$: verify item $x$ is in
    position $i$.
\end{itemize}
The auxiliary information $\mathsf{aux}$ contains the information required to
open the commitment at any index (e.g., a Merkle tree built over $\vec{x}$). The
scheme satisfies the standard correctness and binding properties of vector
commitments. In particular, it is infeasible to open the same commitment to two
different values at the same index. In practice, our vector commitment schemes
are instantiated using Merkle trees.\footnote{While aggregating stake within the
smart contract could be implemented using a set commitment scheme (accumulator),
we instead use vector commitments to avoid introducing additional definitions
and notation. In other settings, such as commitments included in request
messages, standard accumulators are insufficient, since we require commitments
to sets of bounded size.}

\paragraph{ZK relations used by \protocolname.}
The following relations define the NP statements proved by the zero-knowledge
proofs in the protocol.

\begin{definition}[\Rshare  ; well-formed shares]
Given public inputs and witness
\[
x = (\reqcommit, \slashshare, \netpk, \roundnum),\quad
w = (\sk, \stakesk),
\]
we say $(x,w) \in \Rshare$ iff all of the following hold:

\begin{itemize}
  \item $\slashshare = \hashshare(\sk, \roundnum) \cdot \reqcommit + \stakesk$
  (when we interpret the elements in this equation as elements of field
  $\mathbb{F}$)
  \item $\stakesk = \hashstake(\sk)$
  \item $(*,\netpk) = \keygen(\sk)$
\end{itemize}
\end{definition}

\begin{definition}[\Rstake  ; proof-of-stake]
Given public inputs $x$ and witness $w$ where $$x = (\netpk,\stakeroot)$$ and
witness $$w = (\sk,\stakeID, VecP, i),$$ we say that $(x,w) \in \Rstake$ iff all
of the following hold:

\begin{itemize}
  \item $(\netsk, \netpk) = \keygen(sk)$
  \item $\stakeID = \hashid(\hashstake(\sk))$
  \item $\vecverify(\stakeroot, i, \stakeID, VecP)$ = 1
\end{itemize}
\end{definition}

\paragraph{Pseudorandom number generator (PRNG).}
A pseudorandom number generator is a deterministic function
$\PRNG: \{0,1\}^{\lambda} \times \{0,1\}^* \rightarrow [0,1)$
that, given a seed $\nonce \in \{0,1\}^{\lambda}$ and an input string $x$,
outputs a real number in $[0,1)$.

We require that for any probabilistic polynomial-time adversary,
the output of $\PRNG[\nonce][x]$ is computationally indistinguishable
from a uniformly random value in $[0,1)$, even given adaptive access
to $\PRNG[\nonce][\cdot]$ for the same seed.

\section{List of Symbols}
\label{appendix:symbols}

{\small
\renewcommand{\arraystretch}{1.1}
\topcaption{Summary of notation used throughout the paper.}
\label{tab:symbols}
\tablefirsthead{%
  \toprule
  \textbf{Symbol} & \textbf{Description} \\
  \midrule}
\tablehead{%
  \multicolumn{2}{@{}l}{\textit{\tablename\ \thetable{} -- continued}} \\
  \toprule
  \textbf{Symbol} & \textbf{Description} \\
  \midrule}
\tabletail{%
  \midrule
  \multicolumn{2}{r@{}}{\textit{continued on next page}} \\}
\tablelasttail{\bottomrule}
\begin{supertabular}{@{} p{0.35\columnwidth} p{0.58\columnwidth} @{}}
  \multicolumn{2}{@{}l}{\textbf{\textit{Keys \& Identifiers}}} \\
  \cmidrule(lr){1-2}
  \sk & Master secret key \\
  \netsk, \netpk & Network secret/public key \\
  \stakesk & Stake secret via $\hashstake(\sk)$ \\
  \stakeID & Stake identifier, $\hashid(\stakesk)$ \\
  \addr & Node network address \\[3pt]
  \multicolumn{2}{@{}l}{\textbf{\textit{Data Structures \& Tables}}} \\
  \cmidrule(lr){1-2}
  \netrecord & Network record \\
  \peerrecord & Peer record \\
  \request, \response & Heartbeat request/response \\
  \commitrecord & Commitment record \\
  \slashproof & Slashing evidence \\
  \ptable & Gossip table (global pool) \\
  \otable & Private overlay table \\
  \denylist & Deny list (slashed/quarantined keys) \\[3pt]
  \multicolumn{2}{@{}l}{\textbf{\textit{Cryptographic Primitives}}} \\
  \cmidrule(lr){1-2}
  \keygen & Key generation \\
  \Sign, \CheckSig & Signature create/verify \\
  \hashid, \hashstake, \hashshare & Hash functions \\
  \veccommit, \vecopen, \vecverify & Vector commitment ops \\
  \PRNG & Pseudorandom number generator \\
  \Rstake & ZK relation for proof-of-stake \\
  \Rshare & ZK relation for well-formed shares \\
  \stakeproof, \shareproof & ZK proofs (stake, shares) \\
  $\sig$ & Signature; $\langle\cdot\rangle_\sig$ denotes a signed payload \\[3pt]
  \multicolumn{2}{@{}l}{\textbf{\textit{Protocol Parameters}}} \\
  \cmidrule(lr){1-2}
  $n$ & Total number of nodes \\
  $s$ & Table scaling constant (table size is $s\sqrt{n}$) \\
  $\alpha$ & Adversary stake fraction \\
  $\theta$ & Detection threshold fraction \\
  \stakefreeze & Stake freeze period \\
  \stakewithdraw & Withdrawal delay \\
  \recordexp & Record expiration time \\
  \pconn & Overlay selection probability \\[3pt]
  \multicolumn{2}{@{}l}{\textbf{\textit{Protocol Variables}}} \\
  \cmidrule(lr){1-2}
  \nonce & Per-round gossip nonce \\
  \nonceoverlay & Per-round overlay nonce \\
  \reqcommit & Commitment to request recipients \\
  \slashshare & Slash share value \\
  \stakeroot & Epoch stake commitment \\
  \roundnum & Round number \\
  \indproof & Proof of inclusion at position \ind \\
  \timestamp & Record timestamp \\[3pt]
  \multicolumn{2}{@{}l}{\textbf{\textit{Mean-Field Analysis (\S5)}}} \\
  \cmidrule(lr){1-2}
  $q_r$ & Table quality at round $r$ \\
  $v_r$ & Node visibility at round $r$ \\
  $R_0 = s^2(1-\alpha)$ & Basic reproduction number \\[3pt]
  \multicolumn{2}{@{}l}{\textbf{\textit{Security Analysis (\S6)}}} \\
  \cmidrule(lr){1-2}
  $\Honset$, $\Advset$ & Honest / adversarial node sets \\
  $\gamma$ & Healthy network reachability fraction \\
  $\varphi$ & Critical partition fraction, $(1{+}\alpha)/2$ \\
  $\kappa$ & Overlay exponent parameter, $\kappa \in [0,1/2)$ \\
  $\delta$ & Overlay tolerance parameter \\
  $\varepsilon$ & Slack parameter, $(1{-}\alpha)/6$ \\
  $\mathcal{R}_i$ & Reachable set of node $i$ \\
  $\mathcal{B}$ & Blockchain \\
  $\mathsf{Game}^{\mathsf{stake\text{-}anon}}$ & Stake anonymity game \\
  $\mathsf{Adv}^{\mathsf{stake\text{-}anon}}$ & Stake anonymity advantage \\
  $\mathsf{Game}^{\mathsf{conn\text{-}priv}}$ & Connection privacy game \\
  $\mathsf{Adv}^{\mathsf{conn\text{-}priv}}$ & Connection privacy advantage \\
\end{supertabular}
}

\end{document}